\documentclass[12pnt,aps,pra,twocolumn,superscriptaddress]{revtex4-1}

\usepackage{ulem}
\usepackage{graphicx}  
\usepackage{dcolumn}   
\usepackage{bm}        
\usepackage{amssymb}
\usepackage{physics}   
\usepackage{mathtools}
\usepackage{esvect}
\usepackage{wrapfig}
 \usepackage[utf8]{inputenc}

\usepackage{amsthm}
\usepackage{ragged2e}
\usepackage{verbatim}
\usepackage[colorlinks=true,linkcolor=blue,citecolor=red,plainpages=false,pdfpagelabels]{hyperref}
\usepackage{color,xcolor,colortbl}
\usepackage{multirow}
\usepackage{boxedminipage}
\usepackage{changepage}
\usepackage{bbm}
\usepackage{lipsum}
\usepackage{array}
\usepackage{etoolbox}
\usepackage[caption=false]{subfig}
\usepackage{appendix}
\usepackage{float}

\usepackage[mathscr]{euscript}

%\robustify{\subref}

\def\Tr{\operatorname{Tr}}

\def\SEP{\operatorname{SEP}}

\def\supp{\operatorname{supp}}

\def\>{\rangle}
\def\<{\langle}

\def\id{\operatorname{id}}

\def\({\left(}
\def\){\right)}
\def\[{\left[}
\def\]{\right]}

\newcommand{\mc}[1]{\mathcal{#1}}

%\numberwithin{equation}{section}

\newtheorem{observation}{Observation}
\newtheorem{theorem}{Theorem}

\newtheorem{corollary}{Corollary}

\newtheorem{definition}{Definition}

\newtheorem{lemma}{Lemma}

\newtheorem{remark}{Remark}

%\newenvironment{proof}[1][Proof]{\noindent\textbf{#1.} }{\ \rule{0.5em}{0.5em}}

%\allowdisplaybreaks

\begin{document}

\widetext

%Fundamental limitations on device-independent quantum key distribution rates
\title{Upper bounds on device-independent quantum key distribution rates in static and dynamic scenarios}
\author{Eneet Kaur}\email{ e2kaur@uwaterloo.ca}
\affiliation{Institute for Quantum Computing and Department of Physics and Astronomy, University of Waterloo, Waterloo, Ontario N2L 3G1, Canada}
\author{Karol Horodecki}\email{karol.horodecki@ug.edu.pl}
\affiliation{National Quantum Information Centre in Gda\'{n}sk,
Faculty of Mathematics, Physics and Informatics, University of Gda\'{n}sk, 80–952 Gda\'{n}sk, Poland}
\affiliation{International Centre for Theory of Quantum Technologies, University of Gda\'nsk, Wita Stwosza 63, 80–308 Gdansk, Poland}
\affiliation{Institute of Informatics, University of Gda\'{n}sk}
\author{Siddhartha Das}\email{Siddhartha.Das@ulb.be}
\affiliation{Centre for Quantum Information \& Communication (QuIC), \'{E}cole polytechnique de Bruxelles,   Universit\'{e} libre de Bruxelles, Brussels, B-1050, Belgium}
\date{\today}

\begin{abstract}

In this work, we develop upper bounds for key rates for device-independent quantum key distribution (DI-QKD) protocols and devices. We study the reduced cc-squashed entanglement and show that it is a convex functional. As a result, we show that the convex hull of the currently known bounds is a  tighter upper bound on the device-independent key rates of standard CHSH-based protocol.
We further provide tighter bounds for DI-QKD key rates achievable by any protocol applied to the CHSH-based device.  This bound is based on reduced relative entropy of entanglement optimized over decompositions into local and non-local parts. In the dynamical scenario of quantum channels, we obtain upper bounds for device-independent private capacity for the CHSH based protocols. We show that the device-independent private capacity for the CHSH based protocols on depolarizing and erasure channels is limited by the secret key capacity of dephasing channels. 
\end{abstract}

\maketitle

The history of development of the quantum key distribution can be divided in
two stages. Security of the first protocols such as BB84~\cite{BB84} were based on the trust towards the manufacturer. The devices were assumed to be working according to their specification. The Eavesdropper was assumed only to interfere with the channel connecting the honest parties. In the second stage, taking its origins in Ekert's paper~\cite{E91} this assumption was dropped leading to the {\it device-independent} quantum cryptography. In parallel, the initial-- call it {\it device-dependent approach}-- was getting maturity. On practical side the point-to-point or relay-based QKD were achieved commercially and experimentally (see \cite{Horodecki_2021} and references therein). From a theoretical perspective, the limitations in form of upper bounds on the key rate were developed in various device-dependent scenarios~\cite{CW04,HHHO09,TGW14,Pirandola2017,WTB17} (also see~\cite{AH09,DBWH19}).

The device-independent (DI) approach got advanced meanwhile due to milestones both in theory \cite{ArnonFriedman2018} and experiment \cite{TMC+13,HBD+15}. Recently it was shown that some of the latter so called loophole-free experiments had small but non-zero key rate \cite{schwonnek2020robust}. It is therefore
time to ask if the key rate in this important cryptographic scenario needs to be that small. That is, to place upper bounds on the key rate in the device independent scenario.  Since seminal results of Ref.~\cite{AGM06} (in the case of the non-signaling adversary), till now, there is only a handful of papers tackling this problem \cite{KWW20,WDH19,CFH20,AL20,FBL+21,Eneetthesis} most of which concerns, as we do, the quantum adversary. In what follows we not only find tighter limitations, but also provide a unified view on the previous bounds exhibiting hidden connections. 

\begin{figure}
    \centering
    \includegraphics[width=\linewidth]{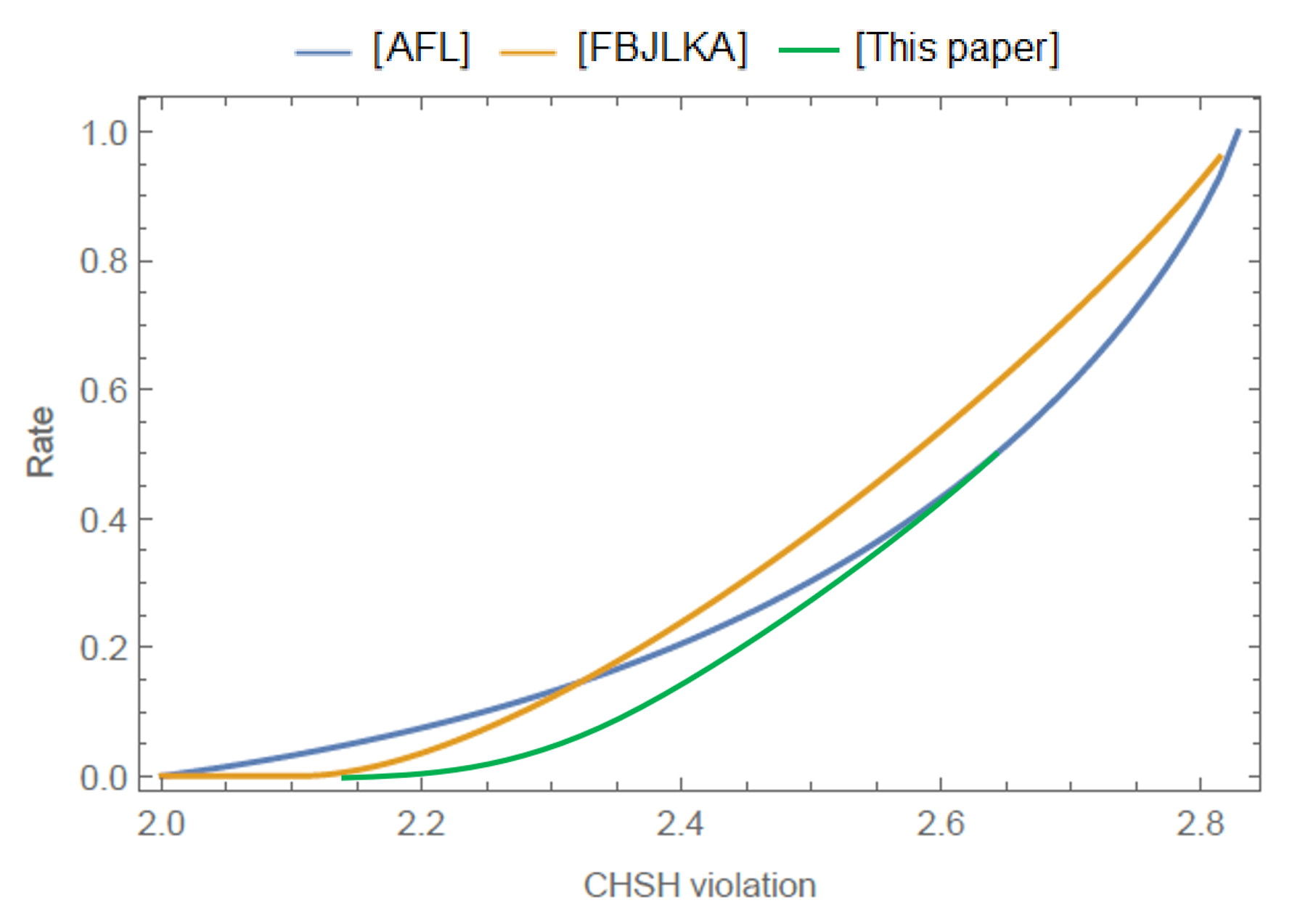}
    \caption{In this figure, we show the plots for standard device-independent CHSH protocol obtained in Refs.~\cite{AL20}, \cite{FBL+21}, and the upper bound given in Theorem~\ref{thm:convexification-m}, which is the convex hull of the former bounds, depicted in green.}
    \label{DI-QKD-Plots-m}
\end{figure}

After a seminal result of Ref.~\cite{KWW20} for the quantum adversary,
three approaches were taken: (i) that of Ref.~\cite{CFH20} where
the bound is proposed via reduced entanglement measures,
(ii) that of Ref.~\cite{AL20} where the intrinsic information is proposed as an upper bound via Ref.~\cite{CEHHOR} 
(iii) that of Ref.~\cite{FBL+21} where classical attack is proposed via the so called intrinsic information.
In Ref.~\cite{FBL+21}, a strong result was provided. Namely, certain  quantum states exhibiting non-locality, have {\it zero} quantum device-independent (QDI) key under standard protocols. The key rate considered there is obtained by protocols based on projective measurements and announcing publicly the inputs. It is easy to observe a direct analogy between this result and a previously obtained upper bound in a case when Eve is limited only by no-signaling communication~\cite{WDH19}. There the key achieved by a single measurement done in parallel is shown to be zero for certain quantum non-local devices (for any Bell inequality that can be used for testing). Noticing this connection will be crucial to our methods in going beyond FBJ\L{}KA bound.  

The most common DI-QKD protocols use only single measurement for key generation. In particular, its honest implementation is based on distributing the $2$-qubit Werner states (mixtures of a maximally entangled state with the maximally mixed state). The testing against the eavesdropper is based on the CHSH inequality \cite{CHSH}. The approaches of Refs.~\cite{AL20} and \cite{FBL+21} provide different upper bounds for such a protocol. AFL bound works in regime when the Werner state is close to the maximally entangled state, while the second works very well in the opposite regime-- when it is close to the maximally mixed state. This is because the first attack
is quantum (by a mixture of Bell states and tuned measurements) while the
second exploits errors, and works when such errors in the Werner state occurs. It was therefore not clear how to achieve a single bound which works in both regimes. In this work, we show that the two bounds are instances of the optimization of a single convex quantity. This allows us to obtain a bound that perform better than the above introduced bounds.

{\it Main results}.--- 
As the first main result, we study a bound
called {\it reduced cc-squashed entanglement}. We prove that the bound is convex, and outperforms
both the limitations presented in Refs.~\cite{AL20} and \cite{FBL+21} in certain regime of noise (e.g., see Figure~\ref{DI-QKD-Plots-m}). We then 
show that in the case when testing in DI-QKD protocol is done by estimating the CHSH inequality and the quantum bit error rate (QBER), the cc-squashed entanglement and its reduced version is shown to be a bound, but not studied, in Ref.~\cite{AL20}. We further argue, that the bounds studied in Refs.~\cite{AL20} and \cite{FBL+21} are in fact {\it particular instances of the optimisation} that takes place in computing of the reduced cc-squashed entanglement. 
More precisely,
we show that in the case of a single measurement the FBJ\L{}KA bound is an instance of the optimization of the reduced cc-squashed entanglement with classical Eve. We then extend the reduced cc-squashed entanglement to consider multiple measurements. We prove that the reduced cc-squashed entanglement is a lower bound to the function given in Ref.~\cite{FBL+21}. We note that for extension to multiple measurements, the function introduced is tuned to protocols in which the measurements are announced by Alice and Bob and hence are known to the eavesdropper. We could in principle, on similar grounds, also consider upper bounds for protocols in which the measurements are not known to the eavesdropper.

The upper bounds presented in the above work for limited class of protocols (single pair of inputs, separation between key and testing rounds, or distribution of inputs known in advance).
The only general upper bounds for the DI-QKD achieved by {\it any} protocol
were given in Refs.~\cite{KWW20} and \cite{CFH20}, where generality is due to local operations and public communication (LOPC) mapping of a device to a probability distribution. As our second main result, we provide tighter bounds than in Ref.~\cite{KWW20} and go beyond the results presented in Ref.~\cite{CFH20}. (The latter were restricted only to states with positive partial transposition). The bound is given in terms of reduced relative entropy of entanglement optimized over decompositions into local and non-local part.

So far we have considered a kind of ``static'' approach, upper bounding the DI-QKD rate of a fixed device. We pass now to a more ``dynamic'' one, where we explicitly consider a quantum channel as a part of device.
In Ref.~\cite{CFH20}, the perspective of the provider was developed-- a person who aims at delivering device and checks in advance its limitations. This approach was extended to quantum channels that connect the DI-QKD devices, however the considered examples were restricted to PPT (positive partial transpose) channels~\cite{R99,R01}. As the third main result, we go beyond examples for PPT channels in showing that the device independent private capacity for CHSH protocols can be less than the private capacity of a quantum channel. We show that the well known channels-- {\it the erasure channel} and {\it depolarizing} one, can be simulated in device independent way by {\it a dephasing} channel
corresponding to respective honest realisations. This suggests that the best choice for a provider of the DI-QKD Internet is 
to consider dephasing channel as a mean to distribute CHSH based device independent key. Indeed, the honest parties will obtain only the statistics that can be explained by the use of dephasing channel.

\textit{Note.} Detailed descriptions of the protocols and upper bounds on the DI-QKD rates with several observations are provided in the Supplementary Material. 

{\it Notations}.--- Formally a quantum device is given by
its quantum representation $\Tr M^{x}_a\otimes M^{y}_b \rho$ where $\{M^{x}_a\}_a$ and $\{M^{y}_b\}_b$ are Positive Operator Valued Measures (POVMs) for each input to the device $(x,y)$, and $\rho$ is a bipartite state. We denote such a device as $(\rho,{\cal M})$ where
${\cal M} = \{ M^{x}_a\otimes M^{y}_b\}^{x,y}_{a,b}$. Let $\operatorname{LHV}$ denote the set of states with locally-realistic hidden variable models under given set of measurements.

We denote \textit{iid} device-independent key rate of a quantum device as $K^{iid}_{DI}$. By iid we mean that the devices are independent and identical in each round of the protocol. We then consider various types of protocols for DI-QKD rate. In the first case, single inputs are used for key generation. There are further two variants of such a protocol. The first quantifies the key achieved by protocols in which the honest parties perform test based only on certain parameters of the device. These parameters include the level of  violation of a Bell inequality $\omega(\rho,{\cal M})$  and the rate of the error of the raw key data $P_{err}(\rho,{\cal M})$. We will denote the device independent key rate for such protocols as $K^{iid,(\hat{x},\hat{y})}_{DI,par}(\rho,{\cal M})$ (see Ref.~\cite{AL20} for this approach).
The second, considered in Ref.~\cite{KWW20, FBL+21} is based on the protocols in which the parties perform a full tomography of the device. We denote the DI key rate for such protocols as $K_{DI,dev}^{iid,(\hat{x},\hat{y})}(\rho,{\cal M})$. 

{\it Methods}.--- To study upper bounds on the DI-QKD rates, we begin with introducing an entanglement measure, similar to
the squashed entanglement, which is implicitly used in Ref.~\cite{AL20}. 
\begin{equation}
    E_{sq}^{cc}(\rho_{AB},{\mathrm M}) \coloneqq \inf_{\Lambda_E} I(A:B|E)_{{\mathrm M}\otimes \Lambda_E (\psi^\rho)}
\end{equation}
It is a function of a pair of POVMs 
${\mathrm M}:=M^{\hat{x}}_a\otimes M^{\hat{y}}_a$ and a bipartite quantum state. It computes the infimum over channels acting on the purification $\psi^\rho$ of the state $\rho$, of the {\it conditional mutual information} of resulting extension of $\rho$ measured with ${\mathrm M}$ on system $AB$.
We call
it {\it cc-squashed entanglement } where cc stands for classical-classical registers of the measured system $AB$.

We further consider its reduced versions
({\it reduced cc-squashed entanglement},
where reduction is due to the infimum on the set of allowed
attacking strategies of the eavesdropper while manufacturing the device. As in the case of iid DI key rate there are two versions of reduced cc squashed entanglement:
\begin{align}
    &E_{sq,par}^{cc}(\rho_{AB},{\cal M}(\hat{x},\hat{y})) :=
    \nonumber\\&\inf_{\underset{P_{err}(\sigma,{\cal N})=P_{err}(\rho,{\cal M}(\hat{x},\hat{y}))}{
    \omega(\sigma,{\cal N})=\omega(\rho,{\cal M})}} E_{sq}^{cc}(\sigma_{AB},{\cal M}(\hat{x},\hat{y}))\\
   &E_{sq,dev}^{cc}(\rho_{AB},{\cal M}(\hat{x},\hat{y})) := \inf_{{
    (\sigma,{\cal N})=(\rho,{\cal M})}} E_{sq}^{cc}(\sigma_{AB},{\cal M}(\hat{x},\hat{y}))
\end{align}
In the above definitions, $\left(\rho_{AB}, \mathcal{M}\right)$ corresponds to the device in consideration and $(\hat{x}, \hat{y})$ corresponds to the key generation inputs. In the first definition, we have the infimum over all devices compatible to certain parameters observed in the protocol. In the second definition, we have the infimum over all devices that give the distribution $p(a,b|x,y)= \operatorname{Tr}(\mathcal{M}^x_a\otimes \mathcal{M}^y_b\rho)$. 
It is clear that by definition $E_{sq,par}^{cc}\leq E_{sq,dev}^{cc}$. Using the proof-techniques in Refs.~\cite{CEHHOR,CFH20}, we can see (Theorem~\ref{thm:main-1} and Corollary~\ref{cor:main-1} in the Supplemental Material) that the above quantities upper bound the DI-QKD rate:
\begin{align}
    K_{DI,dev/par}^{iid,(\hat{x},\hat{y})}(\rho,{\cal M}) \leq E_{sq,dev/par}^{cc}(\rho,{\cal M}).
\end{align}

We make a crucial connection that the upper bounds
{\it plotted} in Refs.~\cite{AL20,FBL+21} are upper bounds on $E_{sq,par}^{cc}(\rho_{AB},{\cal M}(\hat{x},\hat{y}))$.
We denote the plotted functions as $I_{AL}(\rho_{AB},{\cal M}(\hat{x},\hat{y}))$ and $I_{FBJL+}(\rho_{AB},{\cal M}(\hat{x},\hat{y}))$ respectively. That means, if $E_{sq,par}^{cc}$ was plotted, it would be lesser than both the bounds $I_{AL}(\rho_{AB},{\cal M}(\hat{x},\hat{y}))$ and $I_{FBJL+}(\rho_{AB},{\cal M}(\hat{x},\hat{y}))$ given in Refs.~\cite{AL20} and \cite{FBL+21}, respectively. For any device $(\rho,{\cal M})$ and input ${\cal M}(\hat{x},\hat{y})$, there is
\begin{align}
    &E_{sq,par}^{cc}(\rho_{AB},{\cal M}(\hat{x},\hat{y}))\leq I_{AL}(\rho_{AB},{\cal M}(\hat{x},\hat{y})),\label{eq:AL}\\
    &E_{sq,dev}^{cc}(\rho_{AB},{\cal M}(\hat{x},\hat{y}))\leq I_{FBJL+}(\rho_{AB},{\cal M}(\hat{x},\hat{y})),\\
    &E_{sq,par}^{cc}(\rho_{AB},{\cal M}(\hat{x},\hat{y}))\leq 
    E_{sq,dev}^{cc}(\rho_{AB},{\cal M}(\hat{x},\hat{y})).
\end{align}

Based on above inequalities and some of the desirable properties like convexity of the cc-squashed entanglement with respect to the states, we arrive at the main Theorem of this Letter. In what follows we narrow considerations to  $({\cal M},(\hat{x},\hat{y}))$ being {\it projective}, as the bound for Werner states presented in Ref.~\cite{FBL+21} applies only to this case.

\begin{theorem}
For a Werner state $\rho_{AB}^W$
and ${\cal M}$ consisting of  projective measurements ${\mathrm P}_a^x\otimes {\mathrm P}_b^y$, and a pair of inputs $(\hat{x},\hat{y})$ used to generate the key, there is
\begin{align}
    &K_{DI,par}^{iid,(\hat{x},\hat{y})}(\rho_{AB}^W,{\cal M})\leq \nonumber\\ &\mbox{Conv}(I_{AL}(\rho_{AB}^W,{\cal M}(\hat{x},\hat{y})),I_{FBJL+}(\rho_{AB}^W,{\cal M}(\hat{x},\hat{y}))),
\end{align}
where $Conv(P_1,P_2)$ is the convex hull of the plots of functions $P_i$, and $K_{DI,par}^{iid,(\hat{x},\hat{y})}(\rho_{AB}^W,{\cal M})$ is defined with respect to $\omega(\rho^W,\mathcal{M})$ and $P_{err}=P(a\neq b|\hat{x}\hat{y})$.
\label{thm:convexification-m}
\end{theorem}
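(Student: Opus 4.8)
The plan is to deduce the theorem from three facts recorded in the excerpt and then invoke the variational description of the convex hull. First I would use the bound $K^{iid,(\hat x,\hat y)}_{DI,par}(\rho^W_{AB},\mathcal M)\le E^{cc}_{sq,par}(\rho^W_{AB},\mathcal M(\hat x,\hat y))$ (proved in the Supplemental Material) to reduce the claim to
\begin{equation*}
E^{cc}_{sq,par}(\rho^W_{AB},\mathcal M(\hat x,\hat y))\le \operatorname{Conv}\!\big(I_{AL},I_{FBJL+}\big).
\end{equation*}
From the inequality \eqref{eq:AL} I have $E^{cc}_{sq,par}\le I_{AL}$, while the chain $E^{cc}_{sq,par}\le E^{cc}_{sq,dev}\le I_{FBJL+}$ (the first step by definition of the two reductions, the second being the displayed FBJ\L{}KA comparison, valid here because $(\mathcal M,(\hat x,\hat y))$ is projective) gives $E^{cc}_{sq,par}\le I_{FBJL+}$. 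Hence $E^{cc}_{sq,par}\le\min(I_{AL},I_{FBJL+})$ pointwise along the Werner family.

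Second, and this is the crux, I would establish that the function obtained by evaluating $E^{cc}_{sq,par}$ along the one-parameter Werner family is convex in the plotting variable. The states $\rho^W_v$ form a segment in state space, so $\omega(\rho^W_v,\mathcal M)$ and $P_{err}(\rho^W_v,\mathcal M(\hat x,\hat y))=P(a\neq b\mid\hat x\hat y)$ depend affinely on $v$, and the plotting variable is an affine function of $v$. Since $E^{cc}_{sq,par}$ depends on the device only through the pair $(\omega,P_{err})$, writing $E^{cc}_{sq,par}=F(\omega,P_{err})$ it suffices to show $F$ convex and compose with the affine parametrisation. To prove $F$ convex I would argue by a flag construction on the manufactured devices: given compatible devices $(\sigma_1,\mathcal N_1),(\sigma_2,\mathcal N_2)$ nearly attaining the infima at parameter values $t_1,t_2$, assemble the device on $AB$ whose purification carries a classical register $F$ held by the environment, with branches reproducing the two devices with weights $\lambda,1-\lambda$. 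Its statistics, hence its pair $(\omega,P_{err})$, are the average $\lambda t_1+(1-\lambda)t_2$, so it is admissible for the infimum defining $F$ there; and letting the squashing channel read $F$ before branching shows, via the already established convexity of the cc-squashed entanglement, that its value is at most $\lambda E^{cc}_{sq}(\sigma_1,\mathcal N_1)+(1-\lambda)E^{cc}_{sq}(\sigma_2,\mathcal N_2)$. Taking infima yields the convexity inequality for $F$.

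Finally, $\operatorname{Conv}(I_{AL},I_{FBJL+})$ is the lower boundary of the convex hull of the two plotted curves, equivalently the greatest convex function not exceeding $\min(I_{AL},I_{FBJL+})$ pointwise. By the second step $E^{cc}_{sq,par}$ is convex, and by the first step it lies below $\min(I_{AL},I_{FBJL+})$; it is therefore one of the competitors in the supremum defining this envelope, so $E^{cc}_{sq,par}\le\operatorname{Conv}(I_{AL},I_{FBJL+})$. Chaining with the operational bound $K^{iid,(\hat x,\hat y)}_{DI,par}\le E^{cc}_{sq,par}$ completes the proof.

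The main obstacle is the convexity of the second step. The delicacy is that $(\sigma,\mathcal N)\mapsto\omega(\sigma,\mathcal N)$ is bilinear rather than affine, so one cannot merely invoke ``the infimal projection of a jointly convex function is convex''; the flag construction is precisely what restores linearity of the observed statistics under mixing while keeping the cc-squashed entanglement subadditive, and one must check that adjoining $F$ to the purifying system is a legitimate squashing extension (it is, since $F$ purifies the classical mixing). A minor point to confirm is that the chosen plotting variable is genuinely affine in $v$, so that convexity transports without distortion, and that projectivity of $(\mathcal M,(\hat x,\hat y))$ is exactly the hypothesis licensing $I_{FBJL+}$ as a bound on $E^{cc}_{sq,dev}$.
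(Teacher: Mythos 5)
Your proposal is correct and follows essentially the same route as the paper's proof: the chain $K^{iid,(\hat x,\hat y)}_{DI,par}\le E^{cc}_{sq,par}\le\min\{I_{AL},I_{FBJL+}\}$ (the paper's Corollary bounding the key by $E^{cc}_{sq,par}$ plus the Theorem comparing it to both plotted bounds via $E^{cc}_{sq,par}\le E^{cc}_{sq,dev}\le I_{FBJL+}$), then convexity of $E^{cc}_{sq,par}$ established by the identical flag construction (the paper's two convexity Lemmas, using linearity of $\omega$ and $P_{err}$ under mixing and the extension-based form of $E^{cc}_{sq}$), and finally the remark that a convex function below both curves lies below their convex hull. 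Your explicit checks---that $E^{cc}_{sq,par}$ depends on the device only through $(\omega,P_{err})$ and that the Werner parametrisation is affine---correspond to the paper's own Observation to that effect, so the argument matches the paper's in both structure and substance.
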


We now extend the definition of $E^{cc}_{sq}$ for multiple measurements. The cc-squashed entanglement of the collection of measurements ${\cal M}$ measured with distribution $p(x,y)$ of the inputs reads:
\begin{equation}
E^{cc}_{sq}(\rho_{AB},{\cal M},p(x,y)):=\sum_{x,y} p(x,y) E_{sq}^{cc}(\rho_{AB},{\mathrm M}_{x,y}).
\end{equation}
and
\begin{equation}
    E^{cc}_{sq,dev}(\rho,\mc{M},p(x,y))\coloneqq \inf_{(\rho,\mc{N})=(\rho,\mc{M})}E^{cc}_{sq}(\sigma,\mc{N},p(x,y)).
\end{equation}

It will appear crucial to notice,
that in DI-QKD it is assumed, that the distribution of inputs $p(x,y)$ is drawn from a private shared randomness held by Alice and Bob, which is independent of the device $(\rho,{\cal M})$.
From here on, we work with the definition of standard protocols. That is, we assume that Alice and Bob make the announcements for the choice of measurements and Eve subsequently learns this measurement~\cite{FBL+21}. To make this assumption explicit
we will consider the following DI-QKD rate:
\begin{align}
    &K_{DI,dev}^{iid,broad}(\rho,{\cal M},p(x,y)):= \nonumber\\
    &\inf_{\epsilon>0}\limsup_n\sup_{{\cal P}\in LOPC}\inf_{{(\sigma,{\cal N})\approx_\epsilon(\rho,{\cal M})}}\nonumber\\
     &\kappa^{\epsilon}_n({\cal P}([\sum_{x,y}p(x,y){\mathrm N}_{xy}\otimes \id_E(\psi^\sigma_{ABE}\otimes |xy\>\<xy|_{E_xE_y})]^{\otimes n})),
\end{align}
where by {\it broad} we mean that $(x,y)$ are broadcasted, and made
explicit by adding systems $E_xE_y$ to 
Eve. Here $\mc{P}$ is a protocol composed of classical LOPC (cLOPC) acting on $n$ identical copies of $(\sigma,\mc{N})$ which, composed with the measurement, results in a quantum LOPC (qLOPC) protocol. Here, $\psi^{\sigma}_{ABE}$ refers to the purification of $\sigma$.

We observe that the $E_{sq}^{cc}(\rho, {\cal M},p(x,y))$ is
an upper bound for distillable key rate
of the state $\sum_{x,y}p(x,y)M_{x,y}\otimes
\id_E\op{\psi^\rho}_{ABE}\otimes\op{x,y}_{E_xE_y}$, where $\psi^\rho_{ABE}$ is purified state of $\rho_{AB}$. This leads us to the following result.
\begin{theorem}
The function $E_{sq,dev}^{cc}(\rho,{\cal M},p(x,y))$ is (i) a convex upper bound on $K_{DI,dev}^{iid,broad}(\rho,{\cal M},p(x,y))$ and (ii) a lower bound to 
the upper bound given in \cite[Eq.~(5)]{FBL+21}.
\end{theorem}
As the second conclusion from the above Theorem there comes the fact that for any family of plots of the upper bound via the average intrinsic information given in Ref.~\cite{FBL+21}, the device independent key rate is below their convex hull.

As we see above $E_{sq,dev}^{cc}(\rho,{\cal M},p(x,y))$ as well as intrinsic non-locality~\cite{KWW20} are based on conditional mutual information where the Eve system is an extension system of underlying strategy. A major difference between the two quantities is that the intrinsic non-locality is a function of the device $\left\{p(a,b|x,y)\right\}$ while $E_{sq,dev}^{cc}(\rho,{\cal M}, p(x,y))$ is a function of the compatible $\rho_{ccq}$ states. The form of the function $E_{sq,dev}^{cc}$ also allows us, following Ref.~\cite{FBL+21}, to designs channels on the eavesdropper system that are dependent on the classical communication between Alice and Bob. The construction of such maps is crucial in obtaining tighter upper bounds. 

We now discuss the second result of our work. We move beyond specific protocols to derive bounds on the general DI-QKD protocols. The maximum DI-QKD rate $K^{iid}_{DI}$ of a device $(\rho,\mc{M})$ is equal to the maximum DI-QKD rate of a device $(\sigma,\mc{N})$ when $(\rho,\mc{M})= (\mc{N},\sigma)$:
\begin{align}
& (\mc{M},\rho)= (\mc{N},\sigma) \implies K^{iid}_{DI,dev}(\rho,\mc{M})= K^{iid}_{DI,dev}(\sigma,\mc{N}).
\end{align}
From the definitions, we also have $K^{iid}_{DI,par}(\rho,\mc{M})\leq K^{iid}_{DI,dev}(\rho,\mc{M})$.

We develop on the results of Ref.~\cite{CFH20} going beyond states that are positive under partial transposition (PPT) \cite{R99,R01}. We arrive at the following upper bound for general protocols:
\begin{theorem}
The maximal DI-QKD rate $K^{iid}_{DI}(\rho,\mc{M})$ of a device $(\rho, \mc{M})$ is upper bounded as
\begin{align}
   K^{iid}_{DI,dev}(\rho,\mc{M})&\leq (1-p)\inf_{ (\sigma^{\textrm{NL}},\mathcal{N})= (\rho^{\textrm{NL}},\mathcal{M})}E_{R}(\sigma^{NL})+\nonumber\\p &\inf_{(\sigma^{\textrm{L}},\mathcal{N})= (\rho^{\textrm{L}},\mathcal{M})}E_{R}(\sigma^{L}),
   \end{align}
   where $E_R(\rho)$ is the relative entropy of entanglement~\cite{Vedral1997} of the bipartite state $\rho$,
   \begin{equation}
      \rho=(1-p)\rho^{\textrm{NL}}+p\rho^{\textrm{L}}\end{equation}
such that $\sigma^{\textrm{L}}, \rho^{\textrm{L}}\in \operatorname{LHV}$ and $\sigma^{\textrm{NL}}, \rho^{\textrm{NL}}\notin \operatorname{LHV}$.
\end{theorem}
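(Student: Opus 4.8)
The plan is to first reduce the device-independent key to a \emph{reduced} relative entropy of entanglement, and then to combine convexity of $E_R$ with a flagging construction that decouples the local and non-local contributions. Along the lines of Theorem~\ref{thm:main-1} and the techniques of Refs.~\cite{CEHHOR,CFH20}, I would establish
\begin{equation*}
K^{iid}_{DI,dev}(\rho,\mathcal{M}) \;\le\; \inf_{(\sigma,\mathcal{N})=(\rho,\mathcal{M})} E_R(\sigma).
\end{equation*}
The eavesdropper is free to fabricate the device from \emph{any} realization $(\sigma,\mathcal{N})$ reproducing the observed $p(a,b|x,y)=\Tr(\mathcal{M}^x_a\otimes\mathcal{M}^y_b\rho)$ while retaining the purification of $\sigma$; for each such realization the attainable key is at most the device-dependent distillable key of the measured ccq state, which is in turn bounded by $E_R(\sigma)$ because the honest measurements and public communication act as local operations under which $E_R$ cannot increase. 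The infimum then reflects Eve's optimal fabrication.

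For the fixed decomposition $\rho=(1-p)\rho^{\mathrm{NL}}+p\rho^{\mathrm{L}}$ I would exhibit admissible realizations that decouple the two parts. Given any feasible $(\sigma^{\mathrm{NL}},\mathcal{N}_1)=(\rho^{\mathrm{NL}},\mathcal{M})$ with $\sigma^{\mathrm{NL}}\notin\operatorname{LHV}$ and any $(\sigma^{\mathrm{L}},\mathcal{N}_2)=(\rho^{\mathrm{L}},\mathcal{M})$ with $\sigma^{\mathrm{L}}\in\operatorname{LHV}$, I form the flagged device
\begin{equation*}
\sigma=(1-p)\,\sigma^{\mathrm{NL}}\otimes\op{00}_{A'B'}+p\,\sigma^{\mathrm{L}}\otimes\op{11}_{A'B'},
\end{equation*}
whose measurement $\mathcal{N}$ reads the shared classical flag $A'B'$ and applies $\mathcal{N}_1$ on outcome $0$ and $\mathcal{N}_2$ on outcome $1$. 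By linearity of the Born rule $(\sigma,\mathcal{N})$ reproduces the statistics of $(\rho,\mathcal{M})$, so $\sigma$ is admissible in the infimum above and $K^{iid}_{DI,dev}(\rho,\mathcal{M})\le E_R(\sigma)$. The flag is exactly what permits $\mathcal{N}_1\ne\mathcal{N}_2$, allowing the two optimizations to be performed independently. Since appending a shared product flag leaves $E_R$ unchanged, convexity of the relative entropy of entanglement gives $E_R(\sigma)\le(1-p)E_R(\sigma^{\mathrm{NL}})+p\,E_R(\sigma^{\mathrm{L}})$; minimizing the two realizations separately yields the stated bound.

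Finally I would record the role of the $\operatorname{LHV}$ labelling. Any $\rho^{\mathrm{L}}\in\operatorname{LHV}$ has statistics $\sum_\lambda q_\lambda\,p_A(a|x,\lambda)p_B(b|y,\lambda)$, which are reproduced by the \emph{separable} state $\sum_\lambda q_\lambda\op{\lambda}_A\otimes\op{\lambda}_B$ with the diagonal POVMs encoding the response functions; this realization lies in $\operatorname{LHV}$, so the local infimum can be driven to $E_R=0$, leaving the effective bound $(1-p)\inf E_R(\sigma^{\mathrm{NL}})$. This freedom to ``pay nothing'' for the local fraction $p$ is what improves on the PPT-restricted bound of Ref.~\cite{CFH20}. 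The main obstacle I anticipate is the first step: making the operational reduction rigorous against an $\epsilon$-tolerant adversary, with the $\limsup_n$ over arbitrary LOPC protocols, requires the regularization and asymptotic-continuity machinery of Refs.~\cite{CEHHOR,CFH20}; once that is in place, the flagging and convexity steps are elementary.
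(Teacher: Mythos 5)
Your proposal is correct and follows essentially the same route as the paper: reduce $K^{iid}_{DI,dev}$ to the infimum of the device-dependent key over compatible realizations, bound that by $E_R$, and build the flagged state $\sigma_{ABA'B'}$ with flag-controlled POVMs $\tilde{\Lambda}^x_a=N^x_a\otimes\op{0}_{A'}+\Lambda^x_a\otimes\op{1}_{A'}$ reproducing $p(a,b|x,y)$, before optimizing the two parts independently. The only cosmetic difference is that you invoke convexity of $E_R$ together with flag-invariance to get $E_R(\sigma)\leq(1-p)E_R(\sigma^{\mathrm{NL}})+pE_R(\sigma^{\mathrm{L}})$, whereas the paper uses the exact direct-sum decomposition of $E_R$ for orthogonally flagged states as an equality; since only the inequality is needed for the upper bound, both are equally valid.
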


A consequence of the above theorem is the following result. The maximal DI-QKD rate $K^{iid}_{DI,dev}(\rho,\mc{M})$ of a device $(\rho,\mc{M})$ under CHSH protocol $\mathcal{P}_{CHSH}$~\cite{E91}, is upper bounded as
\begin{align}\label{eqn-upper-CHSH}
   K^{iid}_{DI,dev}(\rho,\mc{M})&\leq (1-p)\inf_{(\sigma^{\textrm{bl}},\mathcal{N})= (\rho^{\textrm{bnl}},\mathcal{M})}E_{R}(\sigma^{\textrm{bnl}}),
   \end{align}
where $\rho=(1-p)\rho^{\textrm{bnl}}+p\rho^{\textrm{bl}}$ and $\rho^{\textrm{bl}}$ denotes state satisfying CHSH inequality and $\rho^{bnl}$ denotes state violating CHSH inequality~\cite{CHSH}. We plot the upper bounds in Figure~\ref{DI-QKD-Plots1}.
\begin{figure}
    \centering
    \includegraphics[width=\linewidth]{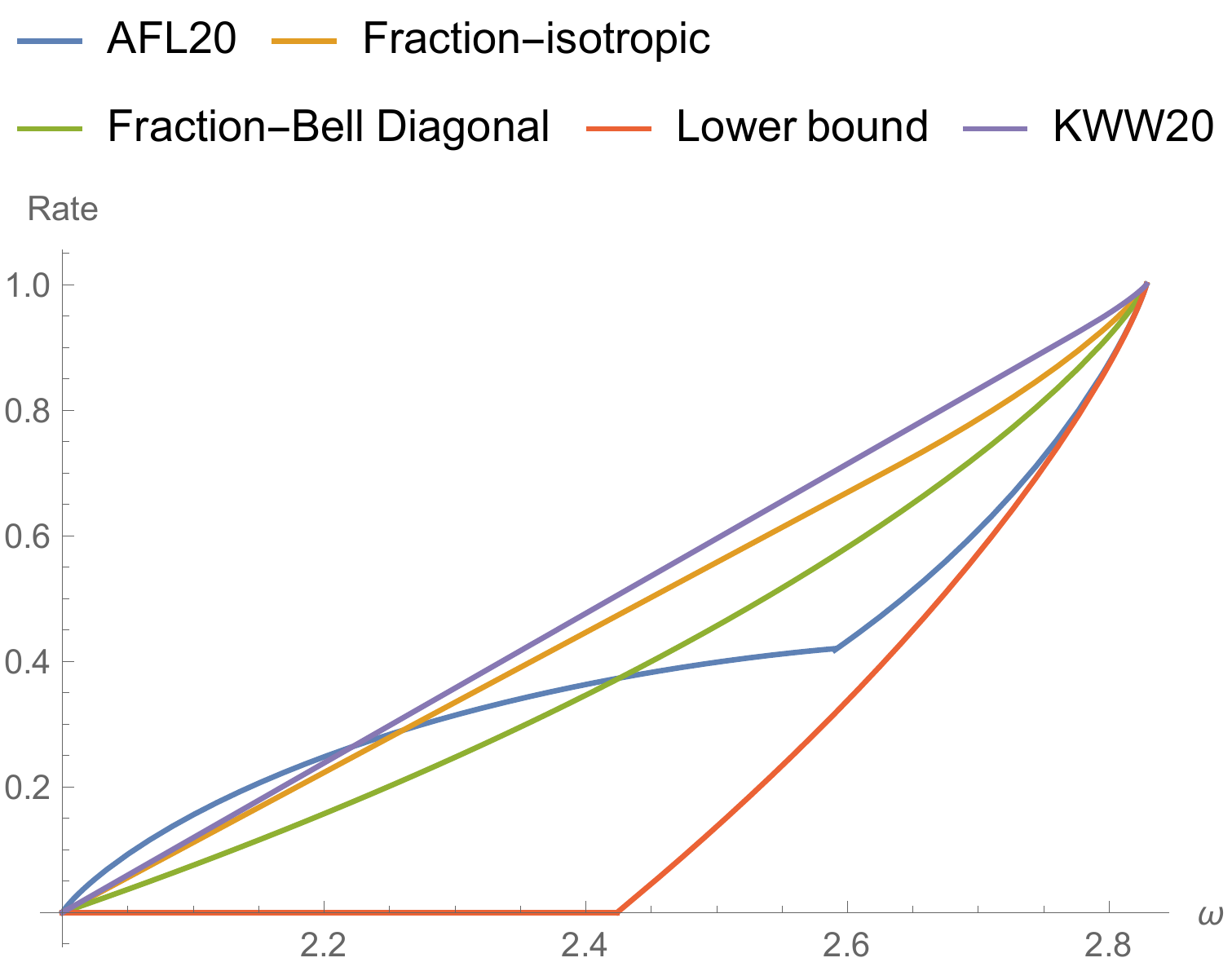}
    \caption{In this plot, we depict the bounds on the amount of DI key that can be obtained from a CHSH-based device. The yellow line and green line corresponds to the upper bounds obtained from \eqref{eqn-upper-CHSH}. The blue line corresponds to the bound obtained in Appendix B of \cite{AL20}. The purple line corresponds to the bound obtained in \cite{KWW20}. The red line corresponds to the lower bounds obtained in \cite{Pironio2009}.}
    \label{DI-QKD-Plots1}
\end{figure}

\begin{figure}
    \centering
    \includegraphics[width=\linewidth]{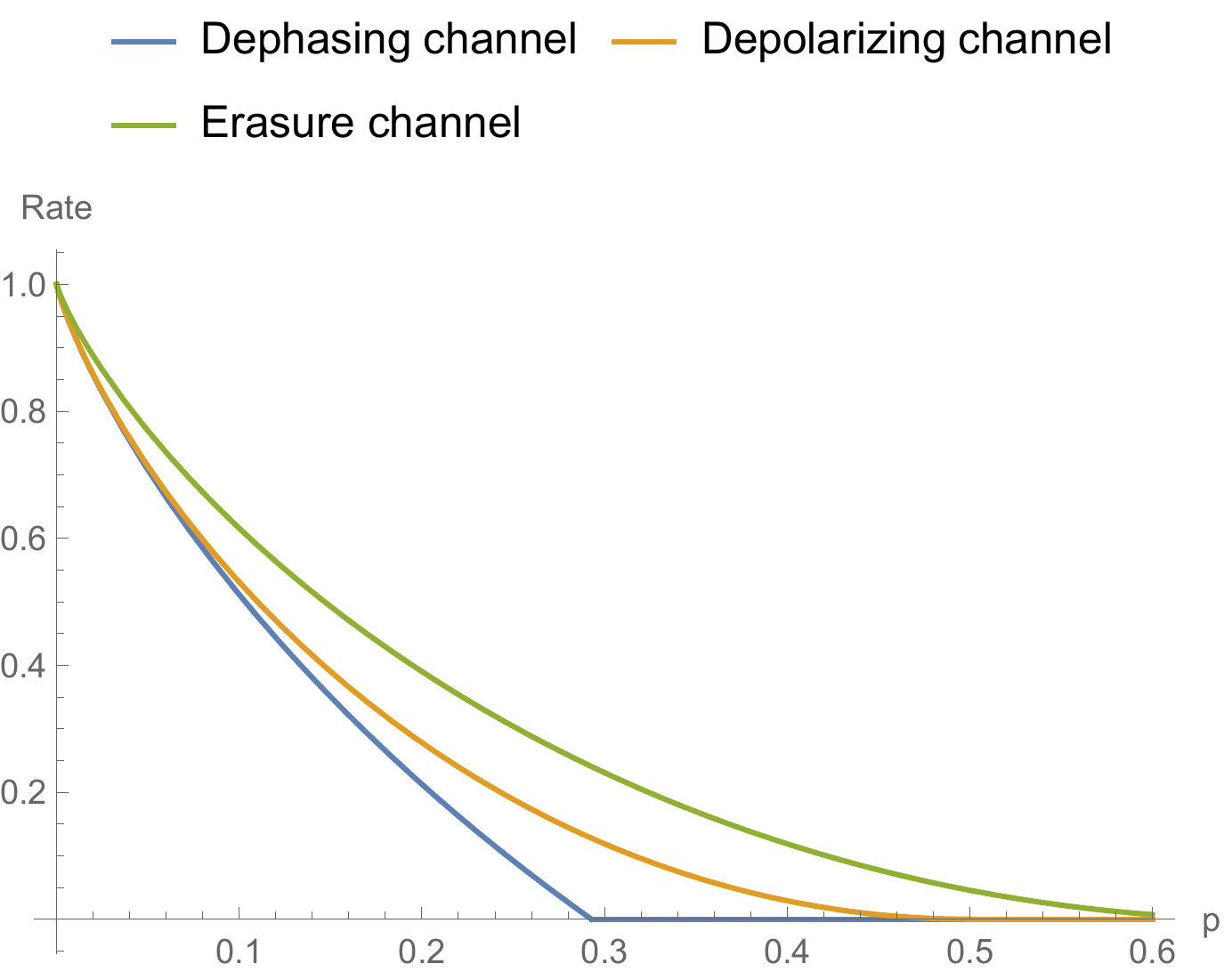}
    \caption{In the above figure, we plot upper bounds on the device-dependent QKD capacities of depolarizing channel (yellow line), dephasing channel (blue line) and erasure channel (green line). We notice that the upper bounds for erasure and dephasing channels are achievable device-dependent QKD rates (capacities). We then notice that for the CHSH protocols, the upper bounds on the DI-QKD capacities of channels is limited by the device-dependent QKD capacity of dephasing channels.}
    \label{fig:channel_capacities-m}
\end{figure}

We now discuss the third main result of our work. It is pertinent from the perspective of a manufacturer to benchmark the maximum DI-QKD rate of a device for their users. Quantum states in the device would be generated by the quantum channels used by the manufacturer to setup the device. We provide upper bounds on the DI-QKD rate of an honest device where the entangled systems are distributed at the measurement ends (apparatus) of Alice and Bob via quantum channels. This is generalization of the setup considered in Ref.~\cite{CFH20}, where rather than a central (relay) point distributing entangled systems to the measurement apparatuses of Alice and Bob, it was assumed that the apparatus at the Alice's end transmits a system from a pair of entangled systems to Bob's end. We discuss the generalized setups and upper bounds on their DI-QKD rates in detail in the Supplementary Material. For now, we provide upper bounds on the device $(\id\otimes\Lambda,\rho,\mc{M})$ with channel setups and protocols as considered in Ref.~\cite{CFH20} albeit considering an arbitrary channel $\Lambda$ from Alice's end to Bob's end. 

The DI-QKD capacity of the device $(\id\otimes\Lambda,\rho,\mc{M})$ under the assumption of its iid uses assisted with $i$-way communication between allies outside the device and $j$-way communication between the input-output rounds within the device, is given by~\cite{CFH20}
\begin{equation}
    \mc{P}^{IDI_j}_{i}(\id\otimes\Lambda,\rho,\mc{M})\coloneqq \inf_{\varepsilon>0}\limsup_{n\to \infty}\mu^{IDI_j,\varepsilon}_{i,n}(\id\otimes\Lambda,\rho,\mc{M}),
\end{equation}
where $\mu^{IDI_j,\varepsilon}_{i,n}(\id\otimes\Lambda,\rho,\mc{M})$ is the maximum key rate optimized over all viable privacy protocols $\hat{\mathcal{P}}$ over the \textit{iid} uses of device, and also includes a minimization over the possible \textit{iid} devices $IDI_j$ that are compatible with the honest device. We have
\begin{align}
   & \mu^{IDI_j,\varepsilon}_{i,n}(\id\otimes\Lambda,\rho,\mc{M}) \nonumber \\
   \coloneqq & \sup_{\hat{P}\in cLOPC_{i}}\inf_{\substack{(\id\otimes\Lambda',\sigma,\mc{N})\in IDI_j\\ (\id\otimes\Lambda',\sigma,\mc{N})\approx_{\varepsilon} (\id\otimes\Lambda,\rho,\mc{M})}}\kappa^{\varepsilon}_n(\hat{P},(\id\otimes\Lambda',\sigma,\mc{N})),
\end{align}
where $\kappa^\varepsilon_n$ is the rate of achieved $\varepsilon$-perfect key and classical labels from local classical operations in $\hat{P}\in cLOPC_{i}$ are possessed by the allies (Alice and Bob).

For the class of channels $\Lambda$ that are simulable via LOCC and the respective Choi states as resource~\cite{Bennett1996,HHH98}, the following upper bounds hold:
     \begin{align} 
  \mc{P}_{i}^{IDI_j}(\id\otimes\Lambda,\rho,\mc{M})
  \leq  \inf_{\substack{(\id\otimes\Lambda',\sigma,\mc{N})\in IDI_j\\ (\id\otimes\Lambda',\sigma,\mc{N})= (\id\otimes\Lambda,\rho,\mc{M})}}  E_R(\Phi^{\Lambda'}),\label{lem:choi-states-channels-m}
 \end{align} 
 where $\Phi^{\Lambda'}\coloneqq \Lambda'(\Phi^+)$ is the Choi state of the channel $\Lambda$, with $\Phi^+_{A'B}\coloneqq \frac{1}{d}\sum_{i,j=0}^{d-1}\ket{i,i}\bra{j,j}_{A'B}$ denoting a maximally entangled state of Schmidt rank $d=\min\{|A'|,|B|\}$.

In Figure~\ref{fig:channel_capacities-m}, we plot the upper bounds on the DI-QKD rates for the devices $(\id\otimes\Lambda,\rho,\mc{M})$ for $\Lambda$ being qubit channels-- depoloarizing $\mathcal{D}^{p}$, dephasing $\mathcal{P}^p$, and erasure $\mathcal{E}^{p}$, where actions of these channels are given as:
    $\mathcal{P}^p(\rho)= (1-p)\rho+p \sigma_Z\rho \sigma_Z$, $\mathcal{D}^{p} (\rho)= (1-p)\rho+p\frac{1}{2}\mathbbm{1}$,  $\mathcal{E}^{p} (\rho)\equiv (1-p)\rho+p\op{e}$, where $\op{e}$ is the erasure symbol, orthonormal to the support of the input state. The relative entropy of entanglement of the Choi states of the erasure and dephasing channels are also the device-dependent QKD capacities of respective channels~\cite{Pirandola2017}. We make a crucial observation that the dephasing channel can simulate the device $(\id\otimes\Lambda,\rho,\mc{M})$ with erasure channel or depolarizing channel in a device-independent way for CHSH protocols. This suggests that the outcomes of the device will have statistics that can be explained by the dephasing channel even when the actual channel present inside device is erasure or depolarizing. Hence, it may be in interest of the manufacturer to use dephasing channel instead of other two channels.
    
{\it Discussion}.---
We have developed tighter bounds on DI-QKD rate
in the case of protocols with single measurement
for generating the raw key. Extending this result
for more measurements (see lower bounds studied in Ref~\cite{schwonnek2020robust}), would be the next important step. We have also developed tighter bounds, based on relative entropy of entanglement, for the general DI-QKD protocols. Developing further on the relative entropic bound, we use it to derive tighter limitations on the DI-QKD rate of bipartite states and setups with quantum channels. Our techniques can be generalized to the multipartite case and will form a future direction.

\begin{acknowledgements}

Part of this work is performed at the Institute for Quantum Computing (IQC), University of 
Waterloo, which is supported by Innovation, Science and Economic Development Canada. EK acknowledges support 
by NSERC under the Discovery Grants Program, Grant 
No. 341495.

This work is part of the ICTQT
IRAP project of FNP. The``International Centre for
Theory of Quantum Technologies" project (contract no.
2018/MAB/5) is carried out within the International Research
Agendas Programme of the Foundation for Polish
Science co-financed by the European Union from the
funds of the Smart Growth Operational Programme, axis
IV: Increasing the research potential (Measure 4.3).
KH thanks Anubhav Chaturvedi for discussion and Tamoghna Das for valuable insight in the topic of upper bounds on device-independent quantum key distribution rates.

SD acknowledges Individual Fellowships at Universit\'{e} libre de Bruxelles; this project receives funding from the European Union's Horizon 2020 research and innovation programme under the Marie Sk\l odowska-Curie grant agreement No.~801505.
\end{acknowledgements}

\section*{The Supplementary Material}
In this Supplementary Material, we elaborate on the results and discussion presented in the Letter. We present detailed proof for the results in the Letter along with some additional relevant results and observations.

\subsection{Bounds on device-independent key distillation rate of states}\label{sec:di-qkd-state}
Let $\mc{D}(\mc{H}_{AB})$ denote the set of states defined on the $\mc{H}_{AB}\coloneqq \mc{H}_{A}\otimes\mc{H}_B$, where $\mc{H}_{A}$ and $\mc{H}_B$ are the separable Hilbert spaces associated with the quantum systems $A$ and $B$, respectively. Let $\mathbbm{1}_A$ denote the identity operator on $\mathcal{H}_{A}$ and $|A|$ denote the dimension of $\mc{H}_A$ ($\dim(\mc{H}_A)$). Let $\Phi^+_{AB}$ denote a maximally entangled state,
\begin{equation}
    \Phi^+_{AB}\coloneqq \frac{1}{d}\sum_{i,j=0}^{d-1}\ket{i,i}\bra{j,j}_{AB}
\end{equation}
for $d=\min\{|A|,|B|\}$ and an orthonormal basis $\{\ket{i}\}_i$. Let $\vartheta$ be the partial transposition map with respect to a fixed basis, i.e., $\vartheta(\rho_{AB})=\rho_{AB}^{\Gamma_{B}}$.

Consider a setup, wherein Alice and Bob, two spatially separated parties, have to extract a secret key. We assume that in this setup, the devices are untrusted. That is, Alice and Bob do not trust the quantum states, nor do they trust their measurement devices. The untrusted measurement of the device is given by  $\mc{M}\equiv \{M^x_a\otimes M^y_b\}_{a,b|x,y}$, where $a \in \mathcal{A}$, $b\in \mathcal{B}$, and $\mathcal{A},\mathcal{B}$ denote the finite set of measurements outcomes. The measurement outcomes, i.e., ouputs of the device, are secure from adversary and assumed to be in the possession of the receiver, Alice or Bob. Also, $x \in \mathcal{X}$, $y\in \mathcal{Y}$, where $\mathcal{X},\mathcal{Y}$ denote the finite set of measurements choices. The joint probability distribution is given as $p(a,b|x,y)=\Tr[M^x_a\otimes M^y_b \rho]$ for measurement $\mc{M}$ on bipartite state $\rho_{AB}$ defined on the separable Hilbert space $\mc{H}_A\otimes\mc{H}_B$. The tuple $\left\{\rho,\left\{M^x_a\right\}_x,\left\{M^y_b\right\}_y\right\}$ is called the quantum strategy of the distribution. The quantum systems $A,B$ can be finite- or infinite-dimensional. The number of inputs $x,y$ and corresponding outputs $a,b$ of local measurements by Alice and Bob are arbitrary in general. 

Let $\omega(\rho, \mc{M})$ denotes the violation of the given Bell inequality $\mc{B}$ by state $\rho_{AB}$ when the measurement settings are given by $\mc{M}$. Let $P_{err}(\rho,\mc{M})$ denote the expected qubit error rate (QBER). Both the Bell violation, as well as the QBER are a function of the probability distribution of the box. If under local measurements $\mc{M}$, a state $\rho$ exhibits a locally-realistic hidden variable model then we write $\rho\in\operatorname{LHV}(\mc{M})$. If a state $\sigma$ satisfies Bell inequality $\mc{B}$ under local measurements $\mc{M}$ then we write $\sigma\in\operatorname{L}(\mc{B},\mc{M})$. If a Bell inequality is a facet of local polytope then the set $\operatorname{LHV}(\mc{M})$ of states with locally-realistic hidden variable model and the set $\operatorname{L}(\mc{B},\mc{M})$ of states that satisfy given Bell inequality are equal. An example of such a Bell inequality is CHSH inequality for which $\operatorname{LHV}(\mc{M})=\operatorname{L}(\mc{B}=CHSH, \mc{M})$. We will use $\SEP(A\!:\!B)$ to denote the set of separable states defined on $\mc{H}_{A}\otimes\mc{H}_B$.

If $\{p(a,b\vert x,y)\}_{a,b|x,y}$ obtained from $(\rho,\mc{M})$ and $(\sigma,\mc{N})$ are the same, we write $(\sigma,\mc{N})= (\rho,\mc{M})$. In most DI-QKD protocols, instead of using the statistics of the \textit{full} correlation, we use the Bell violation and the QBER to test the level of security of the observed statistics. In this way, the protocols coarse grain the statistics and we only use partial information of the full statistics to extract the device-independent key. 
In this context, the notation $(\sigma,\mc{N})= (\rho,\mc{M})$ also implies that $\omega(\sigma,\mc{N})= \omega(\rho,\mc{M})$ and $P_{err}(\sigma,\mc{N})=P_{err}(\rho,\mc{M})$. When conditional probabilities associated with $(\rho,\mc{M})$ and $(\sigma,\mc{N})$ are $\varepsilon$-close to each other, then we write $(\rho,\mc{M})\approx_{\varepsilon}(\sigma,\mc{N})$. For our purpose, it suffices to consider the distance
\begin{equation}
    d(p,p')=\sup_{x,y}\norm{p(\cdot\vert x,y)-p'(\cdot \vert x,y)}_1\leq \varepsilon.
\end{equation}

The device-independent (DI) distillable key rate of a device is informally defined as the supremum over the finite key rates $\kappa$ achievable by the best protocol on any device compatible with $(\rho,\mc{M})$, within an appropriate asymptotic blocklength limit and security parameter. Another approach taken is to minimize the key rate over statistics compatible with Bell parameter and a quantum bit error rate (QBER) (e.g.,~\cite{AL20}). For our purpose, we constrain ourselves to the situation when the compatible devices are supposedly \textit{iid} (independent and identically distributed).

Consider the following relations:
\begin{align}
    (\rho,\mc{M}) &\approx_\varepsilon  (\sigma,\mc{N}) \label{eq:k-1}\\
    \omega(\rho,\mc{M})& \approx_{\varepsilon} \omega(\sigma,\mc{N})\label{eq:k-2}\\
    P_{err}(\rho,\mc{M})& \approx_{\varepsilon} P_{err}(\sigma,\mc{M})\label{eq:k-3},
\end{align}
where ~\eqref{eq:k-1} implies ~\eqref{eq:k-2} and ~\eqref{eq:k-3}.
Formally, the definition of device-independent distillable key rate is given as
\begin{definition}[\cite{CFH20}]
The maximum device-independent key rate of a device $(\rho,\mc{M})$ with $iid$ behavior is defined as
\begin{equation}
    K^{iid}_{DI}(\rho,\mc{M})\coloneqq \inf_{\varepsilon>0}\limsup_{n\to \infty} \sup_{\hat{\mc{P}}} \inf_{\eqref{eq:k-1}} \kappa^\varepsilon_{n}(\hat{\mc{P}}(\sigma,\mc{N})^{\otimes n}),
\end{equation}
where $\kappa_n^\varepsilon$ is the key rate achieved for any security parameter $\varepsilon$, blocklength or number of copies $n$, and measurements $\mc{N}$.

Here, $\hat{\mc{P}}$ is a protocol composed of classical local operations and public (classical) communication (cLOPC) acting on $n$ identical copies of $(\sigma,\mc{N})$ which, composed with the measurement, results in a quantum local operations and public (classical) communication (qLOPC) protocol.
\end{definition}

The following Lemma follows from the definition of $K^{iid}_{DI}$ (see ~\cite{CFH20} for the proof argument made for $\leq$):
\begin{lemma}\label{lem:di-state}
The maximum device-independent key rate $K^{iid}_{DI}$ of a device $(\rho_{AB},\mc{M})$ is equal to the maximum device-independent key rate of a device $(\sigma,\mc{N})$ when $(\rho,\mc{M})= (\sigma,\mc{N})$:
\begin{align}
& (\rho,\mc{M})=(\sigma,\mc{N}) \implies K^{iid}_{DI}(\rho,\mc{M})= K^{iid}_{DI}(\sigma,\mc{N}).\label{eq:p-d-i}
\end{align}
\end{lemma}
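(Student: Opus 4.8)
The plan is to observe that in the definition of $K^{iid}_{DI}(\rho,\mc{M})$ the base device $(\rho,\mc{M})$ enters \emph{only} through the feasibility constraint of the innermost infimum, and never through the objective being optimized. Writing the optimization variable of that infimum as $(\tau,\mc{O})$, to avoid clashing with the device $(\sigma,\mc{N})$ appearing in the statement, the definition reads
\begin{equation}
K^{iid}_{DI}(\rho,\mc{M})= \inf_{\varepsilon>0}\limsup_{n\to\infty}\sup_{\hat{\mc{P}}}\inf_{(\rho,\mc{M})\approx_\varepsilon(\tau,\mc{O})}\kappa^\varepsilon_n(\hat{\mc{P}}(\tau,\mc{O})^{\otimes n}).
\end{equation}
The objective $\kappa^\varepsilon_n(\hat{\mc{P}}(\tau,\mc{O})^{\otimes n})$ is a function of the optimization variable $(\tau,\mc{O})$ and the protocol $\hat{\mc{P}}$ alone; the device $(\rho,\mc{M})$ appears nowhere in it, since the worst-case purification available to Eve is already accounted for by ranging over all compatible $(\tau,\mc{O})$.

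Next I would verify that the feasible set of the inner infimum is unchanged when $(\rho,\mc{M})$ is replaced by any device with identical statistics. By hypothesis $(\rho,\mc{M})=(\sigma,\mc{N})$, meaning the two induce the same conditional distributions $p(a,b\vert x,y)$. Since closeness $\approx_\varepsilon$ is defined purely through the statistical distance $d(p,p')=\sup_{x,y}\norm{p(\cdot\vert x,y)-p'(\cdot\vert x,y)}_1$, the constraint $(\rho,\mc{M})\approx_\varepsilon(\tau,\mc{O})$ is satisfied if and only if $(\sigma,\mc{N})\approx_\varepsilon(\tau,\mc{O})$, for every candidate $(\tau,\mc{O})$. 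Hence, for each fixed $\varepsilon$, the two base devices generate exactly the same feasible set of optimization variables.

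I would then conclude by propagating the equality outward. For every fixed $\varepsilon>0$, $n$, and $\hat{\mc{P}}$, the inner infimum associated with $(\rho,\mc{M})$ and the one associated with $(\sigma,\mc{N})$ are infima of the \emph{same} objective over the \emph{same} feasible set, hence are equal. Equality of these inner values is preserved under the successive $\sup_{\hat{\mc{P}}}$, $\limsup_{n\to\infty}$, and $\inf_{\varepsilon>0}$, which yields $K^{iid}_{DI}(\rho,\mc{M})=K^{iid}_{DI}(\sigma,\mc{N})$. Equivalently, because the hypothesis $(\rho,\mc{M})=(\sigma,\mc{N})$ is symmetric, one may combine the ``$\leq$'' direction established in~\cite{CFH20} with the same bound applied after interchanging the roles of the two devices to obtain the two-sided inequality.

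There is no serious obstacle here: the entire content is the bookkeeping observation that both the constraint set and the objective depend on the base device solely through its observed statistics. The one point warranting care is confirming that $\kappa^\varepsilon_n$ carries no residual dependence on $(\rho,\mc{M})$ beyond the constraint; this is immediate from the definition, where $(\rho,\mc{M})$ occurs only under $\approx_\varepsilon$ while the argument of $\kappa^\varepsilon_n$ is assembled exclusively from the optimization variable $(\tau,\mc{O})$ and the protocol $\hat{\mc{P}}$.
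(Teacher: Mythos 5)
Your proposal is correct and takes essentially the same route as the paper, which gives no explicit proof beyond remarking that the lemma ``follows from the definition'' (citing \cite{CFH20} for the $\leq$ direction): your bookkeeping observation that the base device enters the definition of $K^{iid}_{DI}$ only through the $\approx_\varepsilon$ feasibility constraint, so that devices with identical statistics induce identical feasible sets and hence identical values at every level of the $\inf$/$\limsup$/$\sup$/$\inf$ nesting, is precisely that remark made explicit.
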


\begin{definition}\label{def:di-state}
The maximum device-independent key distillation rate $K^{iid}_{DI}$ of a bipartite state $\rho_{AB}$ is given by
\begin{equation}
    K^{iid}_{DI}(\rho)=\sup_{\mc{M}}K^{iid}_{DI}(\rho,\mc{M}).
\end{equation}
\end{definition}

\begin{observation}
We note that there may exist states $\rho$ for which $K^{iid}_{DI}(\rho)=0$ but $K^{iid}_{DI}(\rho^{\otimes k})> 0$ for some $k\in\mathbbm{N}$.
\end{observation}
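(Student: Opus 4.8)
The plan is to establish the claim by a device-independent analogue of the known \emph{activation of nonlocality}: I would exhibit a state $\rho$ that is \emph{fully local} (it admits a local-hidden-variable model under every choice of local measurements, i.e. $\rho\in\operatorname{LHV}(\mc{N})$ for all $\mc{N}$), yet whose $k$-fold tensor power $\rho^{\otimes k}$, read with Alice holding $A_1\cdots A_k$ and Bob holding $B_1\cdots B_k$, is Bell-nonlocal and produces correlations from which secret key can be certified device-independently. The argument then has two halves: a vanishing upper bound on the single-copy rate, and a strictly positive lower bound on the $k$-copy rate.

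For the first half I would show that full locality forces $K^{iid}_{DI}(\rho)=0$. Fix any $\mc{M}$; since $\rho\in\operatorname{LHV}(\mc{M})$, the statistics $p(a,b|x,y)=\Tr[M^x_a\otimes M^y_b\,\rho]$ decompose as $\sum_\lambda p(\lambda)\,p(a|x,\lambda)\,p(b|y,\lambda)$. These same statistics are reproduced by the classical (hence separable) state $\sigma^{\textrm{L}}=\sum_\lambda p(\lambda)\,\op{\lambda}_A\otimes\op{\lambda}_B$ together with the POVMs $N^x_a=\sum_\lambda p(a|x,\lambda)\op{\lambda}$ and $N^y_b=\sum_\lambda p(b|y,\lambda)\op{\lambda}$, so that $(\sigma^{\textrm{L}},\mc{N})=(\rho,\mc{M})$ with $\sigma^{\textrm{L}}\in\SEP(A\!:\!B)\subset\operatorname{LHV}$. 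Taking $p=1$ in the relative-entropy-of-entanglement upper bound on $K^{iid}_{DI,dev}$ proved above then gives $K^{iid}_{DI}(\rho,\mc{M})\le \inf_{(\sigma^{\textrm{L}},\mc{N})=(\rho^{\textrm{L}},\mc{M})}E_R(\sigma^{\textrm{L}})=0$, since $E_R$ vanishes on separable states. As this holds for every $\mc{M}$, the supremum defining $K^{iid}_{DI}(\rho)$ is $0$.

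For the second half I would invoke superactivation of quantum nonlocality (Palazuelos and subsequent work): there exist fully local states whose tensor powers violate a Bell inequality. The crucial point is that the supremum over measurements defining $K^{iid}_{DI}(\rho^{\otimes k})$ ranges over \emph{collective} measurements on $A_1\cdots A_k$ and $B_1\cdots B_k$, which can reveal nonlocality invisible to any single-copy measurement. Choosing such a $\rho$ (or a suitable member of the family) then violates the $p=1$ hypothesis exploited above, so the argument for a vanishing rate no longer applies.

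The main obstacle is upgrading \emph{nonlocality} of $\rho^{\otimes k}$ to a \emph{strictly positive} key rate, since nonlocal correlations need not be key-distillable. I would close this gap by selecting the activation example so that the collective correlations are manifestly key-useful: for instance, so that an accessible pair of inputs realises a CHSH value above the threshold at which the Pironio \emph{et al.} lower bound certifies positive device-independent key, or so that the statistics self-test a two-qubit maximally entangled component with sufficiently small QBER. Verifying that a concrete superactivation construction lands inside this key-positive regime is the delicate quantitative step; it may require boosting the number of copies $k$ or tailoring the local state, rather than relying on an off-the-shelf example whose activated violation sits only marginally above the classical bound.
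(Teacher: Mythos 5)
First, a point of comparison: the paper offers \emph{no} proof of this Observation at all --- it is a deliberately hedged remark (``there \emph{may} exist states''), asserting a possibility rather than an existence theorem. Your attempt is therefore more ambitious than the paper's claim, and its first half is sound and fully within the paper's toolkit: for a state admitting an LHV model under all local measurements, every device $(\rho,\mc{M})$ is reproduced by a separable strategy (your classical flag state $\sigma^{\textrm{L}}$ with diagonal POVMs is exactly the standard quantum realization of a local distribution), so the bound \eqref{eq:di-bound10} together with $K_{DD}\leq E_R$ --- equivalently the paper's decomposition theorem with $p=1$ --- gives $K^{iid}_{DI}(\rho,\mc{M})=0$ for every $\mc{M}$, hence $K^{iid}_{DI}(\rho)=0$.

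The genuine gap is the one you yourself flag and then do not close: passing from superactivated \emph{nonlocality} of $\rho^{\otimes k}$ to a strictly positive $K^{iid}_{DI}(\rho^{\otimes k})$. Known superactivation constructions (Palazuelos' isotropic-state example built on the Khot--Vishnoi game) live in scenarios with very many inputs and outputs, nowhere near the CHSH-with-low-QBER regime where lower bounds of the type in~\cite{Pironio2009} certify positive device-independent key; no published result places an activated correlation inside a key-positive region, and your suggestions (tailoring the state, boosting $k$, self-testing a maximally entangled component) are programmatic rather than verified. Moreover, Bell violation alone provably does not suffice for key: Ref.~\cite{FBL+21} exhibits nonlocal devices with zero key under standard protocols, and positivity of $K^{iid}_{DI}$ requires surviving the infimum over \emph{all} devices compatible with the $k$-copy statistics, not merely exhibiting one nonlocal realization. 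As written, your argument establishes only that the single-copy rate vanishes and that the mechanism forcing it to vanish (a separable simulating strategy) is unavailable for $\rho^{\otimes k}$; that is consistent with $K^{iid}_{DI}(\rho^{\otimes k})>0$ but does not prove it. Ironically, what you have actually proved matches the paper's cautious phrasing --- the possibility --- while the stronger existence statement you aim for remains open pending the quantitative step.
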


A bipartite state $\rho$ that is positive under partial transposition (PPT), i.e., $\rho^{\operatorname{\Gamma}}\geq 0$, is called a PPT state. Similarly, a point-to-point channel $\Lambda$ is called PPT channel if $\Lambda\circ\vartheta$ is also a quantum channel~\cite{R99,R01}, where $\vartheta$ is a partial transposition map, i.e., $\vartheta(\rho)=\rho^{\Gamma}$. There exists bipartite entangled states which are PPT~\cite{P96,S00,HHH98}. However, all PPT states are useless for the task of entanglement distillation via LOCC even if they are entangled~\cite{HHH98,R99, R01}.

A direct consequence of the Lemma~\ref{lem:di-state} is the following corollary (see~\cite{CFH20} for the proof argument made for $\leq$).
\begin{corollary}
For any bipartite state $\rho$ that is PPT, we have $K^{iid}_{DI}(\rho)= K^{iid}_{DI}(\rho^{\operatorname{\Gamma}})$.
\end{corollary}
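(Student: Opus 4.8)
The plan is to reduce the whole statement to Lemma~\ref{lem:di-state} by exhibiting, for a PPT state $\rho$, a local measurement that reproduces \emph{on} $\rho$ exactly the statistics that an arbitrary $\mc{M}$ produces \emph{on} $\rho^{\Gamma}$. First I would observe that the PPT hypothesis is precisely what guarantees that $\rho^{\Gamma}$ is a legitimate density operator, so that $K^{iid}_{DI}(\rho^{\Gamma})$ (defined via Definition~\ref{def:di-state}) even makes sense; for a non-PPT $\rho$ the right-hand side would be vacuous.

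The key computational step is the Hilbert--Schmidt self-adjointness of partial transposition, $\Tr[X\,Y^{\Gamma}]=\Tr[X^{\Gamma}\,Y]$. Applying it with $X=M^x_a\otimes M^y_b$ and $Y=\rho$ gives
\begin{equation}
\Tr[(M^x_a\otimes M^y_b)\,\rho^{\Gamma}]=\Tr[(M^x_a\otimes (M^y_b)^{\t})\,\rho].
\end{equation}
Since the transpose of a positive operator is positive and $\sum_b (M^y_b)^{\t}=\mathbbm{1}$, the collection $\tilde{\mc{M}}\coloneqq\{M^x_a\otimes (M^y_b)^{\t}\}$ is again a valid \emph{local} POVM (transposition acts only on Bob's effects, so the product/local structure is preserved). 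Hence $(\rho^{\Gamma},\mc{M})$ and $(\rho,\tilde{\mc{M}})$ generate identical conditional distributions, i.e.\ $(\rho^{\Gamma},\mc{M})=(\rho,\tilde{\mc{M}})$ in the paper's notation, and Lemma~\ref{lem:di-state} immediately yields $K^{iid}_{DI}(\rho^{\Gamma},\mc{M})=K^{iid}_{DI}(\rho,\tilde{\mc{M}})$ for every $\mc{M}$.

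To finish I would take the supremum over measurements. Because transposition of Bob's effects is an involution on the set of local POVMs, the assignment $\mc{M}\mapsto\tilde{\mc{M}}$ is a bijection of that set onto itself, so
\begin{equation}
K^{iid}_{DI}(\rho^{\Gamma})=\sup_{\mc{M}}K^{iid}_{DI}(\rho,\tilde{\mc{M}})=\sup_{\mc{N}}K^{iid}_{DI}(\rho,\mc{N})=K^{iid}_{DI}(\rho),
\end{equation}
which already delivers equality in a single pass. Alternatively, one may invoke only the $\leq$ direction established in \cite{CFH20} and upgrade it to equality by the involution $(\rho^{\Gamma})^{\Gamma}=\rho$: since $\rho^{\Gamma}$ is itself PPT, the $\leq$ bound applied to $\rho^{\Gamma}$ gives the reverse inequality.

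I do not anticipate a genuine obstacle here; the difficulty is organizational rather than technical. The two points that must be checked with care are (i) that transposing Bob's POVM preserves both positivity and the completeness relation, so the transported measurement is still local and admissible for a DI protocol, and (ii) that Lemma~\ref{lem:di-state}, which promotes \emph{equality of statistics} to \emph{equality of DI key rate}, applies verbatim because the two devices share the full distribution $\{p(a,b|x,y)\}$ (and hence the same $\omega$ and $P_{err}$). Once these are in place, preservation of the supremum under the POVM bijection is automatic.
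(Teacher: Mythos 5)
Your proposal is correct and follows essentially the same route as the paper, which derives the corollary directly from Lemma~\ref{lem:di-state} together with the standard trick (from \cite{CFH20}) of moving the partial transpose from the state onto Bob's POVM effects via $\Tr[(M^x_a\otimes M^y_b)\,\rho^{\Gamma}]=\Tr[(M^x_a\otimes (M^y_b)^{\t})\,\rho]$, and then upgrading the $\leq$ direction to equality using $(\rho^{\Gamma})^{\Gamma}=\rho$ and the fact that $\rho^{\Gamma}$ is again PPT. Your explicit checks that transposition preserves positivity, completeness, and the local product structure, and that the map $\mc{M}\mapsto\tilde{\mc{M}}$ is an involution on local POVMs (so the supremum in Definition~\ref{def:di-state} is preserved), are exactly the points the paper leaves implicit.
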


As discussed above, a large class of device-independent quantum key distribution protocols, rely on the Bell violation and the QBER of the device $p(a,b|x,y)$. For such protocols, we can define the device-independent key distillation protocol as:

\begin{definition}[cf.~\cite{AL20}]
The maximal device-independent key rate of a device $(\rho,\mc{M})$ with $iid$ behavior, Bell violation $\omega(\rho,\mc{M})$ and QBER $P_{err}(\rho,\mc{M})$, is defined as
\begin{align}
    & K^{iid}_{DI}(\rho,\mc{M},\omega, P_{err})\nonumber \\
    &\quad\coloneqq \inf_{\varepsilon>0}\limsup_{n\to \infty} \sup_{\hat{\mc{P}}} \inf_{\eqref{eq:k-2},\eqref{eq:k-3}} \kappa^\varepsilon_{n}(\hat{\mc{P}}(\mc{N},\sigma)^{\otimes n}).
\end{align}
\end{definition}

The set of protocols are restricted for $K^{iid}_{DI}(\rho,\mc{M})$. We then obtain, from the definitions that, $K^{iid}_{DI}(\rho,\mc{M},\omega, P_{err})\leq K^{iid}_{DI}(\rho,\mc{M})$.

The maximal device-independent key distillation rate $K_{DI}(\rho, \mathcal{M})$ for the device $(\rho,\mc{M})$ is upper bounded by the maximal device-dependent key distillation rate $K_{DD}(\sigma)$ for all $(\sigma,\mc{N})$ such that $(\sigma,\mc{N})= (\rho,\mc{M})$ (see~\cite{CFH20}), i.e.,
\begin{equation}
    K_{DI}(\rho, \mathcal{M})\leq \inf_{(\sigma,\mc{N})= (\rho,\mc{M})} K_{DD}(\sigma)\label{eq:di-bound10}
\end{equation}
The device-dependent key distillation rate $K_{DD}(\rho)$ is the maximum secret key (against quantum eavesdropper) that can be distilled between two parties using local operations and classical communication (LOCC) \cite{RennerThesis,HHHO09}. 

\begin{observation}
For entanglement measures $Ent$ which upper bounds the maximum device-dependent key distillation rate, i.e., $K_{DD}(\rho)\leq Ent (\rho)$ for a density operator $\rho$, we have,
\begin{align}\label{eq:di-e-1}
     K_{DI}(\rho, \mathcal{M}) &\leq \inf_{(\sigma,\mc{N})= (\rho,\mc{M})} K_{DD}(\sigma) \\
&     \leq  \inf_{(\sigma,\mc{N})= (\rho,\mc{M})} Ent (\sigma).
\label{eq:FCHbound}
\end{align}
Some well-known entanglement measures that upper bound $K_{DD}(\rho)$ are the relative entropy of entanglement $E_{R}(\rho)$, regularized relative entropy of entanglement $E^{\infty}_{R}(\rho)$, squashed entanglement $E_{sq}(\rho)$ (see~\cite{HHHO09,CW04, C06}).
\end{observation}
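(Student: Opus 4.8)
\textit{Proof proposal.} The plan is to establish the two inequalities separately: the first by invoking the device-dependent reduction already recorded in Eq.~\eqref{eq:di-bound10}, and the second by pushing the assumed pointwise bound $K_{DD}\leq Ent$ through the infimum over compatible devices.

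For the first inequality I would simply appeal to Eq.~\eqref{eq:di-bound10}, namely $K_{DI}(\rho,\mc{M})\leq \inf_{(\sigma,\mc{N})=(\rho,\mc{M})}K_{DD}(\sigma)$. The underlying reason is that, by the definition of $K^{iid}_{DI}$, any viable device-independent protocol on $(\rho,\mc{M})$ must succeed on \emph{every} device $(\sigma,\mc{N})$ producing the same statistics; for each fixed compatible $\sigma$ the protocol is in particular an LOCC key-distillation scheme acting on $\sigma$, so its rate cannot exceed $K_{DD}(\sigma)$, and taking the infimum over all compatible $\sigma$ gives the bound. Since this is already attributed to Ref.~\cite{CFH20}, I would not reprove it here.

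For the second inequality I would use the hypothesis $K_{DD}(\sigma)\leq Ent(\sigma)$, valid for every density operator $\sigma$, in particular for every $(\sigma,\mc{N})$ with $(\sigma,\mc{N})=(\rho,\mc{M})$. A pointwise inequality of functions is preserved under taking infima over a common nonempty index set — here the set of compatible devices, which always contains $(\rho,\mc{M})$ itself — so
\begin{equation}
    \inf_{(\sigma,\mc{N})=(\rho,\mc{M})}K_{DD}(\sigma)\leq \inf_{(\sigma,\mc{N})=(\rho,\mc{M})}Ent(\sigma).
\end{equation}
Chaining the two displays yields the claimed bound Eq.~\eqref{eq:FCHbound}.

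To justify the concluding remark I would cite that $E_{R}$, $E^{\infty}_{R}$, and $E_{sq}$ each satisfy the hypothesis $K_{DD}\leq Ent$: the relative-entropy-of-entanglement bound follows from Ref.~\cite{HHHO09}, and the squashed-entanglement bound from Refs.~\cite{CW04,C06}. The only genuinely non-trivial ingredient is Eq.~\eqref{eq:di-bound10} itself; everything else is monotonicity of the infimum, so there is no real obstacle. The single point I would check with care is that the pointwise bound $K_{DD}\leq Ent$ holds uniformly over the compatible set even when $A,B$ are infinite-dimensional, as is permitted here, since the cited upper bounds are most commonly stated in the finite-dimensional setting.
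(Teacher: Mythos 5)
Your proposal is correct and matches the paper's (implicit) argument exactly: the Observation is stated without a separate proof precisely because it is the chaining of Eq.~\eqref{eq:di-bound10} with the pointwise hypothesis $K_{DD}(\sigma)\leq Ent(\sigma)$, using monotonicity of the infimum over the nonempty set of compatible devices, just as you do. Your closing caveat about verifying the entanglement-measure bounds in infinite dimensions is a sensible extra check but not something the paper addresses either.
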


We develop on the results of Ref.~\cite{CFH20} going beyond states that are positive under partial transposition (PPT). We arrive at the following main result which holds for general protocols, rather than  specific ones considered in Ref.~\cite{AL20,FBL+21}. 

We now state the upper bound for the DI-key distillation in form of the following theorem:
\begin{theorem}
The maximal device-independent key rate $K^{iid}_{DI}(\rho,\mc{M})$ of a device $(\rho,\mc{M})$ is upper bounded as
\begin{align}
   K^{iid}_{DI}(\rho,\mc{M})&\leq (1-p)\inf_{ (\sigma^{\textrm{NL}},\mathcal{N})= (\rho^{\textrm{NL}},\mathcal{M})}E_{R}(\sigma^{NL})+\nonumber\\p &\inf_{(\sigma^{\textrm{L}},\mathcal{N})= (\rho^{\textrm{L}},\mathcal{M})}E_{R}(\sigma^{L}),
   \end{align}
   where 
   \begin{equation}
      \rho=(1-p)\rho^{\textrm{NL}}+p\rho^{\textrm{L}}  
   \end{equation}
such that $\sigma^{\textrm{L}}, \rho^{\textrm{L}}\in \operatorname{LHV}$ and $\sigma^{\textrm{NL}}, \rho^{\textrm{NL}}\notin \operatorname{LHV}$.
\end{theorem}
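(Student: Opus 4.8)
The plan is to reduce the device-independent quantity to a device-dependent one through the master bound already in hand, and then to exploit two structural facts: the \emph{linearity} of the Born-rule statistics in the state, and the \emph{convexity} of the relative entropy of entanglement. Concretely, Eq.~\eqref{eq:di-bound10} together with the Observation around Eq.~\eqref{eq:FCHbound} (that $E_R$ upper bounds the device-dependent rate $K_{DD}$) gives the master inequality
\begin{equation}
K^{iid}_{DI}(\rho,\mc{M})\leq \inf_{(\sigma,\mc{N})=(\rho,\mc{M})} E_R(\sigma),
\end{equation}
so it suffices to exhibit \emph{one} economical compatible device $(\sigma,\mc{N})$ and to bound $E_R(\sigma)$ from above.

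First I would fix the given splitting $\rho=(1-p)\rho^{\textrm{NL}}+p\rho^{\textrm{L}}$ and pick arbitrary feasible building blocks read out by a common measurement $\mc{N}$: a non-local state $\sigma^{\textrm{NL}}\notin\operatorname{LHV}$ with $(\sigma^{\textrm{NL}},\mc{N})=(\rho^{\textrm{NL}},\mc{M})$ and a local state $\sigma^{\textrm{L}}\in\operatorname{LHV}$ with $(\sigma^{\textrm{L}},\mc{N})=(\rho^{\textrm{L}},\mc{M})$. I would then assemble $\sigma=(1-p)\sigma^{\textrm{NL}}+p\sigma^{\textrm{L}}$. Because $p(a,b|x,y)=\Tr[N^x_a\otimes N^y_b\,\sigma]$ is affine in the state, the statistics of $(\sigma,\mc{N})$ are exactly the convex combination of those of the two blocks, hence equal to the statistics of $\rho$ under $\mc{M}$; that is, $(\sigma,\mc{N})=(\rho,\mc{M})$, so $\sigma$ is admissible in the master bound.

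Next I would invoke convexity of the relative entropy of entanglement to split the estimate,
\begin{equation}
E_R(\sigma)\leq (1-p)\,E_R(\sigma^{\textrm{NL}})+p\,E_R(\sigma^{\textrm{L}}),
\end{equation}
which, combined with the master bound, gives $K^{iid}_{DI}(\rho,\mc{M})\leq (1-p)E_R(\sigma^{\textrm{NL}})+p\,E_R(\sigma^{\textrm{L}})$. As the two blocks were arbitrary within their feasible sets, I would finish by taking the infimum over $\sigma^{\textrm{NL}}$ and over $\sigma^{\textrm{L}}$ separately to obtain the stated bound. The CHSH corollary, Eq.~\eqref{eqn-upper-CHSH}, would then follow by noting that for CHSH the local part can be matched by a \emph{separable} $\sigma^{\textrm{L}}$, so its infimum collapses to zero.

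The hard part will be the consistency of the mixture step. Reading the shared symbol $\mc{N}$ literally, both blocks are measured by the same $\mc{N}$, so linearity makes the compatibility $(\sigma,\mc{N})=(\rho,\mc{M})$ automatic; what needs care is establishing that a common $\mc{N}$ simultaneously supporting both constrained feasible sets exists, and that under it the constituents genuinely keep their memberships $\sigma^{\textrm{NL}}\notin\operatorname{LHV}$ and $\sigma^{\textrm{L}}\in\operatorname{LHV}$, so that the convex splitting faithfully mirrors the local/non-local partition of $\rho$ rather than merely producing a valid but structureless decomposition.
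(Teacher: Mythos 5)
Your overall architecture matches the paper's: reduce to the device-dependent rate via Eq.~\eqref{eq:di-bound10}, bound $K_{DD}$ by $E_R$, exhibit a single compatible device assembled from the two blocks, and infimize at the end. But there is a genuine gap exactly where you suspect it, and it is not a consistency check you can wave through. By insisting that the non-local and local blocks be read out by a \emph{common} measurement $\mc{N}$, you couple the two feasible sets: the theorem's right-hand side has \emph{independent} infima, over $(\sigma^{\textrm{NL}},\mc{N})=(\rho^{\textrm{NL}},\mc{M})$ and over $(\sigma^{\textrm{L}},\mc{N})=(\rho^{\textrm{L}},\mc{M})$, whose (near-)optimizers may live on Hilbert spaces of different dimension and use entirely unrelated POVMs. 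Under a shared $\mc{N}$ your final step---``take the infimum over $\sigma^{\textrm{NL}}$ and over $\sigma^{\textrm{L}}$ separately''---is invalid: the feasible set is a joint one, and the coupled double infimum you actually control is in general strictly larger than the sum of the two independent infima claimed in the theorem. Moreover, a common $\mc{N}$ simultaneously compatible with both blocks need not exist at the optimizers, so the bound your argument proves is a weaker statement than the one asserted.

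The paper closes precisely this hole with a flag construction, which is the one idea missing from your proposal. Take arbitrary $(\sigma^{\textrm{NL}},N^x_a\otimes N^y_b)=(\rho^{\textrm{NL}},\mc{M})$ and $(\sigma^{\textrm{L}},\Lambda^x_a\otimes\Lambda^y_b)=(\rho^{\textrm{L}},\mc{M})$, with no relation whatsoever between the two measurements, and set $\sigma_{ABR_1R_2}=(1-p)\,\sigma^{\textrm{NL}}\otimes\op{0}_{R_1}\otimes\op{0}_{R_2}+p\,\sigma^{\textrm{L}}\otimes\op{1}_{R_1}\otimes\op{1}_{R_2}$ together with the flagged POVMs $\tilde{\Lambda}^x_a=N^x_a\otimes\op{0}_{R_1}+\Lambda^x_a\otimes\op{1}_{R_1}$ (and similarly for Bob). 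Since the flags are locally orthogonal on both sides, this single device reproduces the statistics of $(\rho,\mc{M})$, and the relative entropy of entanglement decomposes exactly over the blocks, $E_R(\sigma)=(1-p)E_R(\sigma^{\textrm{NL}})+p\,E_R(\sigma^{\textrm{L}})$. Your plain-mixture convexity inequality would suffice for the estimate itself, but convexity alone cannot merge \emph{different} measurements into one compatible device; the flags do. With the blocks now genuinely arbitrary and independent, the two infima decouple and the stated bound follows. Incidentally, your worry about preserving the memberships $\sigma^{\textrm{L}}\in\operatorname{LHV}$ and $\sigma^{\textrm{NL}}\notin\operatorname{LHV}$ after mixing is misplaced: these are constraints defining the feasible sets over which one infimizes, not properties that must be re-derived for the assembled $\sigma$.
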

\begin{proof}
The main idea behind the proof is to construct a new state with flagged local and non local parts, followed by construction of flagged POVM elements. The flagged strategy reproduces the exact statistics as the original strategy. With this construction, we can use the decomposition of relative entropy for flagged state as the sum of the relative entropies for the constituent states. 

Let us begin with the device $\left\{\rho,\mathcal{M}\right\}$ such that 
\begin{equation}
    \rho_{AB}=(1-p)\rho^{\textrm{NL}}+p\rho^{\textrm{L}}
\end{equation}
We have from \eqref{eq:di-bound10} that
\begin{equation}
    K_{DI}(\rho,\mathcal{M})\leq \inf_{(\sigma,\mathcal{N})=(\rho,\mathcal{M})}K_{DD}(\sigma)
\end{equation}
Let us now construct a strategy $\left\{\sigma^{NL},N^x_a\otimes N^y_b\right\}$ such that $(\sigma^{NL},N^x_a\otimes N^y_b)= (\rho^{NL},M^x_a\otimes M^y_b)$. We also construct another strategy $\left(\sigma^{L},\Lambda^x_a\otimes\Lambda^y_b\right)$ such that $(\sigma^{L},\Lambda^x_a \otimes \Lambda^y_b)= (\rho^{L},M^x_a\otimes M^y_b)$. Combining the above deductions, we can define a strategy 
\begin{align}
    \sigma_{ABR_1R_2} &=(1-p)\sigma^{NL}_{AB}\otimes \op{0}_{R_1}\otimes\op{0}_{R_2}\nonumber\\ +&p\sigma_{L}\otimes \op{1}_{R_1}\otimes\op{1}_{R_2}\\
    \tilde{\Lambda}^x_a&=N^x_a\otimes \op{0}_{R_1}+\Lambda^x_a\otimes \op{1}_{R_1}\\
    \tilde{\Lambda}^y_b&=N^y_b\otimes \op{0}_{R_2}+\Lambda^y_b\otimes \op{1}_{R_2}.
\end{align}
We then see $(\rho,\mathcal{N})=(\sigma,\tilde{\Lambda}^x_a \otimes\tilde{\Lambda}^y_b)$. We then obtain
\begin{align}
    K_{DI}(\rho,\mathcal{M})&\leq K_{DD}(\sigma)\\&\leq E_{R}(\sigma)\\
    &=(1-p)E_R(\sigma^{NL})+pE_R(\sigma^L)
\end{align}
Since the strategies $\left(\sigma^L,{\Lambda}^x_a\otimes\Lambda^y_b\right)$ and $\left(\sigma^{NL},{N}^x_a\otimes N^y_b\right)$ are arbitrary strategies, we obtain
\begin{align}
    K_{DI}(\rho,\mathcal{M})& \leq (1-p)\inf_{\left(\rho^{\textrm{NL}},\mathcal{M}\right)= \left(\sigma^{\textrm{NL}},\mathcal{N}\right)}E_{R}(\sigma^{NL})+\nonumber\\p &\inf_{\left(\rho^{\textrm{L}},\mathcal{M}\right)= \left(\sigma^{\textrm{L}},\mathcal{N}\right)}E_{R}(\sigma^{L})
\end{align}
\end{proof}

We now define a CHSH protocol considered in Ref.~\cite{CHSHprotocol2006}. In this protocol, Alice's device has three inputs, i.e. $x\in\left\{0,1,2\right\}$ and Bob's device has two input, i.e., $y\in\left\{0,1\right\}$. Alice and Bob's output are binary, i.e., $a,b \in\left\{0,1\right\}$. This device is then defined by the distribution $\left\{p(a,b|x,y)\right\}$. The protocol uses a coarse graining of the distribution. That is, for each distribution, we define the CHSH violation $\beta$ and the QBER $p(a\neq b|x=0,y=1)$ as $q$. For such protocols $\mathcal{P}_{CHSH}$, the relevant statistics of the device are $\beta$ and $q$. We have the following:
\begin{corollary}\label{cor:CHSH}
    The maximal device-independent key rate $K^{iid}_{DI}(\rho,\mc{M})$ of a device $(\rho,\mc{M})$ under CHSH protocol $\mathcal{P}_{CHSH}$, is upper bounded as
\begin{align}
   K^{iid}_{DI}(\rho,\mc{M},\mathcal{P}_{CHSH})&\leq (1-p)\inf_{(\sigma^{\textrm{NL}},\mathcal{N})= (\rho^{\textrm{NL}},\mathcal{M})}E_{R}(\sigma^{\textrm{NL}}),
   \end{align}
where 
   \begin{equation}
      \rho=(1-p)\rho^{\textrm{NL}}+p\rho^{\textrm{L}},  
   \end{equation}
and $\rho^{\textrm{L}}\in \operatorname{L}$ and $\sigma^{NL},\rho^{NL}\notin \operatorname{L}$ (for respective local measurements and CHSH inequality).
\end{corollary}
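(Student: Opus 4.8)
The plan is to obtain Corollary~\ref{cor:CHSH} as a direct specialization of the preceding Theorem, by taking the local/non-local split to be precisely the ``below-CHSH / above-CHSH'' split and then showing that the second infimum in the Theorem vanishes. First I would invoke the Theorem with the decomposition $\rho=(1-p)\rho^{\textrm{NL}}+p\rho^{\textrm{L}}$ chosen so that $\rho^{\textrm{L}}\in\operatorname{L}$ (satisfies CHSH) and $\rho^{\textrm{NL}}\notin\operatorname{L}$ (violates CHSH). Since the CHSH inequality is a facet of the local polytope, the excerpt's remark that $\operatorname{LHV}(\mc{M})=\operatorname{L}(\mc{B}=CHSH,\mc{M})$ applies, so the hypotheses $\rho^{\textrm{L}}\in\operatorname{LHV}$ and $\rho^{\textrm{NL}}\notin\operatorname{LHV}$ required by the Theorem are met, and the Theorem gives
\begin{align}
   K^{iid}_{DI}(\rho,\mc{M})&\leq (1-p)\inf_{ (\sigma^{\textrm{NL}},\mathcal{N})= (\rho^{\textrm{NL}},\mathcal{M})}E_{R}(\sigma^{\textrm{NL}})\nonumber\\&\quad+p\inf_{(\sigma^{\textrm{L}},\mathcal{N})= (\rho^{\textrm{L}},\mathcal{M})}E_{R}(\sigma^{\textrm{L}}).
\end{align}

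Next I would kill the local term. The key observation is that the second infimum runs over \emph{all} strategies $(\sigma^{\textrm{L}},\mathcal{N})$ reproducing the statistics of the local part $(\rho^{\textrm{L}},\mathcal{M})$, and because $\rho^{\textrm{L}}\in\operatorname{L}$ its correlations admit a local hidden-variable description. The construction used in the Theorem's own proof---building a classical flagged strategy $\sigma_A^\lambda\otimes\sigma_B^\lambda$ indexed by the hidden variable $\lambda$, with classical (diagonal) measurements---produces a \emph{separable} state $\sigma^{\textrm{L}}$ that is compatible with the same local statistics. For any separable $\sigma^{\textrm{L}}$ one has $E_R(\sigma^{\textrm{L}})=0$, so the infimum over compatible strategies is $0$. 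Hence the $p\cdot\inf(\cdots)$ term drops out entirely, leaving exactly the stated bound in terms of the non-local part alone.

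Finally I would restrict to the $\mathcal{P}_{CHSH}$ protocol of Ref.~\cite{CHSHprotocol2006}: here compatibility $(\sigma,\mc{N})=(\rho,\mc{M})$ is interpreted as matching the coarse-grained statistics $(\beta,q)=(\text{CHSH violation},\ p(a\neq b\,|\,x{=}0,y{=}1))$ rather than the full distribution, which only \emph{enlarges} the feasible set over which each infimum is taken and therefore preserves (indeed can only tighten) the upper bound; the labelling of $\operatorname{LHV}$ by $\operatorname{L}$ is then the statement that local states are exactly those with $\beta$ below the CHSH threshold.

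\textbf{Main obstacle.} The only non-routine step is justifying that every local-correlation part $\rho^{\textrm{L}}$ admits a \emph{separable} compatible strategy, so that the local infimum is genuinely zero; this requires the explicit LHV-to-separable construction (classical registers carrying $\lambda$ with commuting diagonal POVMs), which I would lift from the flagging argument in the Theorem's proof rather than re-deriving. A secondary subtlety is checking that passing from full-distribution compatibility to the coarse-grained $(\beta,q)$-compatibility of $\mathcal{P}_{CHSH}$ does not break the separable construction---but since coarse-graining only relaxes the constraints, any full-distribution witness remains a valid $(\beta,q)$-witness, so the zero-valued local infimum carries over unchanged.
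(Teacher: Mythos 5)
Your proof is correct, but it takes a genuinely different route from the paper's. The paper does not zero out the local term via a general LHV-to-separable simulation; instead it exploits the coarse-grained nature of $\mathcal{P}_{CHSH}$ and writes down one explicit attack: the product state $\op{00}$ with $N^1_a=N^2_a=N^1_b=\sigma_z$ and $N^2_b=\sigma_x$, which attains the local boundary value $\beta=2$, together with a randomized choice of $N^0_a$ ($\sigma_x$ with probability $2q$, $\sigma_z$ with probability $1-2q$) that realizes any QBER $q$; since $E_R(\op{00})=0$, the local infimum vanishes. Your route --- reproducing the \emph{full} local distribution by a separable state that carries the hidden variable $\lambda$ in classical registers read out by diagonal measurements --- is, incidentally, not in the proof of the preceding Theorem (which only builds the flagged composite strategy); it is essentially the construction the paper uses later in Remark~\ref{rem:extended_proof}, so your attribution is slightly off even though the content is standard and sound. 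As for what each approach buys: your generic simulation matches the exact statistics of an \emph{arbitrary} local part, so the local term dies already under full-distribution compatibility and for any Bell test --- in particular it covers local parts with CHSH value $\beta^{L}<2$, which the paper's fixed construction (pinned at $\beta=2$) does not literally match without further randomization, so your argument in fact closes a small gap left implicit in the paper. The paper's construction, conversely, is minimal, explicit, and tailored to the $(\beta,q)$ coarse-graining of $\mathcal{P}_{CHSH}$; it certifies concretely that the threshold value $\beta=2$ is compatible with \emph{every} QBER $q$, which is the form needed in the subsequent numerics where the decomposition $\omega=p\,\omega_1+(1-p)\,\omega_2$ with $\omega_2\le 2$ is optimized. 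One step you should spell out rather than assert: transferring the Theorem's bound to $(\beta,q)$-compatibility requires rerunning its flagging argument under the coarse-grained constraint, which goes through because $\omega$ and $P_{err}$ are affine in the distribution --- one line, but it belongs in the proof.
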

\begin{proof}

To see the proof, we construct a strategy $\left\{\rho_{\textrm{SEP}}, N^x_a\otimes N^y_b\right\}$ that reproduces $\beta= 2$ and any value of $q$. For this, choose $\rho_{\textrm{SEP}}=\op{00}$, and choose $N^{1}_a=N^2_a=N^1_b=\sigma_z$, and $N^2_b=\sigma_x$. It is easy to check that this gives us a CHSH violation of 2. For QBER, the appropriate measurements are $N^0_a,N^1_b$. We see that any value $q$ of QBER can be obtained by choosing $N^0_a$ as $\sigma_x$ with probability $2q$ and $\sigma_z$ with probability $1-2q$. Since the relative entropy of entanglement of $\op{00}$ is 0, we have the proof. 
\end{proof}

\begin{remark}
All bound entangled states satisfy CHSH inequality~\cite{Masanes2006} and hence have zero device-independent key rate for CHSH-based protocols. It is interesting to note that there exists bound entangled states from which private key can be distilled~\cite{HHHO09}, however even these states are useless for device-independent secret key distillation for CHSH-based protocols. That the same can be extended to all Bell inequalities is a matter of recently posed conjecture~\cite{AL20}-- revised Peres conjecture.
\end{remark}

\subsection{Numerics for CHSH protocols}
We now plot the results of Corollary
\ref{cor:CHSH}. We first consider the device to be an honest device with the underlying state being an isotropic state. The honest measurements are $M^0_a=\sigma_z, M^1_a=\frac{\sigma_z+\sigma_x}{\sqrt{2}}, M^2_a= \frac{\sigma_z-\sigma_x}{\sqrt{2}}, M^0_b=\sigma_z, M^1_b=\sigma_x$, where $\sigma_x$ and $\sigma_z$ are Pauli-$x$ and Pauli-$z$ operators, respectively. A device having the above honest realization is called a CHSH-based device. Let us now consider the observed CHSH violation to be $\omega$. Then, we can construct an isotropic state  as
\begin{equation}
   \rho^{\nu}= (1-\nu)\op{\Phi^+}+(\nu/4)\mathbbm{1},
\end{equation}
where the parameter $\nu$ is related to the CHSH violation as, $\omega = 2\sqrt{2}(1-\nu)$. We then have
\begin{equation}
    \rho^{\omega}=\frac{\omega}{2\sqrt{2}}\op{\Phi^+{}}+\frac{2\sqrt{2}-\omega}{8\sqrt{2}}I.
\end{equation}
The relative entropy of entanglement for the isotropic state is given by 
\begin{align}
    E_R(\rho^\omega) &= \lambda \log_2 \lambda + (1-\lambda)\log_2 (1-\lambda)+1,\\
    \lambda &= \frac{\omega}{2\sqrt{2}}+\frac{2\sqrt{2}-\omega}{8\sqrt{2}}\\
    &=\frac{3\omega}{8\sqrt{2}}+\frac{1}{4}.
    \end{align}
Now, we express $\rho^\omega$ as
\begin{equation}
    \rho^{\omega} = p\rho^{\omega_1}+(1-p)\rho^{\omega_2},
\end{equation}
where $2\sqrt{2}\geq\omega_1>2$ and $2\geq\omega_2\geq0$ and $\omega=p\omega_1+(1-p)\omega_2$.

We then have the following optimization:
\begin{align}
    K^{iid}_{DI}(\rho^\omega,\mc{M},\mathcal{P}_{CHSH})&=\ \min_{p,\ \omega_1}\ p\ E_R(\rho^{\omega_1}),\label{eq:opt2}\\
    &\omega=p\ \omega_1+(1-p)\ \omega_2,\\
    &\quad0\leq p\leq 1,\\
    &\quad 2\leq\omega_1<2\sqrt{2},\\
    &\quad 0\leq\omega_2\leq2. \label{eq:opt1}
\end{align}
By performing this optimization, we can obtain the bound given in the Figure \ref{DI-QKD-Plots}.

Another approach that we can take is to consider the quantum strategy taken in Ref.~\cite{Pironio2009}. For this strategy the attacking quantum state is 
\begin{align}
    \rho &= \frac{1+C}{2}\op{\Phi^{+}}+ \frac{1-C}{2}\op{\Phi^{-}},\label{eq:opt-state}\\
    \ket{\Phi^{+}}_{AB} &= \frac{1}{\sqrt{2}}\left(\ket{00}+\ket{11}\right),\\
    \ket{\Phi^{-}}_{AB} &= \frac{1}{\sqrt{2}}\left(\ket{00}-\ket{11}\right),\\
    C&=\sqrt{\left(\frac{\omega}{2}\right)^2-1}.
\end{align}
With this strategy we obtain a tighter upper bound as plotted in the Figure \ref{DI-QKD-Plots}. This strategy is particularly interesting as the attacking state does not allow for a decomposition between a local and a non-local part. The fractional bound reduces to relative entropy of entanglement of the state. 
\begin{figure}
    \centering
    \includegraphics[width=\linewidth]{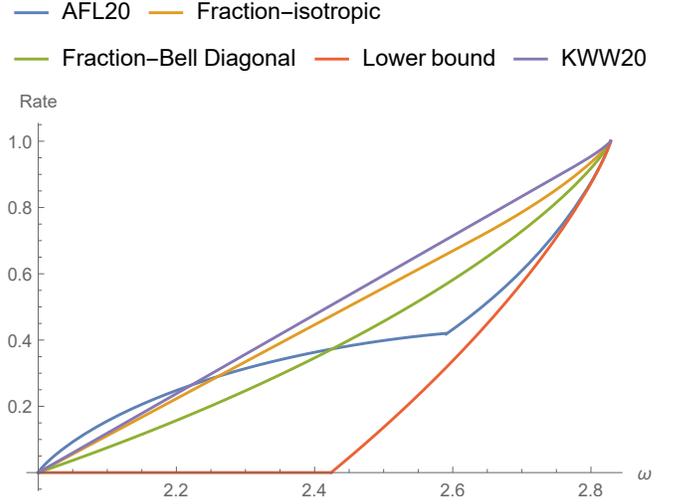}
    \caption{In this plot, we depict the bounds on the amount of DI key that can be obtained from a CHSH-based device. The yellow line corresponds to the upper bounds obtained from optimization given in \eqref{eq:opt2}-\eqref{eq:opt1} and green line corresponds to the upper bound obtained from choosing the attack given in \eqref{eq:opt-state}. The blue line corresponds to the bound obtained in Appendix B of \cite{AL20}. The purple line corresponds to the bound obtained in \cite{KWW20}. The red line corresponds to the lower bounds obtained in \cite{Pironio2009}.}
    \label{DI-QKD-Plots}
\end{figure}

\subsection{Bounds on device-independent key distillation rate through channels}

A simple realistic model of a physical box depicting device-independent secret key generator (assumed to be an honest device from perspective of manufacturer) between allies, is describable by a tuple $(\widetilde{\Omega}, \rho,\mc{M})$. Tuple for device constitutes of measurement setting $\{M^x_a\otimes M^y_b\}$, a source state $\rho_{A'B'}$, and a bipartite quantum distribution channel $\widetilde{\Omega}_{A'B'\to AB}$, where a relay station inputs bipartite quantum state and each output of the channel $\Omega$ is transmitted to designated receiver. The dimensions of the quantum systems $A',B',A,B$ can be arbitrary in general as device can use an arbitrary bipartite quantum channel $\widetilde{\Omega}$. Quantum states from the source undergo quantum dynamical evolution (quantum channels) before they are measured to yield outputs at the ends of Alice and Bob, who are designated parties/allies. Quantum channels can represent noisy transmission via optical fibers, space, etc. or local time-evolution. In general, the bipartite distribution channel $\widetilde{\Omega}_{A'B'\to AB}$ is of the form $\Lambda^{1}_{A^{''}\to A}\otimes\Lambda^{2}_{B''\to B}\circ\Omega_{A'B'\to A''B''}$, where $\Omega_{A'B'\to A''B''}$ allows for the joint operation on the bipartite source state and $\Lambda^{1}_{A^{''}\to A}, \Lambda^{2}_{B^{''}\to B}$ transmit $A'', B''$ to the ends $A,B$ where local measurements (temporally sync between Alice and Bob) take place to yield classical outputs $a,b$ to Alice and Bob, respectively. In general, adversarial manufacturer can design the device such that it can perform any physical actions between the rounds and is only required to provide two pairs of classical input-outputs, a pair to each designated party, while adversary is limited only by the laws of quantum mechanics (which includes no-signaling).

The probability distribution $p_{(\widetilde{\Omega},\rho,\mc{M})}(a,b|x,y)$ associated with an honest device $(\widetilde{\Omega},\rho,\mc{M})$ is given by 
\begin{equation}
    p(a,b|x,y)=\Tr[M^x_a\otimes M^y_b (\widetilde{\Omega}(\rho))].
\end{equation}
An honest device $(\widetilde{\Omega},\rho,\mc{M})$ constituting channels has same characterization/conditions as an honest device $(\widetilde{\Omega}(\rho),\mc{M})$, which is ideal situation where time-evolution or noisy transmission of the source state $\widetilde{\Omega}(\rho)$ before measurement at the ends of Alice and Bob is not considered (see Section~\ref{sec:di-qkd-state}), i.e., in principle $(\widetilde{\Omega},\rho,\mc{M})=(\widetilde{\Omega}(\rho),\mc{M})$. If $\{p(a,b|x,y)\}$ obtained from the devices $(\widetilde{\Omega},\rho,\mc{M})$ and $(\widetilde{\Omega}',\rho',\mc{M}')$ are the same, then we write $(\widetilde{\Omega},\rho,\mc{M}) = (\widetilde{\Omega}',\rho',\mc{M})$. The Hilbert space dimensions associated with systems involved in the device $(\widetilde{\Omega}',\rho',\mc{M}')$ need not be the same as their counterpart systems associated with the honest device $(\widetilde{\Omega},\rho,\mc{M})$. We have
\begin{align}
    (\widetilde{\Omega},\rho,\mc{M}) = (\widetilde{\Omega}',\rho',\mc{M}') 
 & \iff  p_{(\widetilde{\Omega},\rho,\mc{M})}  = p_{(\widetilde{\Omega}',\rho',\mc{M}')},\\
 (\widetilde{\Omega},\rho,\mc{M}) \approx_{\varepsilon} (\widetilde{\Omega}',\rho',\mc{M}')
 & \iff  p_{(\widetilde{\Omega},\rho,\mc{M})} \approx_{\varepsilon} p_{(\widetilde{\Omega}',\rho',\mc{M}')}.
\end{align}

The device-independent secret key agreement protocols can allow $i$-way communication for $i\in\{0,1,2\}$ depending on whether two allies, Alice and Bob, are allowed to perform $i$-way classical communication outside the devices~\cite{CFH20}. This classical communication includes error correction and parameter estimation rounds. A $2$-way LOPC ($LOPC_2$) is a general LOPC channel where both Alice and Bob can send classical communication to each other over authenticated public channel, $1$-way LOPC ($LOPC_1$) is a restricted class of $LOPC$ where only one party is allowed to transmit classical communication to the other (while the other remains barred from sending classical communication), and $0$-way LOPC ($LOPC_0=LO$) is very restricted class of LOPC where both parties can only perform local operations and barred from any classical communication. Therefore $LO\subset LOPC_1\subset LOPC_2$. We note that in practice, there is also need for (classical) communication to agree upon the protocol, and for other purposes like verification/testing. 

Apart from classical communication between Alice and Bob during the key distillation protocol,
there also exists possibility of the classical communication in the device. This classical communication can be based on the inputs from the previous the rounds, that can be used by the device to prepare the source state to be measured in the coming round~\cite{CFH20}. $DI_j$ denotes the devices where the channel $\widetilde{\Omega}$ is \textit{iid}, memory is allowed, and use $j$-way (classical) communication between the input-output rounds for $j\in\{0,1,2\}$. This $j$-way communication can take place either before the inputs are given or after the outputs are obtained. The $DI_j$ devices can share memory locally at Alice and Bob across each round enabling the capability of adversary~\cite{CFH20}.

If an honest device $(\widetilde{\Omega},\rho,\mc{M})$ constituting channels is being used just for a single round (where the bipartite distribution channel $\widetilde{\Omega}$ is called just once) then it is same as an honest device $(\widetilde{\Omega}(\rho),\mc{M})$, which is ideal situation where time-evolution or noisy transmission of the source state $\widetilde{\Omega}(\rho)$ before measurement at the ends of Alice and Bob is not considered (see Section~\ref{sec:di-qkd-state}). That is $(\widetilde{\Omega},\rho,\mc{M})=(\widetilde{\Omega}(\rho),\mc{M})$ for a single round where the device uses channel just once.

\begin{definition}
The device-independent secret key agreement (or private) capacity of the device $(\widetilde{\Omega},\rho,\mc{M})$ assisted with $i$-way communication between allies outside the device and $j$-way communication between the input-output rounds within the device, is given by
\begin{equation}
    \mc{P}^{DI_j}_{i}(\widetilde{\Omega},\rho,\mc{M})\coloneqq \inf_{\varepsilon>0}\limsup_{n\to \infty}\mu^{DI_j,\varepsilon}_{i,n}(\widetilde{\Omega},\rho,\mc{M}),
\end{equation}
where $\mu^{DI_j,\varepsilon}_{i,n}(\widetilde{\Omega},\rho,\mc{M})$ is the maximum key rate optimized over all viable privacy protocols, while also including a minimization over the possible devices $DI_j$ that are compatible with the honest device.
\end{definition}

While these assumptions of restraining adversarial capabilities may drift from appropriate physical model of device-independence, they may provide upper bounds on more capable adversarial models. For the purpose of deriving fundamental limitations, we can accept the trade-off that comes with simplistic assumptions on device-independence protocols. In particular, we can further restrict the adversary such that the device itself is assumed to be \textit{iid}. We define the \textit{iid}-device independent variants $IDI_j$ for $j\in\{0,1,2\}$, where the devices are \textit{iid} and are not allowed memory or communication from one round to the next (e.g., see~\cite{CFH20}).

\begin{definition}
The device-independent secret key agreement (or private) capacity of the device $(\widetilde{\Omega},\rho,\mc{M})$ under the assumption of its \textit{iid} uses assisted with $i$-way communication between allies outside the device and $j$-way communication between the input-output rounds within the device, is given by
\begin{equation}
    \mc{P}^{IDI_j}_{i}(\widetilde{\Omega},\rho,\mc{M})\coloneqq \inf_{\varepsilon>0}\limsup_{n\to \infty}\mu^{IDI_j,\varepsilon}_{i,n}(\widetilde{\Omega},\rho,\mc{M}),
\end{equation}
where $\mu^{IDI_j,\varepsilon}_{i,n}(\rho,\widetilde{\Omega},\mc{M})$ is the maximum key rate optimized over all viable privacy protocols over the \textit{iid} uses of device, while also including a minimization over the possible \textit{iid} devices $IDI_j$ that are compatible with the honest device. We have
\begin{align}
   & \mu^{IDI_j,\varepsilon}_{i,n}(\widetilde{\Omega},\rho,\mc{M}) \nonumber \\
   \coloneqq & \sup_{\hat{P}\in cLOPC_{i}}\inf_{\substack{(\mc{N},\widetilde{\Omega}',\sigma)\in IDI_j\\ (\mc{N},\widetilde{\Omega}',\sigma)\approx_{\varepsilon} (\mc{M},\widetilde{\Omega},\rho)}}\kappa^{\varepsilon}_n(\hat{P},(\mc{N},\widetilde{\Omega}',\sigma)),
\end{align}
where $\kappa^\varepsilon_n$ is the rate of achieved $\varepsilon$-perfect key and classical labels from local classical operations in $\hat{P}\in cLOPC_{i}$ are possessed by the allies (Alice and Bob).
\end{definition}

\begin{definition}
The device-independent capacities $\mc{P}^{DI_j}_{i}$ and $\mc{P}^{IDI_j}_i$ of a bipartite distribution channel $\widetilde{\Omega}$ for the device $DI_{i}$ and $IDI_j$, respectively, are defined as
\begin{align}
\mc{P}^{DI_j}_{i}(\widetilde{\Omega})& \coloneqq \sup_{\rho,\mc{M}}    \mc{P}^{DI_j}_{i}(\widetilde{\Omega},\rho,\mc{M}),\\
\mc{P}^{IDI_j}_{i}(\widetilde{\Omega})& \coloneqq \sup_{\rho,\mc{M}}     \mc{P}^{IDI_j}_{i}(\widetilde{\Omega},\rho,\mc{M}).
\end{align}
\end{definition}
\begin{remark}
We note that 
\begin{equation}
     \mc{P}^{DI_j}_{i}(\widetilde{\Omega},\rho,\mc{M}) \leq  \mc{P}^{IDI_j}_{i}(\widetilde{\Omega},\rho,\mc{M})
\end{equation}
as 
\begin{equation}
    \mu^{DI_j,\varepsilon}_{i,n}(\widetilde{\Omega},\rho,\mc{M})\leq \mu^{IDI_j,\varepsilon}_{i,n}(\widetilde{\Omega},\rho,\mc{M}).
\end{equation}
\end{remark}

Another direct consequence of the \textit{iid} device-independence assumptions is the following lemma.
\begin{lemma}
For any two \textit{iid} devices, $(\widetilde{\Omega},\rho,\mc{M})$ and $(\widetilde{\Omega}',\sigma,\mc{N})$, that are $IDI_j$ and compatible to each other, we have
\begin{align}
    & (\widetilde{\Omega}',\sigma,\mc{N})= (\widetilde{\Omega},\rho,\mc{M}) \nonumber\\ & \quad \implies  \mc{P}^{IDI_j}_{i}(\widetilde{\Omega}',\sigma,\mc{N})=\mc{P}^{IDI_j}_{i}(\widetilde{\Omega},\rho,\mc{M}).
\end{align}
\end{lemma}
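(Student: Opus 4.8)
The plan is to mirror the proof of Lemma~\ref{lem:di-state}: the capacity $\mc{P}^{IDI_j}_{i}$ depends on its reference device only through the $\varepsilon$-compatibility constraint appearing in the inner infimum, and this constraint is a function of the observed probability distribution alone.

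First I would unfold the definitions and isolate where the reference device enters. Recall
\begin{equation}
    \mc{P}^{IDI_j}_{i}(\widetilde{\Omega},\rho,\mc{M})= \inf_{\varepsilon>0}\limsup_{n\to \infty}\mu^{IDI_j,\varepsilon}_{i,n}(\widetilde{\Omega},\rho,\mc{M}).
\end{equation}
In the definition of $\mu^{IDI_j,\varepsilon}_{i,n}(\widetilde{\Omega},\rho,\mc{M})$, the reference device occurs only in the feasibility clause of the inner infimum, namely in the requirement that a candidate device be $\varepsilon$-close to $(\mc{M},\widetilde{\Omega},\rho)$. The remaining ingredients---the supremum over $\hat{P}\in cLOPC_i$, the rate functional $\kappa^\varepsilon_n$, and the membership condition ``candidate $\in IDI_j$''---make no reference to the reference device.

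Next I would invoke that $\approx_\varepsilon$ lives at the level of distributions. By the definitions of this section, a candidate device $\mathscr{D}$ satisfies $\mathscr{D}\approx_\varepsilon(\mc{M},\widetilde{\Omega},\rho)$ iff $d(p_{\mathscr{D}},p_{(\widetilde{\Omega},\rho,\mc{M})})\leq\varepsilon$, where $d$ is the conditional-distribution distance fixed above. The hypothesis $(\widetilde{\Omega}',\sigma,\mc{N})=(\widetilde{\Omega},\rho,\mc{M})$ means precisely $p_{(\widetilde{\Omega}',\sigma,\mc{N})}=p_{(\widetilde{\Omega},\rho,\mc{M})}$. Substituting this equality into the argument of $d$ shows that $d(p_{\mathscr{D}},p_{(\widetilde{\Omega},\rho,\mc{M})})=d(p_{\mathscr{D}},p_{(\widetilde{\Omega}',\sigma,\mc{N})})$ for every candidate $\mathscr{D}$; hence $\mathscr{D}$ is $\varepsilon$-compatible with the first reference iff it is $\varepsilon$-compatible with the second. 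The feasible set of the inner infimum is therefore identical for the two references, for every fixed $\varepsilon>0$ and $n$, which yields $\mu^{IDI_j,\varepsilon}_{i,n}(\widetilde{\Omega},\rho,\mc{M})=\mu^{IDI_j,\varepsilon}_{i,n}(\widetilde{\Omega}',\sigma,\mc{N})$.

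Finally, applying $\limsup_{n\to\infty}$ and then $\inf_{\varepsilon>0}$ to both sides preserves the equality and gives $\mc{P}^{IDI_j}_{i}(\widetilde{\Omega},\rho,\mc{M})=\mc{P}^{IDI_j}_{i}(\widetilde{\Omega}',\sigma,\mc{N})$, as claimed. I expect no substantive obstacle; the single point meriting care is confirming that the $\varepsilon$-ball constraint factors through the distribution, so that exact distribution equality of the two references makes the two optimization problems literally coincide rather than merely approximate one another---this is exactly what the symmetric definition of $d$ on conditional distributions supplies. The same argument, dropping the channel, recovers Lemma~\ref{lem:di-state}.
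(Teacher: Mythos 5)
Your proof is correct and matches the paper's approach: the paper states this lemma without a written proof, calling it a direct consequence of the \textit{iid} device-independence definitions (in the same spirit as Lemma~\ref{lem:di-state}), and your argument---that the reference device enters $\mu^{IDI_j,\varepsilon}_{i,n}$ only through the $\varepsilon$-compatibility constraint, which factors through the observed distribution, so exact distribution equality makes the two optimization problems literally coincide---is precisely the intended direct-consequence argument.
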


The aforementioned definitions are more realistic variant and generalization of the definitions presented in Ref.~\cite{CFH20}. We obtain the definitions in Ref.~\cite{CFH20} if we assume bipartite distribution channel $\widetilde{\Omega}=\id_{A'\to A}\otimes\Lambda_{B'\to B}$ and restrict minimization over compatible devices consisting channel of the form $\widetilde{\Omega}'=\id \otimes\Lambda'$, where $\id$ denotes the identity channel, in the definitions aforementioned.

The main objective of a device-dependent private protocol is to distribute secret keys between two or more trusted allies over quantum channels in the presence of a quantum eavesdropper (e.g., see~\cite{DBWH19}). Traditionally, the secret key agreement between Alice and Bob is over $\id_{A'\to A'}\otimes\Lambda_{A\to B}$, where notion is that Alice transmits a part of composite system in joint state over channel $\Lambda_{A\to B}$ to Bob. It is assumed that the system $A'$ doesn't undergo noisy evolution. Alice and Bob are allowed to use channels $n$ times and make use of adaptive strategy by interleaving each call of channel with $LOPC_i$. In the end of the protocol, Alice and Bob perform $LOPC_i$ to distill secret key between them. However, in practice, even local systems with Alice could undergo noisy quantum evolution. Therefore, we consider quantum/private communication over bipartite quantum distribution channels of the form $\Lambda^1\otimes\Lambda^2$ rather than bipartite quantum distribution channels of the form $\id\otimes\Lambda^2$.

Let us consider a device dependent quantum communication protocol where the goal is for a relay station to transfer prepared entangled state to two allies, Alice and Bob, such that Alice and Bob can distill secret keys between themselves, which is secure from a quantum eavesdropper and the relay station. We can assume that an arbitrary bipartite state $\rho_{A'B'}$ is available at relay station to Charlie. Charlie may operate bipartite quantum distribution channel $\Omega_{A'B'\to A''B''}$ on the state $\rho_{A'B'}$, and is an untrusted party. Charlie then transmits quantum systems $A'', B''$ in the joint state $\Omega(\rho)$ to trusted allies Alice and Bob, respectively, over quantum channels $\Lambda^1_{A"\to A},\Lambda^2_{B''\to B}$. In general, all three parties can make perform $i'$-way LOPC ($LOPC_{i'}$) among themselves in an adaptive strategy, where $i'\in\{0,1,2\}$. Charlie can make $n$ uses of channel $\widetilde{\Omega}=\Lambda^1\otimes\Lambda^2\circ\Omega$ interleaved with $LOPC_{i'}$ between each round (i.e., each call of the channel). At the end of the protocol, the goal is for Alice and Bob to get the state from which secret key is readily accessible upon local measurements. The secret key distillable at the end of Alice and Bob can be $\varepsilon$-close to the ideal secret key.
\begin{definition}
The device-dependent privacy distribution capacity $\mc{R}_{i'}$ over a bipartite quantum distribution channel $\widetilde{\Omega}$ assisted with $i'$-way communication ($LOPC_{i'}$) among Charlie, Alice, and Bob for $i'\in\{0,1,2\}$ is defined as
\begin{equation}
    \mc{R}_{i'}(\widetilde{\Omega})\coloneqq \inf_{\varepsilon>0} \limsup_{n\to\infty}\nu^{\varepsilon}_{i',n}(\widetilde{\Omega}), 
\end{equation}
where $\nu^{\varepsilon}_{i',n}(\widetilde{\Omega})$ is the maximum $\varepsilon$-perfect key rate obtained among all possible repeatable privacy protocols (assisted with $i'$-way communication $LOPC_{i'}$ among the relay station and the trusted allies) that uses channel $\widetilde{\Omega}$ $n$ times.

Device-dependent privacy distribution capacity over $\id_{A'\to A'}\otimes\Lambda_{A\to B}$, where Alice herself is at the relay station and sender to Bob, with $LOPC_i$ assistance reduces to device-dependent $LOPC_i$-assisted private capacity $\mc{P}_i^{DD}(\Lambda)$ over point-to-point quantum channel $\Lambda$.
\end{definition}

\begin{observation}
The device-dependent privacy distribution capacity $R_i$ over a bipartite quantum distribution channel $\widetilde{\Omega}=\Lambda^1\otimes\Lambda^2$ is upper bounded by the device-dependent private capacity over point-to-point channels $\Lambda^1$ and $\Lambda^2$, i.e.,
\begin{equation}
    \mc{R}_{i}(\widetilde{\Omega})\leq \min\{\mc{P}_i^{DD}(\Lambda^1), \mc{P}_i^{DD}(\Lambda^2)\}.
\end{equation}
The protocol for the privacy distribution over $\Lambda^1\otimes\Lambda^2$ reduces to secret key agreement protocols over $\Lambda^1$ or $\Lambda^2$ if we assume $\Lambda^2=\id$ with Bob at relay station as sender or $\Lambda^1=\id$ with Alice at the relay station as the sender, respectively. Under such reduction of the protocol, lesser amount of information is leaked to a quantum eavesdropper as part of a noisy evolution becomes noiseless.
\end{observation}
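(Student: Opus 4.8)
The plan is to establish the two one-sided bounds $\mc{R}_i(\widetilde{\Omega})\le \mc{P}_i^{DD}(\Lambda^1)$ and $\mc{R}_i(\widetilde{\Omega})\le \mc{P}_i^{DD}(\Lambda^2)$ separately and then take the minimum. Since the distribution channel is $\widetilde{\Omega}=\Lambda^1\otimes\Lambda^2$, the two bounds are related by the symmetry $\Lambda^1\leftrightarrow\Lambda^2$ (equivalently, Alice$\leftrightarrow$Bob relabelling of the relay-assisted task), so it suffices to prove the first one. The argument factors into two moves: first a monotonicity step that replaces Bob's arm by a noiseless channel, and then an identification of the resulting quantity with the point-to-point private capacity already recorded in the definition of $\mc{R}_{i'}$ above.

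For the monotonicity step I would show that $\mc{R}_i(\Lambda^1\otimes\Lambda^2)\le \mc{R}_i(\Lambda^1\otimes\id)$. The key feature of the privacy-distribution task is that the environment (purifying system) of each distribution channel is handed to the eavesdropper. Fix any repeatable privacy protocol achieving a rate arbitrarily close to $\mc{R}_i(\Lambda^1\otimes\Lambda^2)$. The same protocol can be executed over $\Lambda^1\otimes\id$ if Bob, upon receiving his share noiselessly, applies $\Lambda^2$ himself as a local operation while retaining its isometric dilation locally instead of surrendering the environment to Eve. This produces exactly the same reduced state on Alice and Bob but strictly less side information for the eavesdropper, so the output is at least as secure and the same $\varepsilon$-perfect key rate is achievable. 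Passing to the $\limsup_{n\to\infty}$ and $\inf_{\varepsilon>0}$ in the definition of $\mc{R}_i$ then yields the claimed inequality. This is the content of the statement that ``lesser amount of information is leaked to a quantum eavesdropper as part of a noisy evolution becomes noiseless.''

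For the reduction step I would invoke the reduction recorded in the definition of $\mc{R}_{i'}$: over a distribution channel with one noiseless arm, the relay station and the party on that arm can be merged into a single sender, so that $\mc{R}_i$ collapses to the point-to-point private capacity $\mc{P}_i^{DD}$ of the remaining channel. Applying this to $\Lambda^1\otimes\id$, with Bob placed at the relay station as the sender to Alice, gives $\mc{R}_i(\Lambda^1\otimes\id)=\mc{P}_i^{DD}(\Lambda^1)$. Chaining the two steps produces $\mc{R}_i(\Lambda^1\otimes\Lambda^2)\le \mc{P}_i^{DD}(\Lambda^1)$; the mirror argument (noiseless Alice arm, Alice as sender to Bob) gives $\mc{R}_i(\Lambda^1\otimes\Lambda^2)\le \mc{P}_i^{DD}(\Lambda^2)$, and taking the minimum completes the proof.

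The step I expect to be the main obstacle is the monotonicity claim $\mc{R}_i(\Lambda^1\otimes\Lambda^2)\le \mc{R}_i(\Lambda^1\otimes\id)$, precisely because it hinges on the adversarial model. One must verify that absorbing $\Lambda^2$ into Bob's local operations is legitimate within the allowed $LOPC_i$ protocol class, and that the eavesdropper in the noiseless-arm scenario genuinely receives no more than in the $\Lambda^2$-arm scenario, i.e.\ that keeping the dilation of $\Lambda^2$ with Bob can only decrease Eve's information about the distilled key. Everything else reduces to bookkeeping around the definitions.
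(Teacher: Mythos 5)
Your proposal is correct and follows exactly the argument the paper intends: the Observation carries no separate proof beyond the remark embedded in its statement, which is precisely your two steps --- the reduction to the point-to-point scenario recorded in the definition of $\mc{R}_{i'}$ (one arm set to $\id$, the corresponding ally merged with the relay), and the monotonicity obtained because Eve loses the environment of the arm that Bob (or Alice) now simulates locally. Your formalization of the monotonicity step via local application of $\Lambda^2$ with retained dilation, which is a legitimate local operation within $LOPC_i$ and leaves the Alice--Bob marginal unchanged while only tracing out part of Eve's system, is exactly the content of the paper's phrase that ``lesser amount of information is leaked to a quantum eavesdropper as part of a noisy evolution becomes noiseless.''
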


\begin{lemma}
The device-independent secret key capacities $\mc{P}_{i}^{DI_j}$ and $\mc{P}_{i}^{IDI_j}$ of a device $(\widetilde{\Omega},\rho,\mc{M})$ in terms of optimized privacy distribution capacity $\mc{R}_{\max\{i,j\}}$ are
\begin{align}
    \mc{P}_{i}^{DI_j}(\widetilde{\Omega},\rho,\mc{M})& \leq \inf_{\substack{(\widetilde{\Omega}',\sigma,\mc{N})\in IDI_j\\ (\widetilde{\Omega}',\sigma,\mc{N})= (\widetilde{\Omega},\rho,\mc{M})}}\mc{R}_{\max\{i,j\}}(\widetilde{\Omega}'),\\
     \mc{P}_{i}^{IDI_j}(\widetilde{\Omega},\rho,\mc{M}) & \leq \inf_{\substack{(\widetilde{\Omega}',\sigma,\mc{N})\in IDI_j\\ (\widetilde{\Omega}',\sigma,\mc{N})= (\widetilde{\Omega},\rho,\mc{M})}}\mc{R}_{\max\{i,j\}}(\widetilde{\Omega}').
\end{align}
\end{lemma}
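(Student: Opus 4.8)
The plan is to reduce both displayed inequalities to a single one and then to recast a device-independent protocol as a device-dependent privacy-distribution protocol over the underlying channel. Since the infima on the right-hand sides range over the \emph{same} set of exactly compatible $IDI_j$ devices, and since the preceding remark already gives $\mc{P}_i^{DI_j}(\widetilde{\Omega},\rho,\mc{M})\leq \mc{P}_i^{IDI_j}(\widetilde{\Omega},\rho,\mc{M})$, it suffices to prove the bound for the $IDI_j$ capacity; the $DI_j$ bound then follows by composing the two inequalities.

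For the $IDI_j$ bound I would first exploit the order of quantifiers in $\mu^{IDI_j,\varepsilon}_{i,n}$. Fix an arbitrary compatible device $(\widetilde{\Omega}',\sigma,\mc{N})\in IDI_j$ with $(\widetilde{\Omega}',\sigma,\mc{N})=(\widetilde{\Omega},\rho,\mc{M})$; this device is in particular $\varepsilon$-close for every $\varepsilon>0$, so the innermost infimum over $\varepsilon$-compatible devices is at most the value attained at this fixed device, giving $\mu^{IDI_j,\varepsilon}_{i,n}(\widetilde{\Omega},\rho,\mc{M})\leq \sup_{\hat{P}\in cLOPC_{i}}\kappa^{\varepsilon}_n(\hat{P},(\widetilde{\Omega}',\sigma,\mc{N}))$.

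The crucial step is to bound this supremum by a device-dependent privacy-distribution rate over $\widetilde{\Omega}'$. I would argue that executing a DI protocol on the fixed device---calling the \textit{iid} channel $\widetilde{\Omega}'$, performing the local measurements $\mc{N}$, and post-processing the classical outcomes by $\hat{P}\in cLOPC_{i}$---is a particular privacy-distribution protocol over $\widetilde{\Omega}'$. Indeed, the measurements $\mc{N}$ are local quantum operations whose classical outputs are then handled by classical LOPC, so the entire strategy is an LOPC-assisted protocol acting on the outputs of $\widetilde{\Omega}'$; moreover the \textit{iid} (parallel) use of $\widetilde{\Omega}'$ in the $IDI_j$ setting is a special case of the adaptive $n$-fold use permitted in the definition of $\mc{R}$. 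The only accounting to be done is on the communication budget: the external $i$-way communication of the allies and the $j$-way communication internal to the device must both be realized by the single $i'$-way communication allowed among Charlie, Alice and Bob in $\mc{R}_{i'}$, which is why $i'=\max\{i,j\}$ suffices and is taken. Hence $\sup_{\hat{P}}\kappa^{\varepsilon}_n(\hat{P},(\widetilde{\Omega}',\sigma,\mc{N}))\leq \nu^{\varepsilon}_{\max\{i,j\},n}(\widetilde{\Omega}')$ for the same security parameter.

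Finally I would pass to the limits, taking $\limsup_{n\to\infty}$ and then $\inf_{\varepsilon>0}$ on both sides to obtain $\mc{P}_i^{IDI_j}(\widetilde{\Omega},\rho,\mc{M})\leq \mc{R}_{\max\{i,j\}}(\widetilde{\Omega}')$ for this fixed compatible device. As $(\widetilde{\Omega}',\sigma,\mc{N})$ was an arbitrary exactly compatible $IDI_j$ device, taking the infimum over all such devices on the right gives the claim. The step I expect to be the main obstacle is precisely the reduction in the third paragraph: one must verify carefully that the untrusted measurement together with the internal $j$-way and external $i$-way communication genuinely embeds into the $\mc{R}_{\max\{i,j\}}$ protocol class---rather than requiring a strictly larger communication budget---adapting the argument of Ref.~\cite{CFH20} from the restricted $\id\otimes\Lambda$ setting to the general bipartite distribution channel $\widetilde{\Omega}=\Lambda^1\otimes\Lambda^2\circ\Omega$.
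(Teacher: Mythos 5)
Your proposal is correct and takes essentially the same approach the paper intends: the paper in fact omits this proof, noting only that the arguments mirror those of \cite[Eqs.~(47) and (49)]{CFH20}, and your reconstruction---narrowing the innermost infimum to an exactly compatible $IDI_j$ device (which is $\varepsilon$-close for every $\varepsilon>0$), embedding the untrusted measurements plus $cLOPC_i$ post-processing plus internal $j$-way communication into the privacy-distribution protocol class with $\max\{i,j\}$-way LOPC so that $\sup_{\hat{P}}\kappa^{\varepsilon}_n(\hat{P},(\widetilde{\Omega}',\sigma,\mc{N}))\leq\nu^{\varepsilon}_{\max\{i,j\},n}(\widetilde{\Omega}')$, and only then passing to $\limsup_{n\to\infty}$, $\inf_{\varepsilon>0}$, and the infimum over compatible devices---is precisely that argument adapted to the general bipartite distribution channel. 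Your reduction of the $DI_j$ inequality to the $IDI_j$ one via the paper's preceding Remark, $\mu^{DI_j,\varepsilon}_{i,n}\leq\mu^{IDI_j,\varepsilon}_{i,n}$, is likewise sound.
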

We omit the proof of the above lemma as proof arguments are similar to the proof of \cite[Eqs.~(47) and (49)]{CFH20}.

\begin{lemma}
 The device-independent secret key capacity $\mc{P}^{IDI_j}_{i}$ of an honest device $(\widetilde{\Omega},\rho,\mc{M})$ when $\widetilde{\Omega}_{A'B'\to AB}=\Lambda^1_{A'\to A}\otimes\Lambda^2_{B'\to B}$ is upper bounded by
 \begin{align}
  & \mc{P}_{i}^{IDI_j}(\widetilde{\Omega},\rho,\mc{M}) \nonumber\\ 
  \leq & \inf_{\substack{(\widetilde{\Omega}',\sigma,\mc{N})\in IDI_j\\ (\widetilde{\Omega}',\sigma,\mc{N})= (\widetilde{\Omega},\rho,\mc{M})}} \min\{\substack{\mc{P}^{DD}_{\max\{i,j\}}({\Lambda'}^1), \mc{P}^{DD}_{\max\{i,j\}}({\Lambda'}^2)}\},
 \end{align} 
 where $\widetilde{\Omega}'={\Lambda'}^1\otimes{\Lambda'}^2$. It follows that for all $i,j\in\{0,1,2\}$, we have
 \begin{equation}
      \mc{P}_{i}^{IDI_j}(\widetilde{\Omega},\rho,\mc{M})\leq \min\{\mc{P}^{DD}_2(\Lambda^1),\mc{P}^{DD}_2(\Lambda^2)\}.
 \end{equation}
 Furthermore, if the point-to-point channels $\Lambda^1, \Lambda^2$ are PPT channels, we have
 \begin{equation}
   \mc{P}_{2}^{IDI_j}(\widetilde{\Omega},\rho,\mc{M}) \leq \min\{\substack{\mc{P}^{DD}_2(\Lambda^1), \mc{P}^{DD}_2(\Lambda^2),\\ \mc{P}^{DD}_2(\vartheta\circ\Lambda^1),\mc{P}^{DD}_2(\vartheta\circ\Lambda^2)}\},
 \end{equation}
 where $\vartheta$ is a partial transposition map, i.e., $\vartheta(\rho)=\rho^{\Gamma}$.
\end{lemma}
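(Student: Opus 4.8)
The plan is to obtain the three displayed inequalities one after another by assembling results already established above: the preceding Lemma, which bounds $\mc{P}^{IDI_j}_{i}$ of a device by an infimum over compatible \textit{iid} channels of the distribution capacity $\mc{R}_{\max\{i,j\}}$, and the Observation, which controls $\mc{R}$ of a product distribution channel by the two point-to-point private capacities. The only genuinely new ingredient is a partial-transposition construction, analogous to the one behind the earlier Corollary for PPT states, used to enlarge the feasible set of the infimum when the local channels are PPT; this is the step I expect to demand the most care.

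For the first inequality I would start from the preceding Lemma, namely
\begin{equation*}
\mc{P}_{i}^{IDI_j}(\widetilde{\Omega},\rho,\mc{M}) \leq \inf_{\substack{(\widetilde{\Omega}',\sigma,\mc{N})\in IDI_j\\ (\widetilde{\Omega}',\sigma,\mc{N})= (\widetilde{\Omega},\rho,\mc{M})}} \mc{R}_{\max\{i,j\}}(\widetilde{\Omega}').
\end{equation*}
For every feasible $\widetilde{\Omega}'={\Lambda'}^1\otimes{\Lambda'}^2$, the Observation applied with communication index $\max\{i,j\}$ gives $\mc{R}_{\max\{i,j\}}(\widetilde{\Omega}')\leq\min\{\mc{P}^{DD}_{\max\{i,j\}}({\Lambda'}^1),\mc{P}^{DD}_{\max\{i,j\}}({\Lambda'}^2)\}$. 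Substituting this bound under the infimum reproduces the first claimed inequality verbatim.

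The second inequality follows by restricting the infimum. The honest device $(\widetilde{\Omega},\rho,\mc{M})$ with $\widetilde{\Omega}=\Lambda^1\otimes\Lambda^2$ is itself \textit{iid} and trivially compatible with itself, hence a feasible point of the infimum (using no inter-round communication it lies in $IDI_0\subseteq IDI_j$). Evaluating the right-hand side of the first inequality there bounds it by $\min\{\mc{P}^{DD}_{\max\{i,j\}}(\Lambda^1),\mc{P}^{DD}_{\max\{i,j\}}(\Lambda^2)\}$, and monotonicity of the private capacity in the allowed communication, which follows from $LO\subseteq LOPC_1\subseteq LOPC_2$, yields $\mc{P}^{DD}_{\max\{i,j\}}\leq\mc{P}^{DD}_2$ for all $i,j\in\{0,1,2\}$, hence the stated $\min\{\mc{P}^{DD}_2(\Lambda^1),\mc{P}^{DD}_2(\Lambda^2)\}$.

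For the third inequality I would set $i=2$, so $\max\{i,j\}=2$, and exhibit two extra compatible devices. On the $B$ side, consider the device with distribution channel $\Lambda^1\otimes(\vartheta\circ\Lambda^2)$, source $\rho_{A'B'}$, and measurements $M^x_a\otimes(M^y_b)^{\t}$; its output state is $(\Lambda^1\otimes(\vartheta\circ\Lambda^2))(\rho_{A'B'})=\vartheta(\rho_{AB})$, and since partial transposition preserves the Hilbert--Schmidt pairing one has $\Tr[M^x_a\otimes(M^y_b)^{\t}\,\vartheta(\rho_{AB})]=\Tr[M^x_a\otimes M^y_b\,\rho_{AB}]=p(a,b|x,y)$, so this device reproduces the honest statistics exactly. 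It is a legitimate \textit{iid} device precisely because $\Lambda^2$ being PPT makes $\vartheta\circ\Lambda^2$ a quantum channel while $(M^y_b)^{\t}$ remains a valid POVM, so it is an admissible point of the infimum from the first inequality and contributes the term $\mc{P}^{DD}_2(\vartheta\circ\Lambda^2)$; the symmetric construction transposing on the $A$ side contributes $\mc{P}^{DD}_2(\vartheta\circ\Lambda^1)$. Together with the two terms from the second inequality these give the four-way minimum. The step requiring most care is exactly this construction: one must check simultaneously that it is statistics-preserving, that $\vartheta\circ\Lambda$ is completely positive (true exactly under the PPT hypothesis), and that it still belongs to the \textit{iid} class $IDI_j$ so that it is genuinely admissible; the remainder is a monotone assembly of the previously proved Lemma and Observation.
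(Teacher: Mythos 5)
Your proof is correct and follows exactly the route the paper intends: the paper states this lemma without spelling out a proof, relying implicitly on the assembly you perform --- the preceding Lemma $\mc{P}^{IDI_j}_{i}\leq \inf \mc{R}_{\max\{i,j\}}(\widetilde{\Omega}')$, the Observation $\mc{R}_{i}(\Lambda^1\otimes\Lambda^2)\leq\min\{\mc{P}^{DD}_i(\Lambda^1),\mc{P}^{DD}_i(\Lambda^2)\}$, monotonicity of $\mc{P}^{DD}$ in the communication index, and the partial-transposition compatibility trick (the channel analogue of the earlier corollary $K^{iid}_{DI}(\rho)=K^{iid}_{DI}(\rho^{\Gamma})$ for PPT states). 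Your explicit checks --- that $(M^y_b)^{\t}$ remains a POVM, that $\vartheta\circ\Lambda$ is a channel under the paper's PPT hypothesis (equivalent to $\Lambda\circ\vartheta$ being one, via conjugate Kraus operators), and that self-adjointness of $\vartheta$ under the Hilbert--Schmidt pairing preserves the statistics so the transposed device is a legitimate point of the infimum --- supply precisely the details the paper leaves implicit.
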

We now restrict to the case of tele-covariant channels which can be defined as in Refs.~\cite{Bennett1996, Horodecki1991} and first employed in Ref.~ \cite{Pirandola2017} in the context of secret key agreement protocols over point-to-point quantum channels.
\begin{definition}
Let G be a group with unitary representation $g\rightarrow U^g_A$ on $\mathcal{H}_A$ and $g\rightarrow V^g_B$ on $\mathcal{H}_B$. A quantum channel $\Lambda_{A\to B}$ is covariant with respect to the unitary group $\left\{\mathcal{U}_g\right\}_{g\in\mathcal{G}}$ if 
\begin{equation}\label{eq:cov}
 V^g_B\Lambda(\cdot)V^{g\dagger}_B=\Lambda \left(U^g_A(\cdot)U^{g\dagger}_A\right) 
\end{equation}
If a quantum channel $\Lambda_{A\to B}$ satisfies~\eqref{eq:cov} such that the unitary group $G$ is a one-design, i.e.,
\begin{equation}
    \frac{1}{|G|}\sum_{g\in G}U^g_A\rho_AU^{g\dagger}_A=\frac{1}{|A|}\mathbbm{1}_A\qquad \forall \rho_A,
\end{equation}
then $\Lambda_{A\to B}$ is said to be tele-covariant.
\end{definition}

A channel that is covariant with respect to one-design unitaries can be simulated via LOCC and the Choi state of the channel as a shared resource state \cite{Chiribella2009}. That is, 
\begin{equation}
    \Lambda_{A\rightarrow B}(\rho_{A}) = \mc{L}_{AA'\rightarrow B}\left(\Phi^{\Lambda}_{A'B}\otimes \rho_A\right),
\end{equation}
where $\mc{L}_{AA'\to B}$ is an LOCC channel, with the classical communication being from $A$ to $B$ and $\Phi^{\Lambda}_{A'B}\coloneqq \Lambda(\Phi^+_{A'A})$ is the Choi state of the channel. The above equation informally implies that any quantum communication via the channel $\Lambda$ is equivalent to sharing the Choi state $\Phi^{\Lambda}$ followed by local operations and classical communication. 
The Lemma below follows from the observation that the private capacity of tele-covariant channels are upper bounded by the relative entropy of entanglement of the Choi state of the channel \cite{Pirandola2017}.
\begin{lemma}
Consider a tele-covariant distribution channel $\widetilde{\Omega}_{A'B'\to AB}=\Lambda^1_{A'\to A}\otimes\Lambda^2_{B'\to B}$, where both point-to-point channels $\Lambda^1$ and $\Lambda^2$ are tele-covariant. The device-independent secret key capacity $\mc{P}^{IDI_j}_{i}$ of an honest device $(\widetilde{\Omega},\rho,\mc{M})$ with such a tele-covariant distribution channel $\widetilde{\Omega}_{A'B'\to AB}=\Lambda^1_{A'\to A}\otimes\Lambda^2_{B'\to B}$ is upper bounded by
 \begin{align} 
  & \mc{P}_{i}^{IDI_j}(\widetilde{\Omega},\rho,\mc{M}) \nonumber\\ 
  \leq & \inf_{\substack{(\widetilde{\Omega}',\sigma,\mc{N})\in IDI_j\\ (\widetilde{\Omega}',\sigma,\mc{N})= (\widetilde{\Omega},\rho,\mc{M})}} \min\{{E_R(\Phi^{\Lambda'_1}),E_R(\Phi^{\Lambda'_2})}\},\label{lem:choi-states-channels}
 \end{align} 
 where $\widetilde{\Omega}'={\Lambda'}^1\otimes{\Lambda'}^2$ and $\Phi^{\Lambda'_1}$ and $\Phi^{\Lambda'_2}$ are the Choi states of the channels $\Lambda'^1$ and $\Lambda'^2$, respectively.
\end{lemma}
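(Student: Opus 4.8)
The plan is to derive \eqref{lem:choi-states-channels} by specializing the immediately preceding lemma and then bounding each factor channel's private capacity by the Choi-state relative entropy of entanglement, via the single-channel result of Ref.~\cite{Pirandola2017}. First I would start from the preceding lemma, which for a device with $\widetilde{\Omega}=\Lambda^1\otimes\Lambda^2$ already gives
\begin{equation}
\mc{P}_{i}^{IDI_j}(\widetilde{\Omega},\rho,\mc{M})\leq \inf_{\substack{(\widetilde{\Omega}',\sigma,\mc{N})\in IDI_j\\ (\widetilde{\Omega}',\sigma,\mc{N})= (\widetilde{\Omega},\rho,\mc{M})}}\min\{\mc{P}^{DD}_{\max\{i,j\}}({\Lambda'}^1),\mc{P}^{DD}_{\max\{i,j\}}({\Lambda'}^2)\},
\end{equation}
with $\widetilde{\Omega}'={\Lambda'}^1\otimes{\Lambda'}^2$, so it remains only to replace each $\mc{P}^{DD}_{\max\{i,j\}}({\Lambda'}^k)$ by $E_R(\Phi^{\Lambda'_k})$.

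For that replacement I would use three ingredients. Since $\max\{i,j\}\leq 2$ and the device-dependent private capacity is nondecreasing as the class of assisting communication is enlarged, $\mc{P}^{DD}_{\max\{i,j\}}({\Lambda'}^k)\leq \mc{P}^{DD}_{2}({\Lambda'}^k)$. Next I would pass to the LOCC-simulation identity recalled just above the lemma, namely that a tele-covariant channel satisfies $\Lambda(\cdot)=\mc{L}\!\left(\Phi^{\Lambda}\otimes(\cdot)\right)$ for an LOCC channel $\mc{L}$; this simulability is exactly the hypothesis of the teleportation-stretching bound of Ref.~\cite{Pirandola2017}, which yields $\mc{P}^{DD}_{2}({\Lambda'}^k)\leq E_R(\Phi^{\Lambda'_k})$. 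Because this last step is valid only when ${\Lambda'}^k$ is itself tele-covariant, I would apply it only along compatible devices whose factor channels ${\Lambda'}^1,{\Lambda'}^2$ are tele-covariant, a class that is nonempty since it contains the honest channels $\Lambda^1,\Lambda^2$.

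Putting the pieces together, for every such tele-covariant compatible device one has $\min\{\mc{P}^{DD}_{\max\{i,j\}}({\Lambda'}^1),\mc{P}^{DD}_{\max\{i,j\}}({\Lambda'}^2)\}\leq \min\{E_R(\Phi^{\Lambda'_1}),E_R(\Phi^{\Lambda'_2})\}$. The infimum in the preceding lemma runs over the larger set of all compatible devices and is therefore no larger than the infimum over the tele-covariant compatible ones; chaining these inequalities gives $\mc{P}_{i}^{IDI_j}(\widetilde{\Omega},\rho,\mc{M})\leq \inf \min\{E_R(\Phi^{\Lambda'_1}),E_R(\Phi^{\Lambda'_2})\}$, which is \eqref{lem:choi-states-channels}.

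I expect the main obstacle to be bookkeeping of the quantifiers rather than any analytic estimate: one must restrict the Choi-state bound to compatible devices with tele-covariant factors, since for an arbitrary channel $E_R(\Phi^{\Lambda'})$ need not dominate $\mc{P}^{DD}({\Lambda'})$, while still keeping the infimum over a class rich enough to match the stated right-hand side, and one must verify that each substitution $\mc{P}^{DD}_{\max\{i,j\}}\to\mc{P}^{DD}_2\to E_R(\Phi)$ preserves the inequality's direction. The analytic core, the stretching estimate $\mc{P}^{DD}_2(\Lambda)\leq E_R(\Phi^{\Lambda})$ for simulable channels, is imported directly from Ref.~\cite{Pirandola2017}.
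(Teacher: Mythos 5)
Your proposal is correct and takes essentially the same route as the paper, which proves this lemma exactly by combining the immediately preceding lemma with the observation (stated just above it) that the private capacity of a tele-covariant channel is upper bounded by the relative entropy of entanglement of its Choi state, i.e.\ $\mc{P}^{DD}_2(\Lambda')\leq E_R(\Phi^{\Lambda'})$ from Ref.~\cite{Pirandola2017}. Your explicit restriction of the Choi-state substitution to compatible devices whose factor channels are themselves tele-covariant is a quantifier subtlety the paper leaves implicit, and it is indeed needed for that substitution to be valid.
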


\subsubsection{Some practical prototypes}\label{sup:prototypes}
Let us now focus on three widely considered noise models for the qubit systems: dephasing channel $\mc{P}^p$, depolarizing channel $\mc{D}^p$, and erasure channel $\mc{E}^p$. The actions of these tele-covariant channels on the density operators $\rho$ of a qubit system are given as
\begin{enumerate}
\item Dephasing channel:
    $\mathcal{P}^p(\rho)\equiv (1-p)\rho+p \sigma_Z\rho \sigma_Z$, where $\sigma_Z$ is the Pauli-Z operator.
    \item Depolarizing channel: $\mathcal{D}^{p} (\rho)\equiv (1-p)\rho+p\frac{1}{2}\mathbbm{1}$.
    \item Erasure channel: $\mathcal{E}^{p} (\rho)\equiv p\rho+(1-p)\op{e}$, where $\op{e}$ is the erasure symbol, orthonormal to the support of the input state. 
\end{enumerate}

Let us now consider that Alice and Bob carry out the CHSH protocol over the channel $\mathrm{id}_{A'\rightarrow A}\otimes \Lambda_{B'\rightarrow B}$. As discussed above, for CHSH protocols, the relevant statistics are the CHSH violation $\omega(\rho,\mathcal{M})$ and QBER $P_{\textrm{err}}(\rho,\mathcal{M})$. Thus, for CHSH protocols, the infimum in \eqref{lem:choi-states-channels} is reduced to the tuples $\left(\tilde{\Omega'},\sigma,\mathcal{N}\right)$ that satisfy the CHSH statistics. 

We first consider the honest device $ \left(\rho,\textrm{id}_{A'\rightarrow A}\otimes\mathcal{D}^p_{B'\rightarrow B},\mathcal{M}\right)$. Let $\mathcal{M}$ be an arbitrary but honest measurements considered in the CHSH protocol. Then, the CHSH violation observed by Alice and Bob is given as

\begin{align}
    \omega(\mathcal{D}^p(\rho),\mathcal{M})&=  (1-p)\omega(\rho,\mathcal{M})+p\omega(\frac{1}{4}\mathbbm{1},\mathcal{M})\\
    &= (1-p)\omega(\rho,\mathcal{M}).\\
    &\leq (1-p) \max_{\mathcal{M}}\omega(\rho,\mathcal{M})\equiv \omega^{\star}
\end{align}
Here, $\omega(\mathcal{D}^p(\rho),\mathcal{M})$ corresponds to the CHSH violation observed from the statistics obtained if the state $\mathcal{D}^{p}(\rho)$ is measured by $\mathcal{M}$. The second equality follows from $\omega(\mathcal{M},\frac{1}{4}\mathbbm{1})=0$.  
 The QBER associated with the state is $(1-p)P_{\textrm{err}}(\rho,\mathcal{M})+\frac{1}{2} p$. We thus obtain the limits on the statistics that Alice and Bob would obtain on carrying out a CHSH protocol with the device $\left(\textrm{id}_{A'\rightarrow A}\otimes\mathcal{D}^p_{B'\rightarrow B}, \rho,\mathcal{M}\right)$. 
 
 We now construct a strategy $\left(\textrm{id}_{A'\rightarrow A}\otimes\mathcal{P}^q_{B'\rightarrow B}, \Phi,\mathcal{M}\right)$, where $\Phi$ is a maximally entangled state. By appropriate choice of parameter $q$ and the measurements $M^0_a,M^1_a,M^1_b,M^2_b$, we can replicate the CHSH violation and $P_{\textrm{err}}$ obtained from carrying out a CHSH protocol with the device $\left(\textrm{id}_{A'\rightarrow A}\otimes\mathcal{D}^p_{B'\rightarrow B}, \rho,\mathcal{M}\right)$. The noise of the dephasing channel $q$ is chosen as $\frac{1-C}{2}$, where $C=\sqrt{(\frac{\omega^*}{2})^2-1}$. The measurements are given as $M^{1,2}_a=\frac{\sigma_z}{\sqrt{1+C^2}}\pm \frac{C}{\sqrt{1+C^2}}\sigma_x$, $M^1_b=\sigma_z$, $M^2_b=\sigma_x$. With this strategy, we obtain the Bell violation $\omega^{\star}$. For replicating the statistics of QBER, with prob $P_{\textrm{err}}$, Alice chooses $A_0=\sigma_z$, else she randomly chooses a bit. This strategy has been previously used in Ref.~\cite{Pironio2009} to show tightness of the obtained lower bounds for one-way CHSH protocols. With this strategy, we can replicate the statistics obtained from CHSH protocols performed over depolarizing channels. 

Combining the above observations with \eqref{lem:choi-states-channels}, we then obtain the following: 
\begin{align}
     \mc{P}_{i}^{IDI_j}(\textrm{id}_A\otimes \mathcal{D}^p)
  &\leq \min\left\{E_R(\mathcal{P}^{\frac{1-C}{2}}\left(\Phi\right)),E_R(\mathcal{D}^p(\Phi))\right\}\\
  & \leq \min\left\{1-H\left(\frac{1-C}{2}\right),1-H\left(\frac{3p}{4}\right)\right\}\label{eq:upper-bound-dep1}
\end{align}
We also see that the maximum CHSH violation $\omega^{\star}$ obtained from the depolarizing channel with noise $p$ is $(1-p)2\sqrt{2}$. Substituting $C=\sqrt{\left(\frac{(1-p)2\sqrt{2}}{2}\right)^2-1}$ in \eqref{eq:upper-bound-dep1}, we obtain
\begin{multline}
     \mc{P}_{i}^{IDI_j}(\textrm{id}_A\otimes \mathcal{D}^p)\\
  \leq \min\left\{1-H\left(\frac{1}{2} (1 - \sqrt{1 - 4 p + 2 p^2})\right),1-H(3p/4)\right\}.
\end{multline}
We thus obtain that for CHSH protocols, the DI secret key capacity of depolarizing channels is strictly less than the 
private capacity of the depolarizing channel as can be seen in Figure~\ref{fig:channel_capacities}. Here, we have used the upper bounds on the two-way (LOCC-assisted) device-dependent secret-key-agreement capacities for depolarizing and dephasing qubit channels obtained in Ref.~\cite{Pirandola2017}.

Next, let us consider the erasure channel. Let $\mathcal{M}$ be a set of four measurements. The CHSH violation observed is 
\begin{align}
    \omega(\mathcal{E}^p(\rho),\mathcal{M})&= (1-p)\omega(\rho,\mathcal{M})+p\omega(\op{e},\mathcal{M})\\
    &= (1-p)\omega(\rho,\mathcal{M}).\\
    &\leq (1-p) \max_{\mathcal{M}}\omega(\rho,\mathcal{M})\equiv \omega^{\star}
\end{align}
We thus see that the upper bound on CHSH violation by erasure channel of noise $p$ and the depolarizing channel $p$ are exactly the same. However, the QBER obtained by carrying out the CHSH protocol across the erasure channel and depolarizing channel can be different. We then observe that any value of QBER can be observed by changing the measurement setting with the dephasing channel. For erasure channels the QBER is one, so we choose the measurements settings such that the QBER obtained from dephasing channel is also one. We thus obtain 
\begin{align}
     &\mc{P}_{i}^{IDI_j}(\textrm{id}_A\otimes \mathcal{E}^p)\\
  &\leq \min\left\{E_R(\mathcal{P}^{\frac{1-C}{2}}\left(\Phi\right)),E_R(\mathcal{E}^p(\Phi))\right\}\\
  & \leq \min\left\{1-H\left(\frac{1-C}{2}\right),1-p\right\}\label{eq:upper-bound-dep}\\
   &\leq \min\left\{1-H\left(\frac{1}{2} (1 - \sqrt{1 - 4 p + 2 p^2})\right),1-p\right\}.
\end{align}

That is, for CHSH protocols, the DI secret key capacity of erasure channels is strictly less than the private capacity of the erasure channels as can be seen in Figure~\ref{fig:channel_capacities}. 

It is interesting to observe that in the above analyses, the violation of the CHSH inequality had a vital role in limiting the CHSH DI capacity across various example channels. Due to the structure of the dephasing channel, which was the attacking channel, the QBER did not end up influencing the upper bounds.  

\begin{figure}
    \centering
    \includegraphics[width=\linewidth]{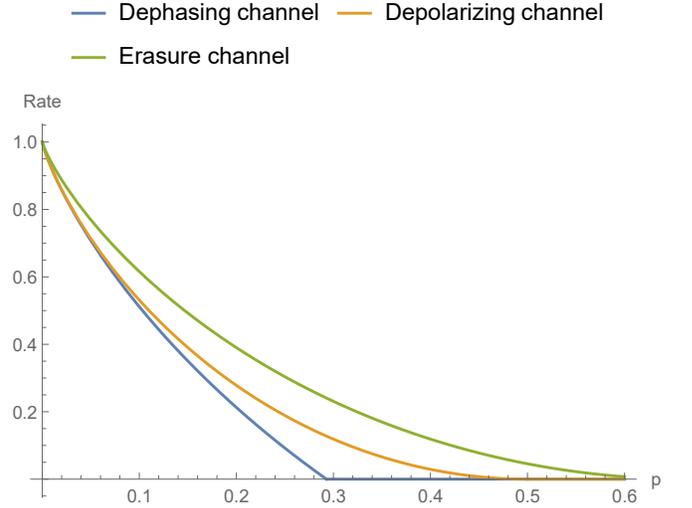}
    \caption{In the above figure, we plot upper bounds on the device-dependent QKD capacities of depolarizing channel (yellow line), dephasing channel (blue line) and erasure channel (green line). We notice that the upper bounds for erasure and dephasing channels are achievable device-dependent QKD rates (capacities). We then notice that for the CHSH protocols, the upper bounds on the DI-QKD capacities of channels is limited by the device-dependent QKD capacity of dephasing channels.}
    \label{fig:channel_capacities}
\end{figure}

\subsection{Upper bound via cc-squashed entanglement}
In what follows, we will define an entanglement measure that takes as an input a bipartite state and a pair $M_{AB}$ of POVMs $\{M_a\}_a$ and $\{M_b\}_b$ which act on systems locally, $M_{AB}\coloneqq M_a\otimes M_b$.
\begin{definition}
A cc-squashed entanglement of a bipartite state $\rho_{AB}$  reads
$E_{sq}(\rho_{AB}, {\mathrm M})$ is defined as follows:
\begin{equation}
    E_{sq}^{cc}(\rho,{\mathrm M}) := \inf_{\Lambda:E\rightarrow E'} I(A:B|E')_{{\mathrm M}_{AB}\otimes \Lambda_E \psi^{\rho}_{ABE}}
    \label{eq:ccsq}
\end{equation}
where ${\mathrm M}_{AB}$ is a pair of POVMs ${\mathrm M}= M_a\otimes M_b$, and $\psi^\rho_{ABE}\coloneqq \op{\psi^\rho}_{ABE}$ is a state purification of $\rho_{AB}$.
\end{definition}
Following Ref.~\cite{CW04}, we observe first
that 
\begin{observation}
\label{obs:equiv}
For a bipartite state $\rho_{AB}$ and
a pair of POVMs ${\mathrm M}= M_a\otimes M_b$, there is 
\begin{equation}
    E_{sq}^{cc}(\rho_{AB},{\mathrm M})=\inf_{\rho_{ABE}={\mbox Ext}(\rho_{AB})} I(A:B|E)_{{\mathrm M}\otimes \id_E \rho_{ABE}}
    \label{eq:ccext}
\end{equation}
where $\mbox{Ext}(\rho_{AB})$ is an arbitrary state extension of $\rho_{AB}$ to system $E$, i.e., $\rho_{ABE}$ is a state such that $\Tr_E[\rho_{ABE}]=\rho_{AB}$.
\end{observation}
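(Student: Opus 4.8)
The plan is to mirror the classical equivalence between the \emph{squashing-channel} picture and the \emph{extension} picture of squashed entanglement from Ref.~\cite{CW04}, and then to check that inserting the fixed local measurement ${\mathrm M}$ on $AB$ does not disturb the argument. The one fact that does all the work is the following: for the \emph{fixed} purification $\psi^\rho_{ABE}$ appearing in \eqref{eq:ccsq} (with $E$ the purifying system), the collection of states $\{(\id_{AB}\otimes\Lambda_E)(\psi^\rho_{ABE})\}$, as $\Lambda_{E\to E'}$ ranges over all channels, coincides exactly with the collection of \emph{all} state extensions $\rho_{ABE'}$ of $\rho_{AB}$. Once this set-equality is in hand, the two infima are seen to be over identical families of post-measurement states and must agree.

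First I would establish this set-equality in two directions. For the easy inclusion, if $\Lambda_{E\to E'}$ is any channel then trace preservation gives $\Tr_{E'}[(\id_{AB}\otimes\Lambda_E)(\psi^\rho_{ABE})]=\Tr_{E}[\psi^\rho_{ABE}]=\rho_{AB}$, so every channel output is indeed an extension. For the converse, given an arbitrary extension $\rho_{ABE'}$ I would purify it to a pure state $\phi_{ABE'R}$; since $\phi_{ABE'R}$ also purifies $\rho_{AB}$, the isometric equivalence of purifications supplies an isometry $W_{E\to E'R}$ with $(\id_{AB}\otimes W)\ket{\psi^\rho}_{ABE}=\ket{\phi}_{ABE'R}$, and then the channel $\Lambda_E(\cdot):=\Tr_R[W(\cdot)W^\dagger]$ satisfies $(\id_{AB}\otimes\Lambda_E)(\psi^\rho_{ABE})=\rho_{ABE'}$. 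Thus every extension arises as a channel applied to the purifying system of the fixed purification.

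Next I would note that the measurement map ${\mathrm M}_{AB}$ acts only on $AB$ and therefore commutes with the action on the purifying/extension system, so $({\mathrm M}_{AB}\otimes\Lambda_E)(\psi^\rho_{ABE})=({\mathrm M}_{AB}\otimes\id_{E'})[(\id_{AB}\otimes\Lambda_E)(\psi^\rho_{ABE})]$. By the previous step, as $\Lambda$ ranges over all channels the right-hand side ranges over precisely the states $({\mathrm M}\otimes\id_E)\rho_{ABE}$ with $\rho_{ABE}$ an arbitrary extension of $\rho_{AB}$; since $I(A:B|E)$ is evaluated on exactly this common family in both \eqref{eq:ccsq} and \eqref{eq:ccext}, the two infima are equal, which is the claim. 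The only genuinely delicate point is the converse inclusion --- realizing an arbitrary extension via a channel on the purifying system --- which rests on the isometric equivalence of purifications and, in the separable-Hilbert-space setting, may require $E$ to be taken of sufficiently large dimension together with a routine limiting argument to accommodate extensions of unbounded size; the remainder is bookkeeping.
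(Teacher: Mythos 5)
Your proposal is correct and follows essentially the same route as the paper's own proof: both directions rest on the isometric equivalence of purifications (every extension arises from the fixed purification via an isometry followed by a partial trace, and conversely every channel output on the purifying system is an extension), with the observation that the measurement ${\mathrm M}$ acts only on $AB$ and so commutes with the operation on $E$. Your additional remark on the infinite-dimensional bookkeeping is a sensible caution but does not change the argument, which the paper likewise attributes to Ref.~\cite{CW04}.
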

\begin{proof}
Following Ref.~\cite{CW04}:
To see that $\eqref{eq:ccsq} \leq \eqref{eq:ccext}$ we note that every
extension can be obtained from the purifying system by an appropriate channel. Indeed, we first note that
$|\psi\>_{ABEE''}$ which purifies $\rho_{ABE}$ is related by an isometry
to any state purification $|\psi^{\rho}\>$ of $\rho_{AB}$. Hence, a channel performing this isometry and tracing out $E''$ generates an  extension $\rho_{ABE}$. Thus, the infimum in \eqref{eq:ccsq} which varies over $\Lambda_E$ acting on $\psi^{\rho}$
can be seen as optimisation over the set of arbitrary extensions measured by ${\mathrm M}$ on $AB$, as it is the case in \eqref{eq:ccext}. Note that we have used the fact that measurements ${\mathrm M}$ are the same in both formulas, and
the extension in \eqref{eq:ccext} is taken before the measurement.

Conversely,  we have also $\eqref{eq:ccsq} \geq \eqref{eq:ccext}$, because application of a channel on system $E$ of a purified state $\psi^{\rho}$, results in an extension $\rho_{ABE'}$.
\end{proof}
Owing to the above observation, we 
can see that the cc-squashed entanglement is convex, as stated in the lemma below:
\begin{lemma}
For a pair of measurements ${\mathrm M}$, two states $\rho_{AB}$ and $\rho_{AB}'$, $0<p<1$, there is 
\begin{equation}
    E_{sq}^{cc}(\bar{\rho}_{AB},{\mathrm M})\leq pE_{sq}^{cc}(\rho_{AB},{\mathrm M})+(1-p)E_{sq}^{cc}(\rho_{AB}',{\mathrm M}),
\end{equation}
where  $\bar{\rho}_{AB}= p\rho_{AB}+(1-p)\rho_{AB}'$.
\label{lem:conv}
\end{lemma}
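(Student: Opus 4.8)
The plan is to argue through the extension formulation of $E_{sq}^{cc}$ established in Observation~\ref{obs:equiv} rather than through purifications, since convexity is cleanest there. The strategy is the familiar ``flag register'' construction used for ordinary squashed entanglement: I glue together near-optimal extensions of the two constituent states by a classical flag, producing an admissible extension of the mixture, and then exploit the fact that conditional mutual information is affine under classical flagging.

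First I would fix $\delta>0$ and, using \eqref{eq:ccext}, choose extensions $\rho_{ABE}$ of $\rho_{AB}$ and $\rho'_{ABE}$ of $\rho'_{AB}$ (padding so that both use a common extension system $E$) whose conditional mutual informations are within $\delta$ of $E_{sq}^{cc}(\rho_{AB},\mathrm M)$ and $E_{sq}^{cc}(\rho'_{AB},\mathrm M)$, respectively. Next I would define a candidate extension of $\bar\rho_{AB}$ on the enlarged system $EF$, with $F$ a classical flag:
\begin{equation}
\bar\rho_{ABEF} := p\,\rho_{ABE}\otimes\op{0}_F + (1-p)\,\rho'_{ABE}\otimes\op{1}_F .
\end{equation}
Tracing out $EF$ returns $p\rho_{AB}+(1-p)\rho'_{AB}=\bar\rho_{AB}$, so $EF$ is an admissible conditioning system in \eqref{eq:ccext}, and hence $E_{sq}^{cc}(\bar\rho_{AB},\mathrm M)\leq I(A:B|EF)$ evaluated on $\mathrm M\otimes\id_{EF}$ applied to $\bar\rho_{ABEF}$.

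The key step is to evaluate that conditional mutual information. Writing $\tau:=(\mathrm M\otimes\id_E)\rho_{ABE}$, $\tau':=(\mathrm M\otimes\id_E)\rho'_{ABE}$, and noting that $\mathrm M$ acts only on $AB$, the measured global state stays block-diagonal in the classical flag $F$. Decomposing $I(A:B|EF)=H(AEF)+H(BEF)-H(ABEF)-H(EF)$ and applying the grouping property of von Neumann entropy for flagged states, each of the four entropies contributes the same flag-entropy term $H(\{p,1-p\})$ with signs $+,+,-,-$, so these cancel and leave
\begin{equation}
I(A:B|EF) = p\,I(A:B|E)_{\tau} + (1-p)\,I(A:B|E)_{\tau'} .
\end{equation}
Combining this with the upper bound from the previous paragraph and the $\delta$-near-optimality of $\tau,\tau'$, and then letting $\delta\to 0$, yields $E_{sq}^{cc}(\bar\rho_{AB},\mathrm M)\leq p\,E_{sq}^{cc}(\rho_{AB},\mathrm M)+(1-p)\,E_{sq}^{cc}(\rho'_{AB},\mathrm M)$.

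I expect the only real obstacle to be bookkeeping rather than anything conceptual: one must accommodate the fact that the infimum in \eqref{eq:ccext} need not be attained (hence the $\delta$-argument and the final limit), and check that padding the two extension systems to a common $E$ leaves their conditional mutual informations unchanged. The affineness identity itself is routine once the block-diagonal structure in $F$ is recognized, so the genuine content is simply constructing the flagged extension and verifying it is admissible.
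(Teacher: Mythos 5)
Your proposal is correct and matches the paper's own proof essentially step for step: both pass to the extension formulation of Observation~\ref{obs:equiv}, build the flagged extension $p\,\rho_{ABE}\otimes\op{0}_F+(1-p)\,\rho'_{ABE}\otimes\op{1}_F$, and use that conditioning on the classical flag $F$ makes the conditional mutual information average over the two branches. The only cosmetic difference is that you handle non-attainment of the infimum via $\delta$-near-optimal extensions, while the paper starts from arbitrary extensions and takes the infimum over them at the end --- the two are equivalent.
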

\begin{proof}
Consider first two tripartite extensions of the form:
\begin{align}
    &\rho_1:= \rho_{ABE}, \\
    &\rho_2:= \rho_{ABE}'
\end{align}
of $\rho_{AB}$ and $\rho_{AB}'$ respectively.
Consider then the state of the following form:
\begin{equation}
\rho_{ABEF}={\mathrm M}\otimes \id_{EF}(p\rho_1\otimes \op{0}_F + (1-p)\rho_2\otimes \op{1}_F).
\end{equation}
Note that it is measured 
 extension of the state $\bar{\rho}_{AB}$. Indeed, by linearity of the partial trace, tracing out  over systems $F$ and $E$ we obtain the $p$-weighted mixture of states $\rho_{AB}$ and $\rho_{AB}'$, measured by ${\mathrm M}$, which is the measured state $\bar{\rho}_{AB}$. 
 
 By Observation \ref{obs:equiv} we can use the definition of $E_{sq}^{cc}$ based on extensions rather than channels. In what follows we go along similar lines to Refs.~\cite{WDH19,CW04}.
 \begin{align}
 &E_{sq}^{cc}(\bar{\rho}_{AB}, {\mathrm M})\nonumber\\&=\inf_{\rho_{ABE}={\mbox Ext}(\bar{\rho}_{AB})} I(A:B|E)_{{\mathrm M}\otimes \id_E \rho_{ABE}}\nonumber\\ &\leq 
 I(A:B|EF)_{\rho_{ABEF}}\\&= p  I(A:B|E)_{M\otimes \id_E \rho_1}+(1-p)
 I(A:B|E)_{M\otimes \id_E \rho_2}
 \label{eq:separate}
 \end{align}
 In the above we first narrow the infimum to a particular extension $\rho_{ABEF}$.
 The equality follows from the fact that system $F$ is classical, and conditioning over such a system yields average value of the conditional mutual information. We also have used linearity of measurement ${\mathrm M}$:
 \begin{align}
     \rho_{ABEL}=& p M\otimes \id_{EF}\rho_1\otimes \op{0} +\nonumber\\
     &\quad (1-p) M\otimes \id_{EF}\rho_2\otimes \op{1},
 \end{align}
 to separate terms in (\ref{eq:separate}).
 Since the extensions $\rho_1$ and $\rho_2$ were arbitrary, we can
 also take infimum over them, obtaining:
 \begin{align}
 &E_{sq}^{cc}(\bar{\rho}_{AB}, {\mathrm M})\leq 
  p \inf_{\rho_{ABE}=\mbox{Ext}(\rho_{AB})} I(A:B|E)_{M\otimes \id_E \rho_{ABE}}+\nonumber\\&(1-p)
 \inf_{\rho'_{ABE}=\mbox{Ext}(\rho_{AB}')}I(A:B|E)_{M\otimes \id_E \rho'_{ABE}}
 \label{eq:separate1}
 \end{align}
Again by Observation \ref{obs:equiv} on the RHS we have
$pE_{sq}^{cc}(\rho_{AB}, {\mathrm M})+(1-p)E_{sq}^{cc}(\rho_{AB}', {\mathrm M})$, hence the assertion follows.
\end{proof}
Following \cite[Theorem~3.5]{CEHHOR} we obtain
\begin{theorem}[\cite{CEHHOR}]
For a bipartite state $\rho$, it's purified state $\psi^{\rho}$, and a pair of measurements ${\mathrm M}$, there is
\begin{equation}
K_{DD}({\mathrm M}\otimes \id \psi^{\rho}) \leq E_{sq}^{cc}(\rho,{\mathrm M})
\end{equation}
\label{thm:Matthias}
\end{theorem}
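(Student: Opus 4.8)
The plan is to read $\mathrm{M}\otimes\id\,\psi^\rho$ as a classical--classical--quantum (ccq) state $\sigma_{ABE}$: measuring the systems $A,B$ of the purification $\psi^\rho_{ABE}$ with $\mathrm{M}=M_a\otimes M_b$ produces classical outcome registers held by Alice ($A$) and Bob ($B$), while the untouched purifying system $E$ becomes the eavesdropper's quantum side information. The quantity $K_{DD}(\sigma_{ABE})$ is then the rate of $\varepsilon$-good secret bits that Alice and Bob can distill from many copies by local operations and public classical communication, secure against an adversary holding $E$ together with all broadcast messages. By Observation~\ref{obs:equiv}, $E_{sq}^{cc}(\rho,\mathrm{M})=\inf_{\rho_{ABE}=\mathrm{Ext}(\rho_{AB})}I(A\!:\!B|E)_{\mathrm{M}\otimes\id_E\rho_{ABE}}$ is exactly the relevant squashed-entanglement-type quantity for this ccq state, the squashing channel $\Lambda_E$ modelling the eavesdropper's optimal processing of the side information $E$. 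So the theorem is the statement that (cc-)squashed entanglement upper bounds distillable key, and I would mirror the argument of Ref.~\cite{CEHHOR}.

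First I would establish the three properties that drive the standard argument. (i) \emph{Monotonicity}: $E_{sq}^{cc}$, viewed as $\inf_\Lambda I(A\!:\!B|E')$ of the ccq state, does not increase under the distillation operations; a local map on $A$ (resp. $B$) cannot increase the conditional mutual information after optimal squashing, and each round of public communication is absorbed into the conditioning system $E'$, so it too cannot raise the optimized value. (ii) \emph{Asymptotic continuity}: a small trace-distance perturbation of the ccq state changes $E_{sq}^{cc}$ by a vanishing dimension-normalized amount. (iii) \emph{Normalization on ideal key}: for an ideal ccq key state of $\ell$ bits---$A,B$ perfectly correlated and in tensor product with $E$---any squashing channel leaves $I(A\!:\!B|E')=I(A\!:\!B)=\ell$, so its value is exactly $\ell$.

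Then I would run the telescoping estimate. Given a protocol that maps $\sigma_{ABE}^{\otimes n}$ to a state $\varepsilon$-close to the ideal key state carrying $\approx nR$ bits, monotonicity bounds the output value by that of $\sigma_{ABE}^{\otimes n}$; subadditivity of squashed entanglement (a standard property, obtained using product squashing channels) bounds the latter by $n\,E_{sq}^{cc}(\rho,\mathrm{M})$; and normalization together with asymptotic continuity bounds the output value below by $nR-n\delta(\varepsilon)$ with $\delta(\varepsilon)\to0$. Chaining these and sending $n\to\infty$, $\varepsilon\to0$ yields $R\le E_{sq}^{cc}(\rho,\mathrm{M})$, hence $K_{DD}(\mathrm{M}\otimes\id\,\psi^\rho)\le E_{sq}^{cc}(\rho,\mathrm{M})$.

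I expect the main obstacle to be the monotonicity step (i), specifically handling the \emph{public} classical communication: one must verify that appending each broadcast message simultaneously to the honest registers and to the conditioning system $E'$ leaves the infimum over squashing channels non-increasing, while the measurements $\mathrm{M}$ stay fixed at the outset. This is exactly where the machinery of Ref.~\cite{CEHHOR} is needed, and the cleanest route is to reduce the monotonicity to the data-processing behaviour of conditional mutual information under local channels, combined with the observation that a copied classical message can always be folded into the optimally chosen extension of $E$.
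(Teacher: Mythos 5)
Your proposal is correct, but it sits at a different level of granularity than the paper's proof. The paper's entire argument is two moves: invoke \cite[Theorem~3.5]{CEHHOR} as a black box for the tripartite ccq state $\rho_{ccq}:={\mathrm M}\otimes\id\,\psi^{\rho}$, and notice the identification $I(A\!:\!B\downarrow E)_{\rho_{ccq}}=E_{sq}^{cc}(\rho,{\mathrm M})$ (which your use of Observation~\ref{obs:equiv} reproduces, correctly exploiting that the squashing channel acts only on $E$ and hence commutes with the measurement on $AB$). What you do differently is unfold the black box: your monotonicity/asymptotic-continuity/normalization telescoping estimate, with product squashing channels giving subadditivity on $\sigma_{ABE}^{\otimes n}$ and the ideal-key normalization $I(A\!:\!B|E')=\ell$ (valid since any channel on $E$ preserves the product structure with $AB$), is precisely the internal proof of the cited theorem, in the style of the Maurer--Wolf intrinsic-information bound lifted to quantum adversaries. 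Your individual steps all check out --- data processing of conditional mutual information under local maps on $A$ or $B$, Alicki--Fannes-type continuity for the CMI-based quantity, and your correctly flagged delicate point, namely that a public message (a classical function of $A$, copied to $B$ and to the conditioning system) does not increase the squashed value, which is exactly the lemma \cite{CEHHOR} supplies. The trade-off: the paper's citation-based proof is shorter and makes the provenance of the bound transparent, while your self-contained route makes explicit which structural properties of $E_{sq}^{cc}$ (LOPC monotonicity with Eve holding transcripts, additivity on iid copies, continuity, key normalization) actually carry the result --- properties the paper later reuses implicitly, e.g.\ in Theorem~\ref{thm:Matthias_mixed_meas}.
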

\begin{proof}
The proof boils down to invoking \cite[Theorem 3.5]{CEHHOR} for 
a tripartite ccq state $\rho_{ccq}:={\mathrm M}\otimes \id \psi^{\rho}$, and
noticing that $I(A:B\downarrow E)_{\rho_{ccq}} = E_{sq}^{cc}(\rho,{\mathrm M})$.
\end{proof}
To see the application of the 
cc squashed entanglement we need the
following fact.
\begin{lemma}
The \textit{iid} quantum device independent key achieved by protocols using single pair of measurements $(\hat{x},\hat{y})$ applied to ${\cal M}$ of a device $(\rho,{\cal M})$, is upper bounded as follows:
\begin{align}
    &K_{DI,dev}^{iid,(\hat{x},\hat{y})}(\rho,{\cal M}):=\nonumber\\
    &\inf_{\epsilon>0}\limsup_n\sup_{{\cal P}\in LOPC}\inf_{{(\sigma,{\cal N})\approx_{\epsilon}(\rho,{\cal M})}} \kappa^{\epsilon,(\hat{x},\hat{y})}_n({\cal P}({\mathrm N}(\sigma)^{\otimes n}) \nonumber\\\leq 
    & \inf_{{(\sigma,{\cal N})=(\rho,{\cal M})}} K_{DD}({\cal N}(\hat{x},\hat{y})\otimes \id \psi^\sigma),
\end{align}
where ${\mathrm N}\equiv{\cal N}(\hat{x},\hat{y})$ is a single pair of measurements induced by inputs $(\hat{x},\hat{y})$ on $\cal N$ and  where $\kappa_n^{\varepsilon(\hat{x},\hat{y})}$ is the rate of achieved $\varepsilon$-perfect key and classical labels from local classical operations in $\hat{P}\in cLOPC$ are possessed by the allies (Alice and Bob).
\label{lem:ubound_by_key}
\end{lemma}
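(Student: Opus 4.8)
The plan is to exploit the structure of the definition of $K_{DI,dev}^{iid,(\hat{x},\hat{y})}$, in which the infimum over compatible devices sits \emph{inside} the supremum over protocols, so that fixing any single exactly-compatible device produces an upper bound. First I would pick an arbitrary device $(\sigma,{\cal N})$ with $(\sigma,{\cal N})=(\rho,{\cal M})$, i.e.\ reproducing the same conditional statistics. Since exact equality of statistics implies $(\sigma,{\cal N})\approx_\epsilon(\rho,{\cal M})$ for every $\epsilon>0$, this device is an admissible point of the inner infimum for all $\epsilon$, and therefore
\begin{equation}
\inf_{(\sigma',{\cal N}')\approx_\epsilon(\rho,{\cal M})}\kappa^{\epsilon,(\hat{x},\hat{y})}_n({\cal P}({\mathrm N}'(\sigma')^{\otimes n}))\leq \kappa^{\epsilon,(\hat{x},\hat{y})}_n({\cal P}({\mathrm N}(\sigma)^{\otimes n}))
\end{equation}
for every protocol ${\cal P}\in LOPC$ and all $n,\epsilon$, where ${\mathrm N}\equiv{\cal N}(\hat{x},\hat{y})$.

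Next I would apply $\sup_{{\cal P}\in LOPC}$, then $\limsup_n$ and $\inf_{\epsilon>0}$ to both sides. The left-hand side becomes precisely $K_{DI,dev}^{iid,(\hat{x},\hat{y})}(\rho,{\cal M})$, while the right-hand side must be identified with a device-\emph{dependent} key rate. Following Refs.~\cite{CEHHOR,CFH20}, the key observation is that the strongest quantum eavesdropper for the fixed compatible device holds the purifying system $E$ of $\sigma$, so that measuring systems $AB$ of the purification $\psi^\sigma$ with the single pair ${\mathrm N}={\cal N}(\hat{x},\hat{y})$ produces exactly the tripartite ccq state ${\cal N}(\hat{x},\hat{y})\otimes \id \psi^\sigma$. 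Running the honest parties' $LOPC$ protocols on $n$ copies of the measured device is then operationally the same as distilling key by $LOPC$ from $n$ copies of this ccq state against the adversary $E$, so that
\begin{equation}
\inf_{\epsilon>0}\limsup_n\sup_{{\cal P}\in LOPC}\kappa^{\epsilon,(\hat{x},\hat{y})}_n({\cal P}({\mathrm N}(\sigma)^{\otimes n}))=K_{DD}({\cal N}(\hat{x},\hat{y})\otimes \id \psi^\sigma).
\end{equation}

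Finally, since the compatible device $(\sigma,{\cal N})$ was arbitrary, I would take the infimum over all $(\sigma,{\cal N})=(\rho,{\cal M})$ on the right, obtaining the claimed
\begin{equation}
K_{DI,dev}^{iid,(\hat{x},\hat{y})}(\rho,{\cal M})\leq \inf_{(\sigma,{\cal N})=(\rho,{\cal M})}K_{DD}({\cal N}(\hat{x},\hat{y})\otimes \id \psi^\sigma).
\end{equation}
The main obstacle is the operational equivalence established in the second step: one must argue carefully that the purifying system of the compatible state is the optimal handle for a quantum eavesdropper, so that the coarse-grained single-measurement DI protocol is faithfully captured by secret-key distillation from the induced ccq state. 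This is exactly where the techniques of Refs.~\cite{CEHHOR,CFH20} enter; the remaining work is routine bookkeeping of the nested infima and suprema together with the asymptotic limits.
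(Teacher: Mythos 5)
Your proposal is correct and follows essentially the same route as the paper's proof: fixing an exactly-compatible device $(\sigma,{\cal N})$, which is $\approx_\epsilon$-compatible for every $\epsilon>0$, and then infimizing at the end is just an unrolled version of the paper's application of the max-min inequality followed by narrowing the infimum to devices with $(\sigma,{\cal N})=(\rho,{\cal M})$. The concluding identification of $\inf_{\epsilon>0}\limsup_n\sup_{{\cal P}\in LOPC}\kappa^{\epsilon,(\hat{x},\hat{y})}_n$ with $K_{DD}({\cal N}(\hat{x},\hat{y})\otimes \id\,\psi^\sigma)$, via the purifying system as the strongest eavesdropper, is exactly the paper's final step.
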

\begin{proof}
\begin{align}
    &K_{DI,dev}^{iid,(\hat{x},\hat{y})}(\rho,{\cal M}) =\nonumber\\&\inf_{\epsilon>0}\limsup_n\sup_{{\cal P}\in LOPC}\inf_{{(\sigma,{\cal N})\approx_{\epsilon}(\rho,{\cal M})}} \kappa^{\epsilon,(\hat{x},\hat{y})}_n({\cal P}({\mathrm N}(\sigma)^{\otimes n})) \\ &\leq
    \inf_{\epsilon>0}\inf_{{(\sigma,{\cal N})\approx_{\epsilon}(\rho,{\cal M})}} \limsup_n\sup_{{\cal P}\in LOPC}\kappa^{\epsilon,(\hat{x},\hat{y})}_n({\cal P}({\mathrm N}(\sigma)^{\otimes n}))\\ &\leq 
    \inf_{{(\sigma,{\cal N})=(\rho,{\cal M})}} \inf_{\epsilon>0}\limsup_n\sup_{{\cal P}\in LOPC}\kappa^{\epsilon,(\hat{x},\hat{y})}_n({\cal P}({\mathrm N}(\sigma)^{\otimes n})) \nonumber\\
    &=\inf_{{(\sigma,{\cal N})=(\rho,{\cal M})}} K_{DD}({\cal N}(\hat{x},\hat{y})\otimes \id \psi^\sigma).
\end{align}
In the above we first use the max-min inequality for sup and limsup \cite{CFH20}. We then narrow infimum to devices that ideally mimic the device $(\rho,{\cal M})$. We further notice that the key $\kappa^{\epsilon,(\hat{x},\hat{y})}_n({\cal P}({\mathrm N}(\sigma)^{\otimes n})$ where the supremum is taken over LOPC protocols
equals the device dependent key of a tripartite ccq state ${\mathrm N}\otimes \id \psi_\sigma$. 
\end{proof}
The LOPC protocols considered in the above definition consists of the error correction and parameter amplification in the DI-QKD protocols over the ccq state $\mathcal{N}(\hat{x},\hat{y})\otimes \textrm{id}\psi^{\sigma}$. We have assumed that the test rounds and the key generation rounds are known to Eve due to classical communication carried out by Alice and Bob. We should specify that the distinction in the rounds was not known prior to the preparation of the device. This knowledge becomes available to the eavesdropper after Alice and Bob have performed the measurements and classically communicated with each other. This extra knowledge of distinction between test rounds and key generation rounds is instrumental in obtaining tighter upper bounds for DI-QKD protocols.

Due to Theorem~\ref{thm:Matthias} and the above lemma we have immediate corollary:
\begin{corollary}
The iid quantum device independent key achieved by protocols using single pair of measurements $(\hat{x},\hat{y})$ applied to $ {\cal M}$ of a device $(\rho,{\cal M})$, is upper bounded as follows:    
\begin{align}
    K_{DI,dev}^{iid,(\hat{x},\hat{y})}(\rho,{\cal M})&\leq \inf_{{(\sigma,{\cal N})\equiv(\rho,{\cal M})}}E_{sq}^{cc}(\sigma,{\cal N}(\hat{x},\hat{y}))\\
    &=:E_{sq,dev}^{cc}(\rho,{\cal M}({\hat{x},\hat{y}})).
\end{align}
\label{cor:reduced_dev}
\end{corollary}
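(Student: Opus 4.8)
The plan is to obtain the statement as a direct composition of the two immediately preceding results, Lemma~\ref{lem:ubound_by_key} and Theorem~\ref{thm:Matthias}, with no new analytic work required. First I would invoke the conclusion of Lemma~\ref{lem:ubound_by_key}, which already reduces the single-pair iid device-independent key rate to an infimum of \emph{device-dependent} key rates taken over all strategies reproducing the statistics of $(\rho,{\cal M})$:
\begin{equation}
K_{DI,dev}^{iid,(\hat{x},\hat{y})}(\rho,{\cal M})\leq \inf_{(\sigma,{\cal N})=(\rho,{\cal M})} K_{DD}({\cal N}(\hat{x},\hat{y})\otimes \id\,\psi^{\sigma}).
\end{equation}

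Next I would bound each term inside this infimum using Theorem~\ref{thm:Matthias}. The theorem is applicable because, for every compatible strategy $(\sigma,{\cal N})$, the induced object ${\cal N}(\hat{x},\hat{y})$ is exactly a single pair of local POVMs of the form $M_a\otimes M_b$; so instantiating the theorem with the state $\sigma$ and the pair of measurements ${\mathrm M}={\cal N}(\hat{x},\hat{y})$ yields
\begin{equation}
K_{DD}({\cal N}(\hat{x},\hat{y})\otimes \id\,\psi^{\sigma})\leq E_{sq}^{cc}(\sigma,{\cal N}(\hat{x},\hat{y}))
\end{equation}
for each admissible $(\sigma,{\cal N})$.

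The final step is the elementary observation that a termwise inequality is preserved upon taking the infimum over the common feasible set $\{(\sigma,{\cal N}):(\sigma,{\cal N})=(\rho,{\cal M})\}$; chaining this with the defining identity $E_{sq,dev}^{cc}(\rho,{\cal M}(\hat{x},\hat{y})):=\inf_{(\sigma,{\cal N})=(\rho,{\cal M})}E_{sq}^{cc}(\sigma,{\cal N}(\hat{x},\hat{y}))$ then delivers the claimed bound. I do not expect any genuine obstacle here, since all of the real content---the $\sup$/$\limsup$ max--min exchange and the bounding of the distillable key by the conditional mutual information---has already been absorbed into the two cited results. The only point deserving a word of care is purely bookkeeping: one must check that the compatibility constraint ``$(\sigma,{\cal N})=(\rho,{\cal M})$'' governing the infimum in Lemma~\ref{lem:ubound_by_key} indexes exactly the same set of strategies over which the reduced measure $E_{sq,dev}^{cc}$ is defined, so that the two infima coincide and the composition is valid.
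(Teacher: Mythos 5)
Your proposal is correct and is essentially identical to the paper's own argument: the paper presents the corollary as an ``immediate'' consequence of Lemma~\ref{lem:ubound_by_key} combined with Theorem~\ref{thm:Matthias}, exactly the termwise application followed by taking the infimum over compatible strategies that you describe. Your closing bookkeeping check that the feasible set $\{(\sigma,{\cal N}):(\sigma,{\cal N})=(\rho,{\cal M})\}$ matches the set defining $E_{sq,dev}^{cc}$ is sound and resolves the only point the paper leaves implicit.
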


\begin{observation}
If two quantum devices $(\rho,\mc{M})$ and $(\sigma,\mc{N})$ are such that $(\rho,\mc{M})=(\sigma,\mc{N})$ then
\begin{align}
    K_{DI,dev}^{iid,(\hat{x},\hat{y})}(\rho,{\cal M})& =K_{DI,dev}^{iid,(\hat{x},\hat{y})}(\sigma,{\cal N}),\\
    E_{sq,dev}^{cc}(\rho,{\cal M}({\hat{x},\hat{y}}))&=E_{sq,dev}^{cc}(\sigma,{\cal N}({\hat{x},\hat{y}}))
\end{align}
for any valid choice of $(\hat{x},\hat{y})$.
\end{observation}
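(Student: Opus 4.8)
The plan is to note that each of the two functionals is defined \emph{purely} as an optimization over the set of devices compatible with the reference device, and that this set is fixed by the conditional distribution $p(a,b|x,y)$ alone. The two equalities then reduce to the observation that the reference device enters only through a constraint set which is identical for $(\rho,\mc{M})$ and $(\sigma,\mc{N})$ whenever $(\rho,\mc{M})=(\sigma,\mc{N})$. This mirrors Lemma~\ref{lem:di-state}, whose proof argument transfers verbatim.

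First I would make explicit that both the closeness relation $\approx_\epsilon$ and the exact-compatibility relation $=$ are defined at the level of statistics: $(\sigma',\mc{N}')\approx_\epsilon(\rho,\mc{M})$ asserts $d(p_{(\sigma',\mc{N}')},p_{(\rho,\mc{M})})\le\epsilon$ for the distance $d$ fixed above, while $(\sigma',\mc{N}')=(\rho,\mc{M})$ asserts $p_{(\sigma',\mc{N}')}=p_{(\rho,\mc{M})}$. Hence, assuming $(\rho,\mc{M})=(\sigma,\mc{N})$, i.e.\ $p_{(\rho,\mc{M})}=p_{(\sigma,\mc{N})}$, the feasible sets coincide for every $\epsilon>0$:
\begin{equation}
\{(\sigma',\mc{N}'):(\sigma',\mc{N}')\approx_\epsilon(\rho,\mc{M})\}=\{(\sigma',\mc{N}'):(\sigma',\mc{N}')\approx_\epsilon(\sigma,\mc{N})\},
\end{equation}
and the analogous identity holds with $\approx_\epsilon$ replaced by the exact relation $=$.

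Next I would insert this into the definition of $K_{DI,dev}^{iid,(\hat{x},\hat{y})}$ from Lemma~\ref{lem:ubound_by_key}. There the reference device $(\rho,\mc{M})$ occurs \emph{only} in the innermost constraint $(\sigma',\mc{N}')\approx_\epsilon(\rho,\mc{M})$; the protocol $\mc{P}\in LOPC$, the blocklength $n$, the security parameter $\epsilon$, and the rate $\kappa^{\epsilon,(\hat{x},\hat{y})}_n(\mc{P}(\mc{N}'(\sigma')^{\otimes n}))$ are all evaluated on the feasible device $(\sigma',\mc{N}')$ rather than on $(\rho,\mc{M})$ itself. Since the feasible set is the same for $(\rho,\mc{M})$ and $(\sigma,\mc{N})$, the innermost infimum is unchanged, and therefore so is the entire nested $\inf_{\epsilon>0}\limsup_n\sup_{\mc{P}\in LOPC}\inf$ expression, giving the first equality.

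Finally, for the cc-squashed quantity I would use $E_{sq,dev}^{cc}(\rho,\mc{M}(\hat{x},\hat{y}))=\inf_{(\sigma',\mc{N}')=(\rho,\mc{M})}E_{sq}^{cc}(\sigma',\mc{N}'(\hat{x},\hat{y}))$ from Corollary~\ref{cor:reduced_dev}. Again the objective $E_{sq}^{cc}(\sigma',\mc{N}'(\hat{x},\hat{y}))$ depends only on the feasible point $(\sigma',\mc{N}')$ and not on the reference device, while the feasible set $\{(\sigma',\mc{N}'):(\sigma',\mc{N}')=(\rho,\mc{M})\}$ equals $\{(\sigma',\mc{N}'):(\sigma',\mc{N}')=(\sigma,\mc{N})\}$ by the previous paragraph; hence the two infima agree, proving the second equality. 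I do not expect a genuine obstacle here: the substance is merely the fact that device-independent quantities are functionals of the observed statistics, so the only care needed is the bookkeeping that isolates the reference device inside its compatibility constraint and confirms that no other occurrence of $(\rho,\mc{M})$ survives in either definition.
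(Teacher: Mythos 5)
Your proposal is correct and matches the paper's own treatment: the paper states this Observation without proof, regarding it as immediate from the definitions in exactly the way you spell out (paralleling Lemma~\ref{lem:di-state}), namely that both $K_{DI,dev}^{iid,(\hat{x},\hat{y})}$ and $E_{sq,dev}^{cc}$ depend on the reference device only through the compatibility constraint set, which is determined by the statistics $p(a,b|x,y)$ alone. Your explicit bookkeeping — checking that the objective is evaluated on the feasible device and that no other occurrence of $(\rho,\mc{M})$ survives — is precisely the intended argument, just written out in full.
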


\color{black}
We now pass to study the upper bounds provided in Refs.~\cite{AL20,FBL+21}. We first note that in Ref.~\cite{AL20} conditions of equal CHSH value $\omega$ and QBER $P_{err}$ are considered instead of equality of attacking and honest device. It is straightforward to adopt the above corollary to this case:
\begin{corollary}
The iid quantum device independent key achieved by protocols using single pair of measurements $(\hat{x},\hat{y})$ applied to ${\cal M}$ of a device $(\rho,{\cal M})$, is upper bounded as follows:    
\begin{align}
    &K_{DI,par}^{iid,({\hat{x},\hat{y}})}(\rho,{\cal M}) 
    \coloneqq \nonumber\\
    &\inf_{\epsilon>0}\limsup_n\sup_{{\cal P}\in LOPC}\inf_{\underset{P_{err}(\sigma,{\cal N})\approx_\epsilon P_{err}(\rho,{\cal M})}{\omega(\sigma,{\cal N})\approx_\epsilon\omega(\rho,{\cal M})}} \kappa^{\epsilon,(\hat{x},\hat{y})}_n({\cal P}({\mathrm N}(\sigma)^{\otimes n}))\label{eq:di_par1}\\
    &\leq  \inf_{\underset{P_{err}(\sigma,{\cal N})= P_{err}(\rho,{\cal M})}{\omega(\sigma,{\cal N})=\omega(\rho,{\cal M})}}E_{sq}^{cc}(\sigma,{\mathrm N}) =: E_{sq,par}^{cc}(\rho,{\cal M}(\hat{x},\hat{y})),
\end{align}
where $N={\cal N}(\hat{x},\hat{y})$ is a single pair of measurements induced by inputs $(\hat{x},\hat{y})$ on $\cal N$.
\label{col:esqbound}
\end{corollary}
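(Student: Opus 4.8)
The plan is to transcribe the proof of Lemma~\ref{lem:ubound_by_key} and Corollary~\ref{cor:reduced_dev} almost verbatim: the sole change is that the innermost infimum now ranges over devices matching the coarse-grained statistics $\omega$ and $P_{err}$ rather than the full conditional distribution. Since the operational bound $K_{DD}({\mathrm N}\otimes\id\,\psi^\sigma)\leq E_{sq}^{cc}(\sigma,{\mathrm N})$ holds for an \emph{arbitrary} device, enlarging the constraint set from ``same distribution'' to ``same $\omega$ and $P_{err}$'' costs nothing structurally, so the entire chain of inequalities goes through unchanged.

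Starting from the definition of $K_{DI,par}^{iid,(\hat{x},\hat{y})}(\rho,{\cal M})$ in \eqref{eq:di_par1}, I would first apply the max-min inequality for $\sup$ and $\limsup$ (as in Ref.~\cite{CFH20}) to pull the infimum over $\epsilon$-compatible devices outside the $\limsup_n\sup_{{\cal P}\in LOPC}$, obtaining
\begin{align}
K_{DI,par}^{iid,(\hat{x},\hat{y})}(\rho,{\cal M})
&\leq \inf_{\epsilon>0}\inf_{\underset{P_{err}(\sigma,{\cal N})\approx_\epsilon P_{err}(\rho,{\cal M})}{\omega(\sigma,{\cal N})\approx_\epsilon\omega(\rho,{\cal M})}}\limsup_n\sup_{{\cal P}\in LOPC}\nonumber\\
&\qquad \kappa^{\epsilon,(\hat{x},\hat{y})}_n({\cal P}({\mathrm N}(\sigma)^{\otimes n})).
\end{align}
Next I would observe that any device $(\sigma,{\cal N})$ with $\omega(\sigma,{\cal N})=\omega(\rho,{\cal M})$ and $P_{err}(\sigma,{\cal N})=P_{err}(\rho,{\cal M})$ lies in the $\epsilon$-compatible set for every $\epsilon>0$. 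Restricting the infimum to this subset and interchanging it with $\inf_{\epsilon>0}$ (exactly the second inequality in Lemma~\ref{lem:ubound_by_key}) produces an upper bound of the form $\inf_{(\sigma,{\cal N})}\inf_{\epsilon>0}\limsup_n\sup_{\cal P}\kappa^{\epsilon,(\hat{x},\hat{y})}_n(\cdots)$, where the outer infimum now runs over devices matching $\omega$ and $P_{err}$ exactly.

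To finish I would identify, as in Lemma~\ref{lem:ubound_by_key}, the inner quantity $\inf_{\epsilon>0}\limsup_n\sup_{\cal P}\kappa^{\epsilon,(\hat{x},\hat{y})}_n({\cal P}({\mathrm N}(\sigma)^{\otimes n}))$ with the device-dependent key $K_{DD}({\mathrm N}\otimes\id\,\psi^\sigma)$ of the tripartite ccq state, since for a single fixed pair of measurements $(\hat{x},\hat{y})$ the LOPC optimization collapses to device-dependent key distillation of that one ccq state. Invoking Theorem~\ref{thm:Matthias}, $K_{DD}({\mathrm N}\otimes\id\,\psi^\sigma)\leq E_{sq}^{cc}(\sigma,{\mathrm N})$, and taking the infimum over all devices with the prescribed $\omega$ and $P_{err}$ yields precisely $E_{sq,par}^{cc}(\rho,{\cal M}(\hat{x},\hat{y}))$. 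The only step demanding genuine care is the $\sup$--$\inf$ interchange: the max-min inequality delivers the $\leq$ direction we need, and one must check that the exact-equality subset is nonempty---it contains $(\rho,{\cal M})$ itself---so that reordering $\inf_{\epsilon>0}$ with the device infimum is legitimate. Everything else is a direct rewriting of the earlier arguments with the coarse-grained constraint replacing full distributional equality.
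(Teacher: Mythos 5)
Your proposal is correct and matches the paper's intent exactly: the paper offers no separate proof of Corollary~\ref{col:esqbound}, stating only that it is ``straightforward to adopt'' Corollary~\ref{cor:reduced_dev}, whose proof (via the max-min inequality, narrowing to exactly compatible devices, identifying the inner quantity with $K_{DD}$ of the ccq state, and invoking Theorem~\ref{thm:Matthias}) you reproduce with the constraint set correctly replaced by equality of $\omega$ and $P_{err}$. Your added observation that the exact-statistics set sits inside every $\epsilon$-approximate set (and contains $(\rho,{\cal M})$ itself) is precisely the point that makes the adaptation go through.
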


\begin{observation}
If two quantum devices $(\rho,\mc{M})$ and $(\sigma,\mc{N})$ are such that $\omega(\rho,\mc{M})=\omega(\sigma,\mc{N})$ and $P_{err}(\rho,\mc{M})=P_{err}(\sigma,\mc{N})$ then
\begin{align}
   K_{DI,par}^{iid,({\hat{x},\hat{y}})}(\rho,{\cal M})& =K_{DI,par}^{iid,({\hat{x},\hat{y}})}(\sigma,{\cal N}),\\
    E_{sq,par}^{cc}(\rho,{\cal M}(\hat{x},\hat{y}))&=E_{sq,par}^{cc}(\sigma,{\cal N}(\hat{x},\hat{y})),
\end{align}
for any valid choice of $(\hat{x},\hat{y})$. This is to say that $E^{cc}_{sq, par}(\rho,\mc{M}(\hat{x},\hat{y}))$ is a function explicitly depending on only two parameters $\omega$ and $P_{err}$.
\end{observation}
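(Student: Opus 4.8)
The plan is to recognize that this is a purely definitional statement: both functionals on the left-hand sides are defined as nested optimizations in which the \emph{only} way the reference device $(\rho,\mc{M})$ enters is through the two scalar values $\omega(\rho,\mc{M})$ and $P_{err}(\rho,\mc{M})$ that pin down the feasible set. Consequently, replacing $(\rho,\mc{M})$ by any device $(\sigma,\mc{N})$ carrying the same pair of values leaves the entire optimization literally unchanged, and equality of the two functionals is forced.

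Concretely, I would first unfold the definition of $K_{DI,par}^{iid,(\hat{x},\hat{y})}(\rho,\mc{M})$ from Corollary~\ref{col:esqbound}. For each fixed $\epsilon$, the innermost infimum ranges over all devices $(\sigma',\mc{N}')$ subject to the two constraints $\omega(\sigma',\mc{N}')\approx_\epsilon\omega(\rho,\mc{M})$ and $P_{err}(\sigma',\mc{N}')\approx_\epsilon P_{err}(\rho,\mc{M})$, while the objective $\kappa^{\epsilon,(\hat{x},\hat{y})}_n(\mc{P}(\mathrm{N}'(\sigma')^{\otimes n}))$ depends on the feasible (primed) device and on $\mc{P}$, but never on $\rho$ directly. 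Under the hypothesis $\omega(\rho,\mc{M})=\omega(\sigma,\mc{N})$ and $P_{err}(\rho,\mc{M})=P_{err}(\sigma,\mc{N})$, the feasible set attached to $(\rho,\mc{M})$ coincides exactly, at every $\epsilon$, with the one attached to $(\sigma,\mc{N})$. Since the integrand, the $\sup_{\mc{P}\in LOPC}$, the $\limsup_n$, and the outer $\inf_{\epsilon>0}$ are then identical expressions, the two rates agree.

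The second equality is handled identically. By definition $E_{sq,par}^{cc}(\rho,\mc{M}(\hat{x},\hat{y}))=\inf E_{sq}^{cc}(\sigma',\mc{N}'(\hat{x},\hat{y}))$ over devices with $\omega(\sigma',\mc{N}')=\omega(\rho,\mc{M})$ and $P_{err}(\sigma',\mc{N}')=P_{err}(\rho,\mc{M})$; again the constraint set and the objective see $(\rho,\mc{M})$ only through the two scalars, so matching those scalars for $(\sigma,\mc{N})$ forces the two infima to run over the same set with the same integrand, giving $E_{sq,par}^{cc}(\rho,\mc{M}(\hat{x},\hat{y}))=E_{sq,par}^{cc}(\sigma,\mc{N}(\hat{x},\hat{y}))$. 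The closing remark is then a tautology: a quantity taking a common value across \emph{every} device sharing a fixed pair $(\omega,P_{err})$ is, by construction, a well-defined function of $(\omega,P_{err})$ alone.

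There is essentially no substantive obstacle here; the content is entirely bookkeeping about which arguments the defining optimizations actually depend on. The only point requiring a moment's care is checking that the objective functions carry no residual dependence on the initial state beyond the imposed constraints — i.e.\ that $\kappa^{\epsilon,(\hat{x},\hat{y})}_n$ and $E_{sq}^{cc}$ are evaluated on the feasible device $(\sigma',\mc{N}')$ and not on $(\rho,\mc{M})$ — which is immediate from the definitions but is the one spot a careless reading could slip.
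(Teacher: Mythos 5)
Your proposal is correct and matches the paper's reasoning: the paper states this Observation without a separate proof precisely because, as you argue, both $K_{DI,par}^{iid,(\hat{x},\hat{y})}$ and $E_{sq,par}^{cc}$ depend on the reference device only through the scalars $\omega$ and $P_{err}$ that determine the feasible sets of their defining optimizations. Your explicit unfolding of the definitions (including the $\epsilon$-relaxed constraints in the key rate) is exactly the intended bookkeeping argument, so nothing further is needed.
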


\color{black}
The quantity defined in \eqref{eq:di_par1} $K_{DI,par}^{iid,(\hat{x},\hat{y})}$ depends on the choice of Bell inequality and its violation $\omega$ and quantum bit error rate $P_{err}$.
For further considerations one can assume that $P_{err}$ is computed as $P(a\neq b |\hat{x},\hat{y})$, as the key is generated by $(\hat{x},\hat{y})$, however $\omega$ remains a free parameter.
In any case, not to overload the notation, we refrain from decorating definition of $K_{DI,par}^{iid,(\hat{x},\hat{y})}$ by $\omega$, and make $\omega$ explicitly known from the context if needed (e.g. see Theorem~\ref{thm:convexification}).

We argue now, that the $E^{cc}_{sq,par}(\rho,{\cal M}(\hat{x},\hat{y}))$ is equal to the bound \cite[Eq.~(19)]{AL20}. Before that we invoke the notation of \cite{AL20} where 
$(\sigma_{ABE},{\cal M}(\hat{x},\hat{y}))\in \hat{\Sigma}(\omega^*,Q^*)$ iff
$\tr_E(\sigma_{ABE},{\cal M}(\hat{x},\hat{y})\otimes \id_E)=q(ab|\hat{x}\hat{y})$ where $\omega(q(ab|\hat{x}\hat{y}))=\omega^* $ and $P_{err}(q(ab|\hat{x}\hat{y}))=Q^*$.

The bound of \cite{AL20} reads
\begin{equation}
    {\cal I}_{par}(\omega^*,Q^*,\hat{x},\hat{y}):=\inf_{\sigma \in \hat{{\Sigma}}(\omega^*,Q^*)}I(A;B\downarrow E)_{\sigma(\hat{x},\hat{y})}
\end{equation}
where the quantity $I(A:B\downarrow E):=\inf_{\Lambda:E\rightarrow E'} I(A:B|E')_{\sigma(\hat{x},\hat{y})}$ is 
computed on a state $\sigma_{ABE}$ measured with ${\cal M}(\hat{x},\hat{y})$ on $AB$
 for some measurements ${\cal M}$.
 
The equivalence is encapsulated in the following theorem:

\begin{theorem}
    Let $(\rho, {\cal M})$ be a quantum device with parameters $\omega^*,Q^*$ where $Q^*$ is computed based on the inputs $(\hat{x},\hat{y})$. Then there is
    \begin{equation}
        E_{sq,par}^{cc}(\rho,{\cal M}(\hat{x},\hat{y})) = {\cal I}(\omega^*,Q^*,\hat{x},\hat{y})
    \end{equation}
  for any choice of the inputs $(\hat{x},\hat{y})$.  
\end{theorem}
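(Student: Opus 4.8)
The plan is to unfold both sides to a common double infimum and then to translate between the two equivalent ways of writing the cc-squashed entanglement furnished by Observation~\ref{obs:equiv}. Concretely, I would first expand the left-hand side by inserting the definition of $E_{sq}^{cc}$ from Corollary~\ref{col:esqbound} in its \emph{extension} form. Replacing the channel $\Lambda_E$ acting on the purification $\psi^\sigma_{ABE}$ by an arbitrary state extension, Observation~\ref{obs:equiv} gives
\begin{align*}
&E_{sq,par}^{cc}(\rho,{\cal M}(\hat{x},\hat{y}))\\
&\quad= \inf_{\substack{\omega(\sigma,{\cal N})=\omega^*\\ P_{err}(\sigma,{\cal N})=Q^*}}\;
\inf_{\sigma_{ABE}=\mathrm{Ext}(\sigma_{AB})}
I(A:B|E)_{{\cal N}(\hat{x},\hat{y})\otimes\id_E\,\sigma_{ABE}}.
\end{align*}
Thus the left-hand side is an outer infimum over all devices $(\sigma,{\cal N})$ reproducing the fixed parameters $(\omega^*,Q^*)$, and an inner infimum over all extensions $\sigma_{ABE}$ of the reduced state $\sigma_{AB}$, the objective being the conditional mutual information of the ccq state obtained by measuring $AB$ with the key-generating pair ${\cal N}(\hat{x},\hat{y})$.

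Next I would rewrite the right-hand side ${\cal I}(\omega^*,Q^*,\hat{x},\hat{y})$. By definition it is $\inf_{\sigma\in\hat\Sigma(\omega^*,Q^*)} I(A:B\!\downarrow\! E)_{\sigma(\hat{x},\hat{y})}$, where the intrinsic term is itself $\inf_{\Lambda:E\to E'} I(A:B|E')$. The key step is to absorb this inner ``$\downarrow$'' infimum into the outer infimum over $\hat\Sigma$. I claim
\begin{align*}
&\inf_{\sigma_{ABE}}\,\inf_{\Lambda:E\to E'}
I(A:B|E')_{{\cal N}(\hat{x},\hat{y})\otimes\Lambda_E\,\sigma_{ABE}}\\
&\quad=\inf_{\sigma_{ABE}} I(A:B|E)_{{\cal N}(\hat{x},\hat{y})\otimes\id_E\,\sigma_{ABE}},
\end{align*}
where both outer infima run over all extensions of a given $\sigma_{AB}$. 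The direction ``$\geq$'' holds because $\Lambda_E(\sigma_{ABE})$ is again an extension of $\sigma_{AB}$, so every term on the left already appears on the right; the direction ``$\leq$'' holds by taking $\Lambda=\id$, which is admissible. This is precisely the mechanism behind Observation~\ref{obs:equiv}: ranging over channels on the purification is the same as ranging over extensions, so once an outer infimum over all extensions is present, the additional channel optimization is redundant.

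Finally I would match the two feasible sets and conclude. Since $E_{sq}^{cc}(\sigma,{\cal N}(\hat{x},\hat{y}))$ depends on the device only through the pair $(\sigma_{AB},{\cal N}(\hat{x},\hat{y}))$ and the chosen extension, the outer infimum over full devices $(\sigma,{\cal N})$ with $\omega(\sigma,{\cal N})=\omega^*$, $P_{err}=Q^*$, combined with the inner infimum over extensions $\sigma_{ABE}$, ranges over exactly those tuples $(\sigma_{ABE},{\cal N}(\hat{x},\hat{y}))$ whose $AB$-marginal, measured by ${\cal N}(\hat{x},\hat{y})$, is compatible with a device attaining $(\omega^*,Q^*)$ --- that is, the set $\hat\Sigma(\omega^*,Q^*)$ of Ref.~\cite{AL20}. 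With identical objective $I(A:B|E)$ and identical feasible set, the two (now single) infima coincide, yielding the asserted equality. I expect the main obstacle to be this last piece of bookkeeping: because $\omega$ is computed from the Bell-test settings \emph{outside} the key pair $(\hat{x},\hat{y})$ while the objective sees only ${\cal N}(\hat{x},\hat{y})$, one must check carefully that optimizing over complete devices constrained by $\omega^*$ and then restricting to the key measurement reproduces exactly the membership condition defining $\hat\Sigma(\omega^*,Q^*)$, with no hidden loss of generality in either direction.
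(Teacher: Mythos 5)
Your proposal is correct and takes essentially the same route as the paper's proof: both rest on the mechanism of Observation~\ref{obs:equiv} (channels on the purifying system sweep out exactly the set of extensions, with the post-measurement channel on $E$ commuting past the measurement on $AB$) together with the observation that $\hat{\Sigma}(\omega^*,Q^*)$ is precisely the set of extensions of devices reproducing the parameters $(\omega^*,Q^*)$, which the paper carries out as the two inequalities $\leq$ and $\geq$ rather than as your single identification of feasible sets and objectives. Your bookkeeping even avoids the paper's inessential assumption that the infima in ${\cal I}$ and $E_{sq,par}^{cc}$ are attained, but the substance of the argument is the same.
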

\begin{proof}
In what follows we can fix $(\hat{x},\hat{y})$ arbitrarily.
We first prove that for any quantum realization $(\rho,{\cal M})$ of a device with parameters $\omega^*, Q^*$, there is $E_{sq,par}^{cc}(\rho,{\cal M}(\hat{x},\hat{y}))$ is a lower bound to ${\cal I}(\omega^*,Q^*,\hat{x},\hat{y})$. To this end, let us assume that the infimum in ${\cal I}$  is achieved on for a pair $(\widetilde{\sigma}_{ABE},\widetilde{\mc{M}}(x,y))\in \hat{\Sigma}(\omega^*,Q^*)$.  We then observe that for $\widetilde{\rho}_{AB} : = \tr_E \widetilde{\sigma}_{ABE}$, there is:

\begin{align}
    &{\cal I}_{par}(\omega^*,Q^*,\hat{x},\hat{y})=I(A:B\downarrow E)_{ (\widetilde{\sigma}_{ABE},\widetilde{\mc{M}}(\hat{x},\hat{y}))}  \\
    &\geq E^{cc}_{sq}(\widetilde{\rho}_{AB}, \widetilde{\mc{M}}(\hat{x},\hat{y}))  \\
    &\geq \inf_{\substack{\omega(\sigma,{\cal N})=\omega(\widetilde{\rho},\widetilde{\mc{M}})\\ P_{err}(\sigma,{\cal N}(\hat{x},\hat{y}))=P_{err}(\widetilde{\rho},\widetilde{\mc{M}}(\hat{x},\hat{y})) }}E^{cc}_{sq}(\sigma,{\cal N}(\hat{x},\hat{y}))\\
    & \geq \inf_{\substack{\omega(\sigma,{\cal N})=\omega(\rho,{\cal M})\\ P_{err}(\sigma,{\cal N}(\hat{x},\hat{y}))=P_{err}(\rho,{\cal M}(\hat{x},\hat{y})) }}E^{cc}_{sq}(\sigma,{\cal N}(\hat{x},\hat{y}))
    \\ & =
    E^{cc}_{sq,par}(\rho,{\cal M}(\hat{x},\hat{y}))
    ,\label{eq:al-1}
\end{align}
where the first inequality comes from
the fact, that channels $\Lambda:E\rightarrow E'$ in definition of $E_{sq}^{cc}$ are acting on a purification of $\widetilde{\rho}_{AB}=\tr_E\widetilde{\sigma}_{ABE}$ hence can
achieve lower value than channels acting
on system $E$ of $\widetilde{\sigma}_{ABE}$.
The next inequality is just by taking infimum, while the last is due to the 
fact that $(\rho_{AB},{\cal M}),(\widetilde{\rho}_{AB},\widetilde{\mc{M}})\in \hat{\Sigma}(\omega^*,Q^*)$.

We prove now the converse inequality.
Let $(\sigma_{AB},{\cal N})$ be a pair
achieving infimum in definiton of the 
$E_{sq,par}^{cc}(\rho,{\cal M}(\hat{x},\hat{y}))$. In particular there is $\omega(\sigma,{\cal N}) =\omega(\rho,{\cal M})=\omega^*$ by assumption and $P_{err}(\sigma,{\cal N})=P_{err}(\rho,{\cal M})=Q^*$. And hence $(\psi^{\sigma},{\cal N})\in\hat{\Sigma}(\omega^*,Q^*)$. We have then
\begin{align}
&E_{sq,par}^{cc}(\rho,{\cal M}(\hat{x},\hat{y}))=E^{cc}_{sq}(\sigma,{\cal N}(\hat{x},\hat{y}))\\
&= I(A:B\downarrow E)_{(\psi^{\sigma},{\cal N}(\hat{x},\hat{y}))}  \\
&\geq \inf_{\sigma' \in \hat{\Sigma}(\omega^*,Q^*)}I(A:B\downarrow E)_{\sigma'(\hat{x},\hat{y})} \\
&={\cal I}_{par}(\omega^*,Q^*,\hat{x},\hat{y}),
\end{align}
hence the assertion follows.
\end{proof}

In Ref.~\cite{AL20} (see Eq.~(18) there) there is also defined a quantity which is
equivalent to $E_{sq,dev}^{cc}$. It reads in our notation
\begin{equation}
    \mc{I}_{dev}(p(ab|\hat{x}\hat{y}))\coloneqq \inf_{\sigma \in {\Sigma}(p(ab|\hat{x}\hat{y}))}I(A;B\downarrow E)_{\sigma(\hat{x},\hat{y})}
\end{equation}
where $\sigma_{ABE} \in \Sigma(p(ab|\hat{x}\hat{y}))$ iff there exists a measurement ${\cal M}$ such that $\tr_E (\sigma_{ABE} {\cal M}\otimes \id_E) = p(ab|xy)$. Analogous proof to the above, with $\hat{\Sigma}$ replaced by $\Sigma$ and optimization over $\omega$ and $P_{err}$ reduced to optimization over compatible devices, leads to the following equivalence:
\begin{theorem}
    For any quantum realization $(\rho,{\cal M})$ of a device $p(ab|xy)$ and pair of inputs $(\hat{x},\hat{y})$, there is
    \begin{equation}
        {\cal I}_{dev}(p(ab|\hat{x}\hat{y})) = E_{sq,dev}^{cc}(\rho,{\cal M}(\hat{x},\hat{y})).
    \end{equation}
\end{theorem}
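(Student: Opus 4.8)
The plan is to establish the equality by proving two inequalities, exactly paralleling the preceding theorem but with the coarse-grained constraint set $\hat{\Sigma}(\omega^*,Q^*)$ replaced by the full-statistics set $\Sigma(p(ab|\hat{x}\hat{y}))$, and the parameter-matching constraints $\omega(\sigma,\mc{N})=\omega^*,\ P_{err}(\sigma,\mc{N})=Q^*$ replaced by the single condition $(\sigma,\mc{N})=(\rho,\mc{M})$ (equality of the full conditional distribution). The only structural input needed beyond the definitions is Observation~\ref{obs:equiv}, which lets me rewrite $E_{sq}^{cc}$ as an infimum over \emph{all} state extensions rather than over channels on a fixed purifying system.

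For the inequality $E_{sq,dev}^{cc}(\rho,\mc{M}(\hat{x},\hat{y}))\geq \mc{I}_{dev}(p(ab|\hat{x}\hat{y}))$, I would take a pair $(\sigma_{AB},\mc{N})$ attaining (or approaching) the infimum defining $E_{sq,dev}^{cc}$, so that $(\sigma,\mc{N})=(\rho,\mc{M})$, i.e.\ $\tr_E(\psi^{\sigma}_{ABE}\,\mc{N}\otimes\id_E)=p(ab|xy)$. This is precisely the membership condition $\psi^{\sigma}\in\Sigma(p(ab|\hat{x}\hat{y}))$. Since $E_{sq}^{cc}(\sigma,\mc{N}(\hat{x},\hat{y}))$ is by definition $I(A:B\downarrow E)$ evaluated on the purification $\psi^{\sigma}$ measured by $\mc{N}(\hat{x},\hat{y})$ on $AB$, restricting the infimum over $\Sigma$ to this admissible point gives $E_{sq,dev}^{cc}=E_{sq}^{cc}(\sigma,\mc{N}(\hat{x},\hat{y}))\geq \inf_{\sigma'\in\Sigma}I(A:B\downarrow E)_{\sigma'(\hat{x},\hat{y})}=\mc{I}_{dev}$.

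For the reverse inequality $E_{sq,dev}^{cc}\leq \mc{I}_{dev}$, I would take $(\tilde{\sigma}_{ABE},\tilde{\mc{M}})$ attaining the infimum defining $\mc{I}_{dev}$, with $\tilde{\sigma}_{ABE}\in\Sigma(p(ab|\hat{x}\hat{y}))$, and set $\tilde{\rho}_{AB}:=\tr_E\tilde{\sigma}_{ABE}$. The central step is $\mc{I}_{dev}=I(A:B\downarrow E)_{(\tilde{\sigma}_{ABE},\tilde{\mc{M}}(\hat{x},\hat{y}))}\geq E_{sq}^{cc}(\tilde{\rho}_{AB},\tilde{\mc{M}}(\hat{x},\hat{y}))$: the quantity $I(A:B\downarrow E)$ squashes only the system $E$ of the fixed extension $\tilde{\sigma}$, whereas by Observation~\ref{obs:equiv} the cc-squashed entanglement of $\tilde{\rho}$ optimizes over all extensions (equivalently, squashes a purifying system of $\tilde{\rho}$, from which $\tilde{\sigma}$ itself is recoverable by a channel), hence attains a value no larger. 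Taking the infimum over compatible devices and using that $(\tilde{\rho},\tilde{\mc{M}})$ reproduces $p(ab|xy)$, so $(\tilde{\rho},\tilde{\mc{M}})=(\rho,\mc{M})$, then identifies the right-hand side with $E_{sq,dev}^{cc}(\rho,\mc{M}(\hat{x},\hat{y}))$.

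I expect the main obstacle to be precisely this central monotonicity step in the reverse direction, since everything else is bookkeeping of the optimization domains. The subtlety is that $\Sigma$ ranges over \emph{already-extended} states $\sigma_{ABE}$ and only permits squashing of their $E$ register, so one must argue that passing to the reduced state $\tilde{\rho}$ and re-optimizing over all of its extensions can only lower the conditional mutual information; this is exactly what Observation~\ref{obs:equiv} supplies, because any channel-processed copy of $\tilde{\sigma}$'s $E$ register is itself an extension of $\tilde{\rho}$, while the purification of $\tilde{\rho}$ dominates all such extensions. I would also check carefully that the defining conditions of $\Sigma(p(ab|\hat{x}\hat{y}))$ and of $(\sigma,\mc{N})=(\rho,\mc{M})$ are genuinely equivalent at the level of purifications, so that the two infima run over the same admissible set.
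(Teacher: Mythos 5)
Your proof is correct and takes essentially the same route as the paper: the paper proves the analogous equivalence for ${\cal I}_{par}$ via exactly your two inequalities (the purification of an optimal compatible pair $(\sigma,{\cal N})$ is an admissible point of the constraint set for one direction, and the fact that channels acting on a purification dominate channels acting on the $E$ system of any fixed extension, i.e.\ Observation~\ref{obs:equiv}, for the other), and then states that the $dev$ case follows by replacing $\hat{\Sigma}$ with $\Sigma$ and the $(\omega,P_{err})$ constraints with device equality $(\sigma,{\cal N})=(\rho,{\cal M})$. That substitution is precisely what you carried out, so there is nothing to add.
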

\color{black}

We now observe that the upper bounds
{\it plotted} in the Refs.~\cite{AL20,FBL+21} are upper bounds on $E_{sq,par}^{cc}(\rho_{AB},{\cal M}(\hat{x},\hat{y}))$.
We denote the plotted functions as $I_{AL}(\rho_{AB},{\cal M}(\hat{x},\hat{y}))$ and $I_{FBJL+}(\rho_{AB},{\cal M}(\hat{x},\hat{y}))$ respectively. That means, if $E_{sq,par}^{cc}$ was plotted, it would be below both the bounds given in these articles. 

\begin{theorem}\label{thm:main-1}
For any  device $(\rho,{\cal M})$ and input ${\cal M}(\hat{x},\hat{y})$, there is
\begin{align}
    &E_{sq,par}^{cc}(\rho_{AB},{\cal M}(\hat{x},\hat{y}))\leq I_{AL}(\rho_{AB},{\cal M}(\hat{x},\hat{y})),\label{eq:AL2}\\
    &E_{sq,dev}^{cc}(\rho_{AB},{\cal M}(\hat{x},\hat{y}))\leq I_{FBJL+}(\rho_{AB},{\cal M}(\hat{x},\hat{y})),\\
    &E_{sq,par}^{cc}(\rho_{AB},{\cal M}(\hat{x},\hat{y}))\leq 
    E_{sq,dev}^{cc}(\rho_{AB},{\cal M}(\hat{x},\hat{y})).
\end{align}
\label{thm:below}
\end{theorem}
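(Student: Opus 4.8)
The plan is to derive all three inequalities from the two equivalence theorems already proved in this section, namely the identifications $E_{sq,par}^{cc}(\rho_{AB},{\cal M}(\hat{x},\hat{y})) = {\cal I}_{par}(\omega^*,Q^*,\hat{x},\hat{y})$ and $E_{sq,dev}^{cc}(\rho_{AB},{\cal M}(\hat{x},\hat{y})) = {\cal I}_{dev}(p(ab|\hat{x}\hat{y}))$, together with the elementary principle that a numerically \emph{plotted} curve is nothing but the objective of an infimum evaluated at one particular feasible eavesdropping strategy, hence an upper bound on that infimum.

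I would dispatch the third inequality, $E_{sq,par}^{cc}\leq E_{sq,dev}^{cc}$, purely by set inclusion of the two feasibility regions. Every device $(\sigma,{\cal N})$ with $(\sigma,{\cal N})=(\rho,{\cal M})$ (matching the full distribution) automatically satisfies $\omega(\sigma,{\cal N})=\omega(\rho,{\cal M})$ and $P_{err}(\sigma,{\cal N})=P_{err}(\rho,{\cal M})$, as recorded in the preliminaries. Thus the constraint set defining $E_{sq,dev}^{cc}$ sits inside the constraint set defining $E_{sq,par}^{cc}$, and the infimum of the same functional $E_{sq}^{cc}(\,\cdot\,,{\cal N}(\hat{x},\hat{y}))$ over the smaller set is no smaller, giving the claim at once.

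For the first inequality I would invoke the equivalence $E_{sq,par}^{cc}={\cal I}_{par}$ and recall that ${\cal I}_{par}(\omega^*,Q^*,\hat{x},\hat{y})=\inf_{\sigma\in\hat{\Sigma}(\omega^*,Q^*)}I(A:B\downarrow E)_{\sigma(\hat{x},\hat{y})}$ is an infimum over compatible extensions and squashing channels $\Lambda:E\to E'$. The curve $I_{AL}$ plotted in Ref.~\cite{AL20} is obtained by fixing one compatible device and one (possibly trivial) squashing channel, i.e., one feasible point of this infimum, so its value can only exceed the infimum, yielding $E_{sq,par}^{cc}={\cal I}_{par}\leq I_{AL}$. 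The second inequality is parallel: using $E_{sq,dev}^{cc}={\cal I}_{dev}$, I would identify the classical-Eve attack of Ref.~\cite{FBL+21} as a specific extension $\sigma_{ABE}$ with $E$ a classical register, compatible with the full distribution $p(ab|\hat{x}\hat{y})$, paired with a specific squashing; evaluating $I(A:B\downarrow E)$ on that single strategy reproduces $I_{FBJL+}$, which therefore upper bounds ${\cal I}_{dev}=E_{sq,dev}^{cc}$.

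The step demanding the most care is the second inequality, where one must verify that the construction behind the plotted curve $I_{FBJL+}$ genuinely lies inside the feasible set of the infimum defining $E_{sq,dev}^{cc}$. Concretely, the classical attack of Ref.~\cite{FBL+21} must be realizable as a channel $\Lambda$ acting on the purifying system $E$ of some device whose \emph{full} statistics match $p(ab|\hat{x}\hat{y})$ (not merely $\omega$ and $P_{err}$), so that it is a legitimate point of the optimization rather than an unrelated quantity. Unpacking that correspondence from the explicit construction in Ref.~\cite{FBL+21}, and checking the analogous feasibility identification for $I_{AL}$, is the only non-mechanical part; once each plotted attack is confirmed to be feasible, all three inequalities follow from ``value at a feasible strategy is at least the infimum'' together with the set-inclusion argument above.
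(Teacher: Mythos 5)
Your overall strategy coincides with the paper's on two of the three claims. The third inequality is proved there exactly as you propose: the constraint set $\{(\sigma,\mathcal{N}):(\sigma,\mathcal{N})=(\rho,\mathcal{M})\}$ defining $E_{sq,dev}^{cc}$ is contained in the set constrained only by $\omega$ and $P_{err}$, so the infimum can only go up. The first inequality is likewise handled in the paper by exhibiting the AL attack as a single feasible point of the optimization defining $E_{sq,par}^{cc}$: the state $\tfrac{1+C}{2}\op{\Phi^+}+\tfrac{1-C}{2}\op{\Phi^-}$ with $C=\sqrt{(\omega/2)^2-1}$, the measurements $M_A=\sigma_z$ and $M_B$ equal to $\sigma_z$ with probability $1-2P_{err}$ and a random bit otherwise, and the \emph{identity} squashing channel $\Lambda_{E\to E'}=\id$. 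Your detour through the equivalences $E_{sq,par}^{cc}=\mathcal{I}_{par}$ and $E_{sq,dev}^{cc}=\mathcal{I}_{dev}$ is harmless but unnecessary; the paper argues feasibility directly from the definitions.

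The genuine gap is in the second inequality, where you defer exactly the step that constitutes the proof. Asserting that the FBJ\L{}KA attack ``must be realizable as a channel on the purifying system of some device whose full statistics match'' restates the content of the claim; it is not an argument, and almost the entire proof in the paper consists of discharging it. Concretely, starting from the decomposition $p(ab|xy)=\sum_{i} p_i P_{L}^{(i)}(ab|xy)+qP_{NL}(ab|xy)$, the paper builds a flagged quantum strategy $(\sigma_{ABA'B'},\mathcal{N})$ with classical flag registers $A'B'$ so that $(\sigma,\mathcal{N})=(\rho,\mathcal{M})$ holds for \emph{all} inputs (note: compatibility in $E_{sq,dev}^{cc}$ means matching $p(ab|xy)$ for every $(x,y)$, not only the distribution at $(\hat{x},\hat{y})$ as your wording suggests). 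It then extends $\sigma$ to Eve's systems $EE'$ by giving Eve a copy $\sigma_i^E=\sigma_i$ of each local branch together with the flag $\op{i}_{E'}$, and a ``$?$'' symbol on the nonlocal branch; the crucial point is that because the local branches are taken \emph{deterministic}, Eve measuring her copy with the same measurement $\mathcal{N}_L^{(i)}(\hat{x},\hat{y})$ reproduces Alice's and Bob's outputs exactly, which is what the FBL+ distribution $p(abe|\hat{x}\hat{y})$ demands (general local branches are handled only afterwards, in a separate remark, via an explicit product-state realization). Finally, the fact that every extension arises from a CPTP map on the purification is used to package Eve's entire procedure—extension map, flag-controlled measurement, post-processing $\Lambda^{post}_{E|E'}$, tracing out $E'$—into one admissible squashing channel $\Lambda^{tot}_E$, so that $I_{FBJL+}\geq E_{sq}^{cc}(\sigma,\mathcal{N}(\hat{x},\hat{y}))\geq E_{sq,dev}^{cc}(\rho,\mathcal{M}(\hat{x},\hat{y}))$. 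None of this is mechanical: the determinism trick, the restriction to projective measurements with broadcast inputs, and the auxiliary register $E'$ (which hands Eve strictly more information than in FBL+, as the paper flags in a remark) must all be in place before your principle ``value at a feasible point dominates the infimum'' applies; without that construction your proposal proves only the first and third inequalities.
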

\begin{proof}
Let us first note that the last inequality from the above follows from the fact, that the set over which infimum is taken in definition of $E_{sq,dev}^{cc}$ is contained 
in the set over which infimum is taken in definition of $E_{sq,par}^{cc}$.

To obtain \eqref{eq:AL2}, observe that the bound calculated in Ref.~\cite{AL20} is 
\begin{equation}
    I(A;B|E)_{M_A\otimes M_B\otimes \textrm{id}_E\psi^{\sigma}_{ABE}},
\end{equation}
where $M_A=\sigma_z$, $M_B=\sigma_z$ with probability $1-2P_{err}$ and a random bit with probability $2P_{err}$. Here, $\psi^{\sigma}$ is the purification of the state 
\begin{align}
    \sigma &= \frac{1+C}{2}\op{\Phi^{+}}_{AB}+ \frac{1-C}{2}\op{\Phi^{-}}_{AB},\label{eq:choice-of-state}\\
    \ket{\Phi^{\pm}}_{AB} &= \frac{1}{\sqrt{2}}\left(\ket{00}\pm\ket{11}\right),\\
    C&=\sqrt{\left(\frac{\omega}{2}\right)^2-1}
\end{align}
We then obtain the following set of inequalities
\begin{align}
      &E_{sq,par}^{cc}(\rho,{\cal M}(\hat{x},\hat{y}))\nonumber\\
     & \equiv\inf_{\underset{P_{err}(\sigma,{\cal N})= P_{err}(\rho,{\cal M})}{\omega(\sigma,{\cal N})=\omega(\rho,{\cal M})}}\inf_{\Lambda:E\rightarrow E'} I(A:B|E')_{{\mathrm M}_{AB}\otimes \Lambda_E \psi^{\sigma}_{ABE}}\\
     &\leq I(A;B|E)_{M_A\otimes M_B\otimes \textrm{id}_E\psi^{\sigma}_{ABE}}
\end{align}
This follows by choosing particular strategies as specified above, where we choose the state given in \eqref{eq:choice-of-state}, and by choosing $\Lambda_{E\rightarrow E'}$ as identity. 

\iffalse
Alternatively,the plot $I_{AL}$ can be explained as the following: first in Ref.~\cite{AL20} only single measurement $(\hat{x},\hat{y})$ is considered for key generation. Second, the bound reads $I(A:B|E)$ computed on the purification of the state which 
satisfies the same values of CHSH and QBER as the honest state under honest measurements, which we denote here as $(\rho,{\cal M})$. Hence $I_{AL}$ can be only larger than $E_{sq}^{cc}(\sigma,{\cal N}(\hat{x},\hat{y}))$ where additional infimum over channels acting on system $E$ is taken of the same quantity. We have further  $E_{sq}^{cc}(\sigma,{\cal N}(\hat{x},\hat{y}))\geq E_{sq,par}^{cc}(\rho,{\cal M})$ by definition of the $E_{sq,par}^{cc}(\rho,{\cal M})$, which proves Eq. (\ref{eq:AL}).
\fi

For the plot $I_{FBJL+}$,  we first note that the FBJ\L{}KA bound~\cite{FBL+21} works for the protocols where the measurements are projective and announced after the protocol. If we then fix a single measurement $(\hat{x},\hat{y})$, we can consider this to be known to Eve. In principle, in this case, Alice and Bob need not to announce the test rounds, as they can use sublinear amount of private key needed for authentication to encrypt this information. However whenever test rounds (and so key rounds) are available to Eve, she can measure all her shares
as if they were key rounds. This strategy will lead to the same bound
as if Alice and Bob publicly announced testing and/or key rounds.

The device $p(ab|xy)$ against which the honest parties (implicitly) perform test in Ref.~\cite{FBL+21}, is quantum, hence it is expected to be $(\rho,{\cal M})$ for some honest realization via measuring ${\cal M}$ on $\rho$ promised by provider (e.g. a Werner state of some dimension, and the measurements of CHSH inequality). But it can be in fact equal to any $(\sigma,{\cal N})$
such that $(\sigma,{\cal N})=(\rho,{\cal M})$. The idea of Ref.~\cite{FBL+21} is to represent
the device $p(ab|xy)$ as
convex combination of local and non-local part, where local part
is a mixture of local conditional distributions, i.e., 
\begin{align}
   % &(\rho,{\cal M}) = \nonumber\\
    p(ab|xy)=\sum_{i=0}^{k-1} p_i P_{L}^{(i)}(ab|xy)+ q P_{NL}(ab|xy)
    \label{eq:ccattack}
\end{align}
where $\sum_{i=0}^{k-1} p_i +q =1$ for some natural $k$, $\{P_{L}^{(i)}(ab|xy)\}_i$ is a set of local conditional distributions and $P_{NL}(ab|xy)$ is nonlocal part of the device $p(ab|xy)$. In what follows, for the clarity of argument we will first assume that $P_L^i$ are deterministic. That is, for every $i$ and $x,y$ $P_D^{(i)}(ab|xy) = \delta_{(a,b),(a_i^x,b_i^y)}$. That is, for every input the outcomes are $(a_i^x,b_i^y)$ with probability $1$.
We further relax this assumption to local distributions in Remark \ref{rem:extended_proof}.

Since the devices
in the above convex combination are
quantum, they admit quantum representation so that there exist collections of measurements ${\cal N}_L^{(i)}= \{{\mathrm N}^{x,(i)}_a\otimes {\mathrm N}^{y,(i)}_b\}_{x,y}$ and ${\cal N}_{NL}=\{{\mathrm N}^{x,(NL)}_a\otimes {\mathrm N}^{y,(NL)}_b\}_{x,y}$ and states $\sigma_i$ as well as $\sigma_{NL}$ such, that 
\begin{align}
    &P_{D}^{(i)}(ab|xy) = \Tr {\cal N}_L^{(i)} \sigma_i \nonumber\\
    &P_{NL}(ab|xy) = \Tr {\cal N}_{NL} \sigma_{NL}
\end{align}
We can then define a strategy, which realizes splitting 
of a device $p(ab|xy)$ into the above devices. To this end let us define
\begin{equation}
    \sigma_{ABA'B'}=\sum_{i=0}^{k-1} p_i \sigma_i \otimes |ii\>\<ii|_{A'B'} + q \sigma_{NL}|kk\>\<kk|_{A'B'} 
\end{equation}
and
\begin{align}
    {\cal N} =\sum_{i=0}^{k-1} {\cal N}_L^{(i)}\otimes |ii\>\<ii|_{A'B'} + {\cal N}_{NL}\otimes |kk\>\<kk|_{A'B'}
    \label{eq:calN}
\end{align}
By definition there is $(\sigma,{\cal N}) =\Tr\left(\sum_i p_i{\cal N}_L^{(i)}\sigma_{L}^{i}+q{\cal N}_{NL}\sigma_{NL}\right) = \sum_ip_i P_{D}^{(i)}(ab|xy) + qP_{NL}(ab|xy) =({\cal M},\rho)$.

We are ready to define an {\it extension} of the state $\sigma_{ABA'B'}$ to
systems $EE'$ of Eve, which
realizes distribution $p(abe|x,y)$ as defined  in \cite[Eq.~(3)]{FBL+21}, given Eve learns $(x,y)$.
\begin{align}
    &\sigma_{ABA'B'EE'}=\sum_{i=0}^{k-1} p_i \sigma_i \otimes |ii\>\<ii|_{A'B'}\otimes 
\sigma_i^{E}\otimes |i\>\<i|_{E'} \nonumber\\+  & q\sigma_{NL}\otimes |kk\>\<kk|_{A'B'}\otimes |?\>\<?|_E\otimes|k\>\<k|_{E'},
\end{align}
where $\sigma^E_i=\sigma_i$ for all $i\in\{0,...,k-1\}$.

Given the system $E'$ is in state $\op{i}$ with $i\in \{0,...,k\}$, the state of Alice and Bob collapses
to $\sigma_{AB}^k$ or $\sigma_{NL}$ respectively. Then, either $i=k$ so she
learns $|?\>\<?|$ i.e. nothing from $E$, or $i <k$ and Eve 
measures $\sigma_i^E = \sigma_i$  according to ${\cal N}_L^{(i)}(x,y)$ and learn the (deterministic) outputs
of Alice and Bob $a_i^x,b_i^y$. Note here, that due to the fact that outputs of Alice and Bob are deterministic, Eve can learn them from a copy of the state $\sigma_i$, given she performs the same measurement as they do.

In particular, if the key-generation input is single, equal to $(\hat{x},\hat{y})$, Alice, Bob and Eve can generate from $\sigma_{ABA'B'EE'}$ a distribution $p(abe|\hat{x}\hat{y})$, where $e \in {\cal A}\times {\cal B}\cup \{?\}$, where ${\cal A}$  and ${\cal B}$ are the alphabets of outputs of Alice and Bob's device given input $(\hat{x},\hat{y})$. Further, as it is proposed in Ref.~\cite{FBL+21}, depending on the state of the system $E'$, Eve applies a particular post-processing map $\Lambda^{post}_{E|E'}:E\rightarrow E''$ on her classical outputs $\delta_{(a^x_i,b^y_i),e}$ and symbol "$?$" mapping them to symbols $\{\bar{e}\}$ in order to minimize the value of $I(A:B|E'')_{p(ab\bar{e}|\hat{x}\hat{y})}$ on such obtained distribution $p(ab\bar{e}|\hat{x}\hat{y})$.

It is known, that any extension of a bipartite quantum state can be obtained by a CPTP map applied to its purifying system. Hence, there exists a map $\Lambda^{ext}_E$ which produces from a purification of $\sigma_{AB}$ denoted as $|\psi^{\sigma}_{ABE}\>$, the extension in state $\sigma_{ABA'B'EE'}$. This map composed with the measurement $\mathcal{N}_L^i(\hat{x},\hat{y})$ on the Eve system, followed by  $\Lambda_{E|E'}^{post}$ and tracing out register $E'$ results in
 desired final distribution $p(ab\bar{e}|\hat{x}\hat{y})$. 

% \textcolor{brown}{Prior to performing $\Lambda_{E|E'}^{\textrm{post}}$, we need to perform the measurements on the Eve system?Also, the measurements are dependent on the outcome of the E' system, so can we bypass performing $\Lambda^{\textrm{gen}}$?}

To summarize, the distribution $p(ab\bar{e}|\hat{x}\hat{y})$ can be obtained by applying  ${\cal N}(\hat{x},\hat{y})$ on systems AB and $\Lambda^{tot}_E:=\Tr_{E'}\circ\Lambda_{E|E'}^{post}\circ {\cal N}_L(\hat{x},\hat{y})_{E|E'}\circ \Lambda^{ext}_E$ on system $E$ of the purification $|\psi^{\sigma}_{ABE}\>$ of $\sigma_{AB}$. Here by ${\cal N}_L(\hat{x},\hat{y})_{E|E'}$ we mean that given $E'$ is in state $|i\>$, ${\cal N}_L^{(i)}(\hat{x},\hat{y})$ is measured on system $E$. We thus have:
\begin{align}
I(A:B|E'')_{p(ab\bar{e}|\hat{x}\hat{y})} &=
     I(A:B|E)_{{\cal N}(\hat{x},\hat{y})\otimes \Lambda^{tot}_E|\psi^\sigma_{ABE}\>} \nonumber\\
     &\geq
    \inf_{\Lambda:E\rightarrow E''}I(A:B|E'')_{{\cal N}(\hat{x},\hat{y})\otimes \Lambda|\psi^\sigma_{ABE}\>}\nonumber\\&=E_{sq}(\sigma,{\cal N}(\hat{x},\hat{y}))\nonumber
    \\&\geq
    \inf_{(\sigma,{\cal N})=(\rho,{\cal M})}
    E_{sq}(\sigma,{\cal N}(\hat{x},\hat{y}))\nonumber\\& \equiv
    E_{sq,dev}^{cc}(\rho_{AB},{\cal M}(\hat{x},\hat{y})).
\end{align}
It suffices to note that the plot of $I_{FBJL+}$ visualises the values of the function $I(A:B|E'')_{p(ab\bar{e}|\hat{x}\hat{y})}$ attained
on the distribution $p(ab\bar{e}|\hat{x},\hat{y})$.
Hence due to the above inequalities
we obtain 
\begin{equation}
    I_{FBJL+}(\rho,{\cal M}(\hat{x},\hat{y}))\geq E_{sq,dev}^{cc}(\rho_{AB},{\cal M}(\hat{x},\hat{y}))
\end{equation}
\end{proof}

Two remarks are in due. Both showing how to fit our approach to {\it exactly} reproduce results of Ref.~\cite{FBL+21} (however not necessarily in optimal way with respect to finding upper bounds on the key rate). We first extend the above proof to the case of splitting into local rather than deterministic devices.

\begin{remark}
In the case where the devices in Eq. (\ref{eq:ccattack}) are
not deterministic, one can explicitly
specify $\sigma_i$ and ${\cal N}_L^{(i)}$ as it is explained below,
with all other parts of the proof of Theorem~\ref{thm:below} unchanged.
For each $i \in \{0,...,k-1\}$, there exists a splitting of $P_L^{(i)}(ab|xy)$ into deterministic devices 
\begin{equation}
P_L^{(i)}(ab|xy) = \sum_{j} q^{(i)} P_D^{(ij)}(a|x)P_D^{(ij)}(b|y).
\end{equation}
We can then explicitly realize the deterministic devices as
\begin{equation}
    P_D^{(ij)}(a|x) = Tr \sigma^{(ij)}_A {\cal N}^{x}_{a,L}
\end{equation}
where $\sigma^{(ij)}_A = \otimes_{l=1}^{|{\cal X}|} |a^{(ij)}_l\>\<a^{(ij)}_l|_{A_l}$, and
\begin{align}
&{\cal N}^{x}_{a,L} =\{ {\mathrm P}_{A_x}\otimes \id_{A_{l \neq x}}\},    \label{eq:Nxa}\\
&{\cal N}^{y}_{b,L} =\{ {\mathrm P}_{B_y}\otimes \id_{B_{l\neq y}}\}, \label{eq:Nyb}
\end{align}
where ${\mathrm P}_{l}$ projects system $A_l$ (or $B_l$ respectively) onto computational basis. Having defined analogously $P_D(b|y)$, we can define the state $\sigma_{ABA'B'}$ as follows 

\begin{align}
    &\sigma_{ABA'B'}= 
    \sum_{i=0}^{k-1}p_i\left(\sum_j  q_j^{(i)}\sigma^{(ij)}_A\otimes \sigma^{(ij)}_B\right)\otimes|ii\>\<ii|_{A'B'}\nonumber\\
    &+  q \sigma_{NL}\otimes |kk\>\<kk|_{A'B'},
    \label{eq:sigma_for_extension}
\end{align}
where $\sum_j  q_j^{(i)}\sigma^{(ij)}_A\otimes \sigma^{(ij)}_B=: \sigma_i$.

With 
\begin{equation}
{\cal N}^{(i)}_L(x,y) = {\cal N}^{x}_{a,L}\otimes {\cal N}_{b,L}^{y}
\label{eq:NxaNyb}
\end{equation} for all $i\in\{0,...,k-1\}$ already defined, and ${\cal N}$ defined as in Eq. (\ref{eq:calN}), we have again $(\sigma_{ABA'B'},{\cal N})= \Tr(\sum_i p_i{\cal N}^{(i)}_L\sigma_i + q {\cal N}_{NL}\sigma_{NL}) =
({\cal M},\rho)$.

We can also define extension of the  state $\sigma_{ABA'B'}$ to Eve's systems $E_AE_B$ as shown below.
\begin{align}
    &\sigma_{ABA'B'E_AE_BE'}=\sum_{i=0}^{k-1}\sum_j p_i q_j^{(i)}\sigma^{(ij)}_A\otimes \sigma^{(ij)}_B\otimes|ii\>\<ii|_{A'B'}\nonumber\\
    &\otimes \sigma^{(ij)}_{E_A}\otimes \sigma^{(ij)}_{E_B}\otimes |i\>\<i|_{E'} +  q \sigma_{NL}\otimes |kk\>\<kk|_{A'B'} \nonumber\\ & \otimes|?\>\<?|_{E_A}\otimes 
    |?\>\<?|_{E_B} \otimes |k\>\<k|_{E'},
    \label{eq:big_extension}
\end{align}
where $\sigma^{(ij)}_{E_A} = \sigma^{(ij)}_A$ and $\sigma^{(ij)}_{E_B} = \sigma^{(ij)}_B$. Note, that given knowledge of $(x,y)$ Eve can measure ${\cal N}^{(i)}_{L}(x,y)$ on her systems $E_AE_B$ and learn the outcomes of Alice and Bob. 
% Via the system $E_r$ she has additional knowledge from which deterministic device Alice and Bob take outcomes.

We have therefore specified a tripartite quantum state, from which Alice Bob and Eve generate the distribution $p(abe|xy)$ as it is specified in \cite[Eq.~(3)]{FBL+21}. In this distribution Eve is fully correlated to the outcomes of local devices, and is fully uncorrelated (having symbol "?") with the non-local device. The remaining
part of the proof of Theorem~\ref{thm:below} is the same as shown before.
\label{rem:extended_proof}
\end{remark}
In the remark below we argue that our
approach presented in Theorem~ \ref{thm:below} is slightly more general than that of Ref.~\cite{FBL+21}.
\begin{remark}
In fact the register $E'$ is not used 
in Ref.~\cite{FBL+21}. There, the distribution $P(ab\bar{e}|\hat{x}\hat{y})$ depends
only from the outputs $(a,b)$ and $"?"$ and {\it not} on
the number of a deterministic device 
that produces this output. The system $E'$ appeared in our discussion as a mean to realize the condition of Ref.~\cite{FBL+21} that Eve should obtain the outputs of Alice and Bob in case when the device shared by them is local. Whether one can achieve this goal without additional information held by the index $i$ is possible, we leave as an open problem. We also keep system $E'$ and its use in the description (proof of Theorem~\ref{thm:below}) due to the fact that it shows that Eve has more knowledge, that
may lead to potentially tighter upper bounds.
\label{rem:no_eprime}
\end{remark}

As a corollary there comes the following fact:
\begin{corollary}\label{cor:main-1}
For any device $(\rho,{\cal M})$ and a pair of inputs generating the key $(\hat{x},\hat{y})$, there is:
\begin{align}
    &K^{iid,(\hat{x},\hat{y})}_{DI,par}(\rho,{\cal M}) \leq 
       E_{sq,par}^{cc}(\rho,{\cal M}(\hat{x},\hat{y})) \\
       &\leq\min \{I_{AL}(\rho,{\cal M}(\hat{x},\hat{y})),I_{FBJL+}(\rho,{\cal M}(\hat{x},\hat{y}))\}
    \end{align}
    \label{cor:below}
\end{corollary}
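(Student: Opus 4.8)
The plan is to assemble the corollary entirely from results already in hand---namely Corollary~\ref{col:esqbound} and the three inequalities of Theorem~\ref{thm:below}---without establishing anything new. The claim factors into two independent links: the key-rate bound $K^{iid,(\hat{x},\hat{y})}_{DI,par}(\rho,{\cal M}) \leq E_{sq,par}^{cc}(\rho,{\cal M}(\hat{x},\hat{y}))$, and the minimum bound $E_{sq,par}^{cc}(\rho,{\cal M}(\hat{x},\hat{y})) \leq \min\{I_{AL}(\rho,{\cal M}(\hat{x},\hat{y})),I_{FBJL+}(\rho,{\cal M}(\hat{x},\hat{y}))\}$, which together deliver the displayed two-line chain by transitivity.

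For the first link I would simply invoke Corollary~\ref{col:esqbound}, whose statement is exactly this inequality: it asserts that the parameter-constrained DI key rate is bounded by the reduced cc-squashed entanglement, obtained there from the max-min interchange of Lemma~\ref{lem:ubound_by_key} together with the device-dependent bound $K_{DD}\leq E_{sq}^{cc}$ of Theorem~\ref{thm:Matthias}, carried over to the infimum constrained by $(\omega,P_{err})$. Nothing further is required at this step.

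For the second link I would combine the three parts of Theorem~\ref{thm:below}. Its first part gives $E_{sq,par}^{cc}\leq I_{AL}$ directly. For the FBJ\L{}KA side I would route through the $dev$ version: the third part, $E_{sq,par}^{cc}\leq E_{sq,dev}^{cc}$ (valid because the infimum defining $E_{sq,dev}^{cc}$ ranges over a subset of the devices appearing in the definition of $E_{sq,par}^{cc}$), composed with the second part $E_{sq,dev}^{cc}\leq I_{FBJL+}$, yields $E_{sq,par}^{cc}\leq I_{FBJL+}$. Holding both $E_{sq,par}^{cc}\leq I_{AL}$ and $E_{sq,par}^{cc}\leq I_{FBJL+}$, the bound by the minimum is immediate.

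There is no substantive obstacle, since every ingredient is already proved and the argument is pure bookkeeping. The one point demanding care is not to apply the $I_{FBJL+}$ inequality of Theorem~\ref{thm:below} directly to $E_{sq,par}^{cc}$---that inequality is established only for $E_{sq,dev}^{cc}$---but instead to pass first through $E_{sq,par}^{cc}\leq E_{sq,dev}^{cc}$, so that the smaller (more reduced) measure is the one controlled by the plotted FBJ\L{}KA function.
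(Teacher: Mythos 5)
Your proposal is correct and matches the paper's own proof, which likewise derives the corollary directly from Corollary~\ref{col:esqbound} (for the key-rate bound) and the three inequalities of Theorem~\ref{thm:below} (chaining $E_{sq,par}^{cc}\leq E_{sq,dev}^{cc}\leq I_{FBJL+}$ alongside $E_{sq,par}^{cc}\leq I_{AL}$). Your care in routing the FBJ\L{}KA inequality through the $dev$ version is exactly the right reading of the theorem, and nothing further is needed.
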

\begin{proof}
It holds due to the Corollary~\ref{col:esqbound} and the 
Theorem~\ref{thm:below}.
\end{proof}
To state the main theorem of this section it suffices to argue
that $E_{sq,par}^{cc}(\rho,{\cal M}(\hat{x},\hat{y}))$ is convex.
\begin{lemma}
The $E_{sq,par}^{cc}$ is convex i.e.
for every device $(\bar{\rho},{\cal M})$
and an input pair $(\hat{x},\hat{y})$
there is
\begin{align}
    &E_{sq,par}^{cc}(\bar{\rho},{\cal M}(\hat{x},\hat{y})) \leq \\ 
    &p_1E_{sq,par}^{cc}(\rho_1,{\cal M}(\hat{x},\hat{y})) +
    p_2 E_{sq,par}^{cc}(\rho_2,{\cal M}(\hat{x},\hat{y}))
\end{align}
where $\bar{\rho}=p_1\rho_1+ p_2 \rho_2$ and $p_1+p_2=1$ with $0\leq p_1 \leq 1$.
\label{lem:conv2}
\end{lemma}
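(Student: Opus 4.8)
The plan is to imitate the proof of Lemma~\ref{lem:conv}, the only new difficulty being that the near-optimal attacks for $\rho_1$ and $\rho_2$ may use \emph{different} measurements, whereas Lemma~\ref{lem:conv} assumes a fixed pair of POVMs. First I would fix $\epsilon>0$ and choose, for each $i\in\{1,2\}$, a device $(\sigma_i,{\cal N}_i)$ that is $\epsilon$-close to the infimum defining $E_{sq,par}^{cc}(\rho_i,{\cal M}(\hat{x},\hat{y}))$. By feasibility these satisfy $\omega(\sigma_i,{\cal N}_i)=\omega(\rho_i,{\cal M})$ and $P_{err}(\sigma_i,{\cal N}_i)=P_{err}(\rho_i,{\cal M})$, together with $E_{sq}^{cc}(\sigma_i,{\cal N}_i(\hat{x},\hat{y}))\le E_{sq,par}^{cc}(\rho_i,{\cal M}(\hat{x},\hat{y}))+\epsilon$.

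Next I would glue the two attacks into a single flagged device. I introduce classical flag registers $R_A,R_B$ held by Alice and Bob and set
\begin{equation}
\bar{\sigma}_{ABR_AR_B}=p_1\,\sigma_1\otimes\op{00}_{R_AR_B}+p_2\,\sigma_2\otimes\op{11}_{R_AR_B},
\end{equation}
together with the flag-controlled measurements
\begin{align}
\bar{N}^x_a&=N^x_{1,a}\otimes\op{0}_{R_A}+N^x_{2,a}\otimes\op{1}_{R_A},\\
\bar{N}^y_b&=N^y_{1,b}\otimes\op{0}_{R_B}+N^y_{2,b}\otimes\op{1}_{R_B}.
\end{align}
Since the flags are perfectly correlated, branch $i$ effectively applies ${\cal N}_i$, so the induced distribution is $p_1q_1+p_2q_2$ with $q_i(a,b|x,y)=\Tr[{\cal N}_i\sigma_i]$. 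The crucial point is that both $\omega$ and $P_{err}$ are \emph{linear} functionals of the conditional distribution, hence $\omega(\bar{\sigma},\bar{{\cal N}})=p_1\omega(\rho_1,{\cal M})+p_2\omega(\rho_2,{\cal M})=\omega(\bar{\rho},{\cal M})$ and similarly $P_{err}(\bar{\sigma},\bar{{\cal N}})=P_{err}(\bar{\rho},{\cal M})$, using that these parameters are linear in the state for fixed ${\cal M}$ and that $\bar{\rho}=p_1\rho_1+p_2\rho_2$. Thus $(\bar{\sigma},\bar{{\cal N}})$ lies in the feasible set for the infimum defining $E_{sq,par}^{cc}(\bar{\rho},{\cal M}(\hat{x},\hat{y}))$, which yields $E_{sq,par}^{cc}(\bar{\rho},{\cal M}(\hat{x},\hat{y}))\le E_{sq}^{cc}(\bar{\sigma},\bar{{\cal N}}(\hat{x},\hat{y}))$.

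It then remains to bound $E_{sq}^{cc}(\bar{\sigma},\bar{{\cal N}}(\hat{x},\hat{y}))$ by $p_1E_{sq}^{cc}(\sigma_1,{\cal N}_1(\hat{x},\hat{y}))+p_2E_{sq}^{cc}(\sigma_2,{\cal N}_2(\hat{x},\hat{y}))$. Here I repeat the extension argument of Lemma~\ref{lem:conv}: take arbitrary extensions $\sigma_{i,ABE_i}$ of $\sigma_i$, append a \emph{classical} flag $F$ held by Eve and correlated with $R_AR_B$, and assemble the block-diagonal extension of $\bar{\sigma}$. Measuring with $\bar{{\cal N}}(\hat{x},\hat{y})$ and using that $F$ is classical, the conditional mutual information splits into a convex combination, $I(A:B|E_1E_2F)=p_1\,I(A:B|E_1)+p_2\,I(A:B|E_2)$ evaluated on the respective measured branches, exactly as the register $F$ in Lemma~\ref{lem:conv} produced an average. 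Taking the infimum over the two extensions (and Eve's further squashing channel) on the right, combining with the previous step, and sending $\epsilon\to0$ gives the assertion.

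The main obstacle is precisely the step that forces this flag construction: the convexity of $E_{sq}^{cc}$ from Lemma~\ref{lem:conv} holds only at a \emph{fixed} pair of POVMs, while the two optimal attacks need not share measurements. Promoting the flags to registers that Alice and Bob actually measure—so that the single measurement $\bar{{\cal N}}$ realizes both ${\cal N}_1$ and ${\cal N}_2$ conditionally—is what keeps me inside a single instance of $E_{sq}^{cc}$. The one point that must be checked with care is that this flagged device still reproduces the required parameters $\omega(\bar{\rho},{\cal M})$ and $P_{err}(\bar{\rho},{\cal M})$, which is exactly where the linearity of $\omega$ and $P_{err}$ in the distribution is indispensable.
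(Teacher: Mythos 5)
Your proposal is correct and follows essentially the same route as the paper's proof: the same flagged state $\bar{\sigma}$ with flag-controlled measurements, the same appeal to linearity of $\omega$ and $P_{err}$ to verify feasibility, and the same classical-flag splitting of the conditional mutual information. The only cosmetic difference is that you re-derive the splitting inline via extensions, whereas the paper invokes Lemma~\ref{lem:conv} on the flagged states $\sigma_i\otimes\op{ii}_{A'B'}$ with the fixed measurement $\bar{\cal N}$ and then checks that attaching the pure flag does not change $E_{sq}^{cc}(\sigma_i,{\cal N}_i(\hat{x},\hat{y}))$.
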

\begin{proof}
Let us fix two strategies 
$(\sigma_1,{\cal N}_1)$ and
$(\sigma_2,{\cal N}_2)$ such that
$\omega(\sigma_i,{\cal N}_i)=\omega(\rho_i,{\cal M})$ and
$P_{\mbox{err}}(\sigma_i,{\cal N}_i)=P_{\mbox{err}}(\rho_i,{\cal M})$.
Consider also a state $\bar{\sigma} =
p_1\sigma_1\otimes \op{00}_{A'B'} +
p_2 \sigma_2 \otimes |11\>\<11|_{A'B'}$, and a joint strategy
${\cal N}= {\cal N}_1\otimes \op{00}_{A'B'} + {\cal N}_2\otimes |11\>\<11|_{A'B'}$. We note then that by linearity of $\omega$, there is:
\begin{align}
    &\omega(\bar{\sigma},{\cal N}) =
    \omega(\sum_i p_i \Tr{\cal N}_i\sigma_i) = \\
    &\sum_i p_i  \omega(\sigma_i,{\cal N}_i)=\sum_ip_i\omega(\rho_i,{\cal M})=\omega(\bar{\rho},{\cal M}),
    \end{align}
    where in the pre-last equality we have used the fact that strategies $(\sigma_i,{\cal N}_i)$ reproduces statistics of $(\rho_i,{\cal M})$ respectively.
    Analogously we obtain:
    \begin{align}
&P_{\mbox{err}}(\bar{\sigma},{\cal N}) =
    P_{\mbox{err}}(\bar{\rho},{\cal M})
\end{align}
This implies that
\begin{equation}
    E_{sq,par}^{cc}(\bar{\rho},{\cal M}(\hat{x},\hat{y})) \leq E_{sq}^{cc}(\bar{\sigma},{\cal N}(\hat{x},\hat{y}))
\end{equation}
since infimum over strategies is less than the value of the function taken in particular strategy described above by $(\bar{\sigma},{\cal N})$.
We use further convexity of the $E_{sq}^{cc}$ function proved in Lemma~\ref{lem:conv} to get
\begin{align}
&E_{sq,par}^{cc}(\bar{\rho},{\cal M}(\hat{x},\hat{y})) \leq p_1 E_{sq}^{cc}(\sigma_1\otimes \op{00}_{A'B'},{\cal N}(\hat{x},\hat{y})) +    \nonumber\\
& p_2 E_{sq}^{cc}(\sigma_2\otimes |11\>\<11|_{A'B'},{\cal N}(\hat{x},\hat{y}))
\end{align}

We further note that by definition of ${\cal N}$ there is:
\begin{equation}
    E_{sq}^{cc}(\sigma_1\otimes \op{00}_{A'B'},{\cal N}(\hat{x},\hat{y})) = 
    E_{sq}^{cc}(\sigma_1,{\cal N}_1(\hat{x},\hat{y})).
\end{equation}
Indeed, ${\cal N}(\hat{x},\hat{y})(\sigma_1\otimes \op{00}) = {\cal N}_1(\hat{x},\hat{y})\sigma_1\otimes \op{00}$. 
Below we have a slight change in notation. From here on instead of $I(A:B|E)_{\rho}$, we use $I(A;B|E)[\rho]$. We also represent the purification $\psi_{ABE}^{\sigma}$ as $\psi^{ABE}(\sigma)$.
Hence, denoting by $\psi(\sigma)$ a purification of a state $\sigma$ we get: 
\begin{align}
    &E_{sq}^{cc}(\sigma_1\otimes \op{00}_{A'B'},{\cal N}(\hat{x},\hat{y})) =\nonumber\\
    &E_{sq}^{cc}(\sigma_1\otimes \op{00}_{A'B'},{\cal N}_1(\hat{x},\hat{y})\otimes \id_{A'B'}) = \nonumber\\
    &\inf_{\Lambda: E\rightarrow E'}I(AA':BB'|E')[({\cal N}_1(\hat{x},\hat{y})\otimes \id_{A'B'}\otimes \Lambda_E)\nonumber \\
    &\psi^{ABA'B'E}{(\sigma_1\otimes \op{00}_{A'B'})}] =\nonumber\\
    &\inf_{\Lambda: E\rightarrow E'}I(AA':BB'|E')[({\cal N}_1(\hat{x},\hat{y})\otimes \id_{A'B'}\otimes \Lambda_E)\nonumber\\
    &\psi^{ABE}({\sigma_1})\otimes \op{00}_{A'B'})] \nonumber\\
    &=\inf_{\Lambda: E\rightarrow E'}I(A:B|E')[{\cal N}_1(\hat{x},\hat{y})\otimes \Lambda_E\psi^{ABE}({\sigma_1})]=\nonumber\\
    &E_{sq}^{cc}(\sigma_1,{\cal N}(\hat{x},\hat{y})).
\end{align}
In the second last equality we have used the fact that $I(A:B|E')[\rho_{ccq}\otimes \op{00}_{A'B'}]$ with $\rho_{ccq} := ({\cal N}_1(\hat{x},\hat{y})\otimes \id_{A'B'}\otimes \Lambda_E)
    \psi^{ABE}({\sigma_1})$ equals
    just $I(A:B|E')[\rho_{ccq}]$ since the pure state $\op{00}_{A'B'}$ does not alter the von-Neumann entropies involved in definition of the conditional mutual information.
 Similarly
 \begin{equation}
    E_{sq}^{cc}(\sigma_2\otimes |11\>\<11|_{A'B'},{\cal N}(\hat{x},\hat{y})) = 
    E_{sq}^{cc}(\sigma_2,{\cal N}_2(\hat{x},\hat{y}))
\end{equation}
Hence, there is
\begin{align}
    E_{sq,par}^{cc}(\bar{\rho},{\cal M}(\hat{x},\hat{y})) \leq & 
    p_1E_{sq}^{cc}(\sigma_1,{\cal N}_1(\hat{x},\hat{y}))+\nonumber\\
    &\quad  p_2E_{sq}^{cc}(\sigma_2,{\cal N}_2(\hat{x},\hat{y})).
\end{align}
Now, since strategies $(\sigma_i,{\cal N}_i)$ were arbitrary within their constraints, we obtain:
\begin{align}
    &E_{sq,par}^{cc}(\bar{\rho},{\cal M}(\hat{x},\hat{y})) \nonumber\\&\leq 
    p_1 E_{sq,par}^{cc}(\sigma_1,{\cal N}_1(\hat{x},\hat{y}))+
     p_2 E_{sq,par}^{cc}(\sigma_2,{\cal N}_2(\hat{x},\hat{y})),
\end{align}
hence the assertion follows.
\end{proof}
We are ready to state the main Theorem of this section. In what follows we narrow considerations to  $({\cal M},(\hat{x},\hat{y}))$ being {\it projective}, as the bound for Werner states presented in Ref.~\cite{FBL+21} applies only to this case. 

\begin{theorem}
For a Werner state $\rho_{AB}^W$
and ${\cal M}$ consisting of  projective measurements ${\mathrm P}_a^x\otimes {\mathrm P}_b^y$, and a pair of inputs $(\hat{x},\hat{y})$ used to generate the key, there is
\begin{align}
    &K_{DI,par}^{iid,(\hat{x},\hat{y})}(\rho_{AB}^W,{\cal M})\leq \nonumber\\ &\mbox{Conv}(I_{AL}(\rho_{AB}^W,{\cal M}(\hat{x},\hat{y})),I_{FBJL+}(\rho_{AB}^W,{\cal M}(\hat{x},\hat{y}))),
\end{align}
where $Conv(P_1,P_2)$ is the convex hull of the plots of functions $P_i$, and $K_{DI,par}^{iid,(\hat{x},\hat{y})}(\rho_{AB}^W,{\cal M})$ is defined with respect to $\omega = CHSH$ and $P_{err}=P(a\neq b|\hat{x}\hat{y})$.
\label{thm:convexification}
\end{theorem}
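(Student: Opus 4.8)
The plan is to squeeze the key rate between the device-independent bound $E_{sq,par}^{cc}$ and a single \emph{convex} curve, and then invoke an elementary convex-analysis fact about the lower boundary of a planar hull. By Corollary~\ref{cor:below}, for the Werner device $(\rho^W_{AB},\mathcal{M})$ I already have the pointwise chain $K_{DI,par}^{iid,(\hat{x},\hat{y})}(\rho^W_{AB},\mathcal{M}) \le E_{sq,par}^{cc}(\rho^W_{AB},\mathcal{M}(\hat{x},\hat{y})) \le \min\{I_{AL}(\rho^W_{AB},\mathcal{M}(\hat{x},\hat{y})),\,I_{FBJL+}(\rho^W_{AB},\mathcal{M}(\hat{x},\hat{y}))\}$. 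Since along the Werner family the device is specified by the single scalar $\omega$ (the QBER being determined alongside it), it suffices to show that $E_{sq,par}^{cc}$, read as a function of $\omega$, is convex, and then that every convex function lying below both plotted curves lies below their convex hull.

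First I would promote the state-convexity of Lemma~\ref{lem:conv2} to convexity in the parameter $\omega$. The key point is that the Werner states $\rho^W(\omega)$ form a segment closed under mixing, and that for the fixed honest projective measurements $\mathcal{M}$ both the Bell value $\omega(\cdot,\mathcal{M})=\Tr[\mathcal{B}_{\mathcal{M}}\,\cdot\,]$ and the QBER $P_{err}=P(a\neq b|\hat{x}\hat{y})$ are affine (indeed linear) functionals of the state. Hence a convex combination $p_1\rho^W(\omega_1)+p_2\rho^W(\omega_2)$ is again a Werner state, with parameters the corresponding combinations $p_1\omega_1+p_2\omega_2$ and $p_1 P_{err}(\omega_1)+p_2 P_{err}(\omega_2)$. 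Because $E_{sq,par}^{cc}$ depends on the device only through $(\omega,P_{err})$ (the Observation following Corollary~\ref{col:esqbound}), applying Lemma~\ref{lem:conv2} to this mixture gives $E_{sq,par}^{cc}(\omega) \le p_1 E_{sq,par}^{cc}(\omega_1)+p_2 E_{sq,par}^{cc}(\omega_2)$ whenever $\omega=p_1\omega_1+p_2\omega_2$; that is, $E_{sq,par}^{cc}$ is a convex function of $\omega$.

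Finally I would close with the observation that, by definition of the lower boundary of a planar convex hull, $\mathrm{Conv}(I_{AL},I_{FBJL+})$ is the greatest convex function that is pointwise $\le \min\{I_{AL},I_{FBJL+}\}$ (the convex envelope of that minimum). Since $E_{sq,par}^{cc}$ is convex and, by Corollary~\ref{cor:below}, pointwise below this minimum, it is a convex minorant of it and hence $E_{sq,par}^{cc}\le \mathrm{Conv}(I_{AL},I_{FBJL+})$; composing with $K_{DI,par}^{iid,(\hat{x},\hat{y})}\le E_{sq,par}^{cc}$ yields the claim. I expect the only real obstacle to be the first step: one must verify carefully that the restriction of $E_{sq,par}^{cc}$ to the Werner line is genuinely a function of the scalar $\omega$, so that state-convexity transfers to scalar convexity, which relies jointly on the linearity of $\omega$ and $P_{err}$ in the state and on closure of the Werner family under mixing; and the measurements must be kept projective throughout, since $I_{FBJL+}$ is only defined in that case.
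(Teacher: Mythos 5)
Your proposal is correct and takes essentially the same route as the paper's proof: the pointwise chain from Corollary~\ref{cor:below}, convexity of $E_{sq,par}^{cc}$ via Lemma~\ref{lem:conv2}, and the conclusion that a convex function lying below both plots lies below their convex hull. The extra details you spell out---closure of the Werner family under mixing, linearity of $\omega$ and $P_{err}$ in the state, and the fact that $E_{sq,par}^{cc}$ depends on the device only through $(\omega,P_{err})$---are precisely the implicit facts the paper uses when it reads $E_{sq,par}^{cc}$ as a convex curve in the plotted parameter.
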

\begin{proof}
For the proof it suffices to note that by Corollary~\ref{cor:below} we have
\begin{align}
    &K^{iid,(\hat{x},\hat{y})}_{DI,par}(\rho^W_{AB},{\cal M})  \nonumber\\ &\leq E_{sq,par}^{cc}(\sigma,{\cal M}(\hat{x},\hat{y}))\nonumber\\ &\leq \min\{I_{AL}(\sigma_{AB},{\cal M}(\hat{x},\hat{y})),
    I_{FBJL+}(\sigma_{AB},{\cal M}(\hat{x},\hat{y}))\}
\end{align}
Now, by Lemma~\ref{lem:conv2} the $E_{sq,par}^{cc}$ is convex. It is also below the plots of $I_{AL}(\sigma_{AB},{\cal M}(\hat{x},\hat{y}))$ and $I_{FBJL+}(\sigma_{AB},{\cal M}(\hat{x},\hat{y})$, due to the above inequality. As such, it must be below their convex hull. It also upper bounds the key, hence the key must be below the convex hull of the plots of $I_{AL}$ and $I_{FBJL+}$ as well.
\end{proof}

\begin{figure}
    \centering
    \includegraphics[width=1\linewidth]{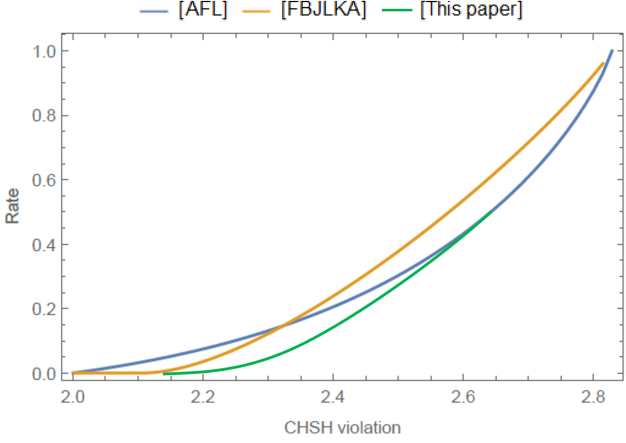}
    \caption{In this figure, we show the plots for standard device-independent CHSH protocol obtained in Refs.~\cite{AL20}, \cite{FBL+21}, and the upper bound given in Theorem~\ref{thm:convexification-m}, which is the convex hull of the former bounds, depicted in green.}
    \label{Convex-hull}
\end{figure}
\subsection{Extension to more measurements}
One can consider the function $E_{sq}^{cc}$ for multiple measurements defined as follows:
\begin{definition}
The cc-squashed entanglement of the collection of measurements ${\cal M}$ measured with distribution $p(x,y)$ of the inputs reads:
\begin{equation}
E^{cc}_{sq}(\rho_{AB},{\cal M},p(x,y)):=\sum_{x,y} p(x,y) E_{sq}^{cc}(\rho_{AB},{\mathrm M}_{x,y}).
\end{equation}
\end{definition}
Similarly to Observation \ref{obs:equiv} we have that
\begin{align}
    &E_{sq}^{cc}(\rho,{\cal M},p(x,y)) = \nonumber\\ &\sum_{x,y}p(x,y)\inf_{\rho_{ABE}=Ext(\rho_{AB})} I(A:B|E)_{{\mathrm M}_{x,y}\otimes \id_E \rho_{ABE}} 
\end{align}
We note here that the extensions $\rho_{ABE}$ can be different for different choices for $(x,y)$. 
We then note a general fact that
a convex combination of convex functions is a convex function itself.
\begin{lemma}
Let $\{f_i\}$ be the set of convex functions. Then for every distribution $\{p_i\}$ the function $\sum_i p_i f_i$ is convex.
\end{lemma}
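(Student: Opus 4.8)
The plan is to verify the defining inequality of convexity directly for the combined function $g := \sum_i p_i f_i$, reducing everything to the convexity already assumed for each summand. I would fix arbitrary points $x,y$ in the common domain of the $f_i$ together with a parameter $\lambda\in[0,1]$, and aim to establish $g(\lambda x+(1-\lambda)y)\le \lambda\, g(x)+(1-\lambda)\, g(y)$, which is precisely the assertion that $g$ is convex.

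First I would invoke the convexity of each individual $f_i$, which gives $f_i(\lambda x+(1-\lambda)y)\le \lambda f_i(x)+(1-\lambda)f_i(y)$ for every index $i$. The key observation is that each coefficient $p_i$ is nonnegative, since $\{p_i\}$ is a probability distribution; hence multiplying the $i$-th inequality by $p_i$ preserves its direction. Summing the resulting scaled inequalities over $i$ and using linearity of the sum then yields
$$\sum_i p_i\, f_i(\lambda x+(1-\lambda)y)\ \le\ \lambda \sum_i p_i\, f_i(x)+(1-\lambda)\sum_i p_i\, f_i(y),$$
which is exactly $g(\lambda x+(1-\lambda)y)\le \lambda\, g(x)+(1-\lambda)\, g(y)$, as required.

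There is no genuine obstacle here: the statement is an elementary closure property of the class of convex functions. The single point that warrants attention is the sign of the weights, and it is the only place the hypothesis is used --- the aggregation step relies on $p_i\ge 0$ so that the per-index convexity inequalities combine in the correct direction (a negative weight would flip one of them). Normalization $\sum_i p_i=1$ is in fact not needed for convexity, as any nonnegative combination of convex functions is convex, but it holds automatically in our setting. Finally, in the application of interest each summand $f_i=E_{sq}^{cc}(\,\cdot\,,{\mathrm M}_{x,y})$ is convex in the state by Lemma~\ref{lem:conv}, so this lemma immediately delivers convexity of $E_{sq}^{cc}(\rho_{AB},{\cal M},p(x,y))$ as a function of $\rho_{AB}$.
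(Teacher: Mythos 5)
Your proof is correct and follows essentially the same route as the paper's: apply the convexity inequality to each $f_i$, scale by the nonnegative weights $p_i$, and sum. Your added remarks (that only $p_i\ge 0$ matters, not normalization) are accurate but do not change the argument.
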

\begin{proof}
Let $x = p x_1 +(1-p)x_2$ then,
\begin{align}
\sum_i p_i f_i (x)& \leq \sum_i p_i (p f_i (x_1) + (1-p) f_i(x_2))  \nonumber\\
&=p\sum_i p_i f_i(x_1) + (1-p)\sum_i p_i f_i(x_2).
\end{align}
\end{proof}
From the above lemma it follows that
due to convexity of $E_{sq}^{cc}(\rho,{\mathrm M})$ the
function $E_{sq}^{cc}(\rho, {\cal M},p(x,y))$ is convex. Further, due to convexity of the latter function we have that the 
analogously defined reduced version of this function 
\begin{equation}
    E_{sq,dev}^{cc}(\rho, {\cal M},p(x,y)):=
    \inf_{{(\sigma,{\cal N})=(\rho,{\cal M})}}E_{sq}^{cc}(\sigma,{\cal N},p(x,y))
\end{equation}
is also convex (via analogous lemma 
to \ref{lem:conv2}).

It will appear crucial to notice,
that in DI QKD it is assumed, that the distribution of inputs $p(x,y)$
is drawn from a private shared randomness held by Alice and Bob, which is independent of the device $(\rho,{\cal M})$ (In most cases $p(x,y)$ is the uniform distribution. Otherwise sharing private correlations in order to choose inputs based on these correlations would imply sharing private key. It would be then no sense to run a DI QKD, given Alice and Bob already share the key in form of these correlations). Due to this ``free will'' assumption, it is not known to Eve for each run which $(x,y)$ was chosen by Alice and Bob. This means that a priori Eve does not
have access to systems $E_xE_y$ of an extension of the form
\begin{equation}\label{eq:eve-extension-measurements}
    \sum_{x,y}p(x,y){\mathrm M}_{x,y}\id_E\rho_{ABE}^{(x,y)}\otimes |xy\>\<xy|_{E_xE_y},
\end{equation}
where $\rho_{ABE}^{(x,y)}$ is an extension of $\rho_{AB}$ for each $(x,y)$. However, under assumption that
{\it Alice and Bob make the announcements for the choice of measurements and Eve subsequently learns this measurement} \cite{FBL+21}, Eve can have access to the extensions given in \eqref{eq:eve-extension-measurements}. To obtain these extensions, we can assume that the eavesdropper can act on its quantum system by a map $\Lambda_{E\rightarrow E_xE_y}$ which is dependent on the measurements $(x,y)$.  It is crucial for further considerations that the Eve
has access to the above extension. 
To make this assumption explicit
we will consider the following QDI key rate:
\begin{align}
    &K_{DI,dev}^{iid,broad}(\rho,{\cal M},p(x,y)):= \nonumber\\
    &\inf_{\epsilon>0}\limsup_n\sup_{{\cal P}\in LOPC}\inf_{{(\sigma,{\cal N})\approx_\epsilon(\rho,{\cal M})}}\nonumber\\
     &\kappa^{\epsilon}_n({\cal P}([\sum_{x,y}p(x,y){\mathrm N}_{xy}\otimes \id_E(\psi^\sigma_{ABE}\otimes |xy\>\<xy|_{E_xE_y})]^{\otimes n})),
\end{align}
where by {\it broad} we mean that $(x,y)$ are broadcasted, and made
explicit by adding systems $E_xE_y$ to 
Eve. 

We denote the action of broadcasting the values of $(x,y)$ (creating systems $E_xE_y$) as ${\cal C}$. This allows us to state the following technical lemma:

\begin{lemma}
The function $E_{sq}^{cc}(\rho,{\cal C}\circ\sum_{x,y} p(x,y) {\mathrm M}_{x,y})$ is convex in the second argument i.e. 
\begin{equation}
    E_{sq}^{cc}(\rho,{\cal C}\circ\sum_{x,y} p(x,y) {\mathrm M}_{x,y}) \leq 
    \sum_{x,y} p(x,y) E_{sq}^{cc}(\rho,{\mathrm M}_{x,y})
\end{equation}
\label{lem:technical}
\end{lemma}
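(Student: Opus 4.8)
The plan is to exhibit one feasible squashing strategy on the left-hand side whose value is exactly the right-hand side, so that the infimum defining $E_{sq}^{cc}$ of the combined operation lies below it. First I would write out the measured-and-broadcast state explicitly: applying ${\cal C}\circ\sum_{x,y}p(x,y){\mathrm M}_{x,y}$ to the purification $\psi^\rho_{ABE}$ produces
\begin{equation}
\sigma_{ABEE_xE_y}=\sum_{x,y}p(x,y)\,\big({\mathrm M}_{x,y}\otimes\id_E\,\psi^\rho_{ABE}\big)\otimes|xy\>\<xy|_{E_xE_y},
\end{equation}
where $AB$ now carry the classical outcomes and $E_xE_y$ is the classical register recording the broadcast inputs. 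Because ${\cal C}$ hands $E_xE_y$ to Eve, the environment on which the squashing channel of $E_{sq}^{cc}$ acts is the joint system $EE_xE_y$, so $E_{sq}^{cc}$ of the combined operation is the infimum of $I(A:B|E')$ over channels $\Lambda:EE_xE_y\to E'$ applied to $\sigma_{ABEE_xE_y}$.

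The key step is to \emph{choose}, rather than optimise over, a specific squashing channel. For each $(x,y)$ let $\Lambda^{x,y}_{E\to E'}$ be a channel that is $\varepsilon$-close to optimal for ${\mathrm M}_{x,y}$, i.e. $I(A:B|E')_{\Lambda^{x,y}_E[{\mathrm M}_{x,y}\otimes\id_E\psi^\rho_{ABE}]}\le E_{sq}^{cc}(\rho,{\mathrm M}_{x,y})+\varepsilon$. I then take the controlled channel that reads the classical register $E_xE_y$, applies $\Lambda^{x,y}$ to $E$ when $E_xE_y=(x,y)$, and retains $E_xE_y$ in the output conditioning system. This is a legitimate element of the feasible set of squashing channels, so $E_{sq}^{cc}$ of the combined operation is bounded above by the resulting $I(A:B|E'E_xE_y)$.

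Finally I would invoke the fact that conditioning on a classical register splits the conditional mutual information into its average: since $E_xE_y$ is diagonal and records $(x,y)$, one has $I(A:B|E'E_xE_y)=\sum_{x,y}p(x,y)I(A:B|E')_{\Lambda^{x,y}_E[{\mathrm M}_{x,y}\otimes\id_E\psi^\rho_{ABE}]}$, which by the near-optimality of $\Lambda^{x,y}$ is at most $\sum_{x,y}p(x,y)E_{sq}^{cc}(\rho,{\mathrm M}_{x,y})+\varepsilon$; letting $\varepsilon\to0$ yields the claim. The only delicate point is the justification that the squashing channel for the combined operation may act jointly on $EE_xE_y$ and keep the broadcast register in the conditioning system. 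This is precisely what the ``broad'' assumption that Eve learns $(x,y)$ provides, and it is exactly what allows a single infimum on the left to reproduce the per-measurement infima on the right; without it the inequality between $\inf\sum$ and $\sum\inf$ would run the wrong way.
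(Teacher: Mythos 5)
Your proof is correct and takes essentially the same route as the paper's: both construct the flagged extension $\sum_{x,y}p(x,y)\,{\mathrm M}_{x,y}\otimes\id_E\,\psi^{\rho}_{ABE}\otimes\op{xy}_{E_xE_y}$, restrict the infimum to a squashing channel controlled on the broadcast register that applies a per-input channel $\Lambda^{x,y}_E$ while retaining $E_xE_y$, and use that conditioning on the classical register averages the conditional mutual information. Your use of $\varepsilon$-optimal channels followed by $\varepsilon\to 0$ is merely a cosmetic variant of the paper's choice to keep the per-input channels arbitrary and pass to the infimum at the final step.
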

\begin{proof}
We can write
\begin{multline}
    E_{sq}^{cc}(\rho,{\cal C}\circ\sum_{x,y} p(x,y) {\mathrm M}_{x,y})=\\ \inf_{\Lambda_{EE_xE_y}}I(A;B|EE_xE_y)_{\Lambda_{EE_xE_y}(\rho_{ABEE_xE_y})}.
\end{multline}
We have constructed a particular extension of $\rho$ measured by $\sum_{x,y} p(x,y) {\mathrm M}_{x,y}$ as follows:
\begin{align}
    &\rho_{ABEE_xE_y}\coloneqq \nonumber \\ & \sum_{x,y} p(x,y) 
    {\mathrm M}_{x,y}\otimes \id_{EE_xE_y} \sigma_{ABE}\otimes |xy\>\<xy|_{E_xE_y},
\end{align}
where $\sigma_{ABE}$ is an arbitrary extension of the state $\rho$. The map $\Lambda_{EE_xE_y}$ is arbitrary. 
The access to the registers $E_xE_y$ is assured by application of a broadcasting map ${\cal C}$ after performing the measurement.
It is straightforward to see that
upon tracing out $E,E_x,E_y$ we obtain
 $\rho$ measured by
a convex combination of ${\mathrm M}_{x,y}$. 
Now, let us choose a particular map of the form
\begin{equation}
    \tilde{\Lambda}_{EE_xE_y} = \sum_{x,y}\tilde{\Lambda}_E^{x,y}\otimes \op{xy}_{E_xE_y},
\end{equation}
where $\tilde{\Lambda}_{E}^{x,y}$ is arbitrary. We then obtain
\begin{align}
    &E_{sq}^{cc}(\rho,{\cal C}\circ\sum_{x,y} p(x,y) {\mathrm M}_{x,y})\\&\leq  I(A;B|EE_xE_y)_{\tilde{\Lambda}_{EE_xE_y}(\rho_{ABEE_xE_y})}\\
    &=\sum_{x,y}p(x,y)I(A;B|E)_{M_{x,y}\otimes\tilde{\Lambda}_E^{x,y}(\sigma_{ABE})},
\end{align}
where $\tilde{\Lambda}_{EE_xE_y}(\rho_{ABEE_xE_y})= \sum_{x,y} p(x,y) 
    {\mathrm M}_{x,y}\otimes \tilde{\Lambda}_E^{x,y}\otimes\id_{E_xE_y} \sigma_{ABE}\otimes |xy\>\<xy|_{E_xE_y}$. 
Since $\tilde{\Lambda}_E^{x,y}$ is an arbitrary map, we obtain
\begin{align}
&E_{sq}^{cc}(\rho,{\cal C}\circ\sum_{x,y} p(x,y) {\mathrm M}_{x,y})\\&\leq \sum_{x,y}p(x,y)\inf_{\Lambda_E^{x,y}}I(A;B|E)_{M_{x,y}\otimes{\Lambda}_E^{x,y}(\rho_{ABE})}
\\&=\sum_{x,y}p(x,y)E_{sq}^{cc}(\rho,{\mathrm M}_{x,y})
\end{align}
\iffalse

Since the above choice is particular we have
\begin{align}
    &E_{sq}^{cc}(\rho,\mathcal{C}\circ\sum_{x,y} p(x,y) {\mathrm M}_{x,y}) \leq 
     I(A:B|EE_xE_y)[\rho_{ABEE_xE_y}]\\
    &=\sum_{x,y} p(x,y) I(A:B|E)[{\mathrm M}_{x,y}\otimes\id_E (\sigma^{(x,y)}_{ABE})].
    \label{eq:flags}
\end{align}
The above inequality follows from the following observation, (analogous to Observation~\ref{obs:equiv}),
\begin{align}
    &E_{sq}^{cc}(\rho,\mathcal{C}\circ\sum_{x,y} p(x,y) {\mathrm M}_{x,y})  =\nonumber\\ &\inf_{\sigma_{ABEE_xE_y}=Ext(\rho)}I(A:B|E)_{\sum_{x,y}p(x,y){\mathrm M}_{x,y}\otimes \id_E\sigma_{ABEE_xE_y}}.
\end{align}
Hence, fixing particular extension to be $\rho_{ABEE_xE_y}$ only increases the value.
In the equality (\ref{eq:flags}) we use the fact that conditioning the mutual information upon classical register $E_xE_y$ is equal to average conditional mutual information. Since
$\sigma^{(x,y)}_{ABE}$ were chosen arbitrarily, they can realize the infima given in def. of $E_{sq}^{cc}(\rho,{\mathrm M}_{x,y})$ respectively,
\fi
This concludes the proof.
\end{proof}
We note now,
that $E_{sq}^{cc}(\rho, {\cal M},p(x,y))$ is
an upper bound for distillable key
of the state $\sum_{x,y}p(x,y)M_{x,y}\otimes
\id_E\op{\psi^\rho}_{ABE}\otimes |xy\>\<xy|_{E_xE_y}$.

\begin{theorem}
For a bipartite state $\rho$ and a set of measurements ${\cal M}$, performed with probabilities $p(x,y)$ on it, there is

\begin{align}
&K_{DD}\left(\sum_{x,y} p(x,y){\mathrm M}_{x,y}\otimes \id_E |\psi_{\rho}\>\<\psi_\rho|\otimes |xy\>\<xy|_{E_xE_y}\right) \nonumber\\
& \leq E_{sq}^{cc}(\rho,{\cal M},p(x,y)).
\end{align}
\label{thm:Matthias_mixed_meas}
\end{theorem}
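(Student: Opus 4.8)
The plan is to obtain the bound by chaining two results already in hand: the CEHHOR key bound invoked in Theorem~\ref{thm:Matthias}, and the convexity estimate of Lemma~\ref{lem:technical}. The key observation is that the state in the statement is nothing but the purification $\psi^\rho_{ABE}$ acted on by the broadcast measurement ${\cal C}\circ\sum_{x,y}p(x,y){\mathrm M}_{x,y}$, so it is precisely a \emph{measured purification} of the type to which Theorem~\ref{thm:Matthias} applies.

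First I would set
\begin{equation}
\rho_{ccq}:=\sum_{x,y} p(x,y){\mathrm M}_{x,y}\otimes \id_E |\psi_\rho\>\<\psi_\rho|_{ABE}\otimes |xy\>\<xy|_{E_xE_y},
\end{equation}
and read it as a tripartite ccq state in which Alice holds the classical outcome register $A$, Bob holds $B$, and the eavesdropper holds the purifying system $E$ \emph{together with} the broadcast flags $E_xE_y$. This is exactly the structure to which \cite[Theorem~3.5]{CEHHOR}, used in the proof of Theorem~\ref{thm:Matthias}, applies, now with Eve's total system taken to be $EE_xE_y$. Invoking that bound yields $K_{DD}(\rho_{ccq})\leq I(A:B\downarrow EE_xE_y)_{\rho_{ccq}}$.

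Next I would note that the right-hand intrinsic information is, by definition, the infimum of $I(A:B|E')$ over squashing channels $\Lambda_{EE_xE_y\to E'}$ acting on Eve's joint system, which is precisely the representation of $E_{sq}^{cc}$ for the broadcast measurement spelled out in the proof of Lemma~\ref{lem:technical}; hence $I(A:B\downarrow EE_xE_y)_{\rho_{ccq}}=E_{sq}^{cc}(\rho,{\cal C}\circ\sum_{x,y}p(x,y){\mathrm M}_{x,y})$. Applying Lemma~\ref{lem:technical} directly then gives $E_{sq}^{cc}(\rho,{\cal C}\circ\sum_{x,y}p(x,y){\mathrm M}_{x,y})\leq \sum_{x,y}p(x,y)E_{sq}^{cc}(\rho,{\mathrm M}_{x,y})=E_{sq}^{cc}(\rho,{\cal M},p(x,y))$, and concatenating the three inequalities closes the proof.

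The main obstacle is the correct bookkeeping in the identification step: one must check that placing the classical setting flags $E_xE_y$ on Eve's side (rather than with Alice and Bob) is consistent both with the $K_{DD}$ distillation task and with the squashing infimum defining $E_{sq}^{cc}$ of the broadcast measurement, so that the CEHHOR bound and the representation from Lemma~\ref{lem:technical} refer to the \emph{same} conditional mutual information. Once this matching is in place, the remaining two steps are immediate invocations of previously established results and require no further computation.
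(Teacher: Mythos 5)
Your proposal is correct and follows essentially the same route as the paper's own proof: invoke \cite[Theorem~3.5]{CEHHOR} on the very same ccq state $\rho_{ccq}$ with Eve holding $EE_xE_y$, identify the resulting intrinsic information with $E_{sq}^{cc}(\rho,{\cal C}\circ\sum_{x,y}p(x,y){\mathrm M}_{x,y})$, and conclude via Lemma~\ref{lem:technical}. Your explicit bookkeeping of the broadcast flags $E_xE_y$ on Eve's side makes precise an identification the paper states only implicitly, but it introduces no new idea.
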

\begin{proof}
The proof follows from \cite[Theorem 3.5]{CEHHOR} for 
a tripartite ccq state $\rho_{ccq}:=\sum_{x,y}p(x,y){\mathrm M}_{x,y}\otimes \id |\psi_{\rho}\>\<\psi_\rho|\otimes |xy\>\<xy|_{E_xE_y}$, and
noticing that $K_{DD}(\rho_{ccq})\leq I(A:B\downarrow E)_{\rho_{ccq}} =E_{sq}^{cc}(\rho,{\cal C}\circ\sum_{x,y}p(x,y){\mathrm M}_{x,y})\leq \sum_{x,y}p(x,y) E_{sq}^{cc}(\rho,{\mathrm M}_{x,y}) \equiv E_{sq}^{cc}(\rho,{\cal M},p(x,y))$,
where the last inequality follows from Lemma~\ref{lem:technical}.
\end{proof}
We are ready to formulate the analogue
of the Corollary~ \ref{cor:reduced_dev}.

\begin{corollary}
The iid quantum device independent key achieved by protocols using measurements of a device $(\rho,{\cal M})$, with probability $p(x,y)$ is upper bounded as follows:    

\begin{align}
    &K_{DI,dev}^{iid,broad}(\rho,{\cal M},p(x,y))\equiv \nonumber\\
    &\inf_{\epsilon>0}\limsup_n\sup_{{\cal P}\in LOPC}\inf_{{(\sigma,{\cal N})\approx_\epsilon(\rho,{\cal M})}}\nonumber\\
    & \kappa^{\epsilon}_n({\cal P}([\sum_{x,y}p(x,y){\mathrm N}_{xy}\otimes \id_E(|\psi_\sigma\>\<\psi_\sigma|\otimes|xy\>\<xy|_{E_xE_y})]^{\otimes n}))
     \label{eq:di_par}\\ &\leq\inf_{{(\sigma,{\cal N})=(\rho,{\cal M})}}E_{sq}^{cc}(\sigma,{\cal N},p(x,y)) =: E_{sq,dev}^{cc}(\rho,{\cal M},p(x,y)),
\end{align}
where ${\mathrm N}_{xy}$ are measurements induced by $(x,y)$ on ${\cal N}$ respectively.
\label{col:esqbound2}
\end{corollary}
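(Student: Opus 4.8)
The plan is to mirror exactly the two-step argument used for a single pair of inputs in Lemma~\ref{lem:ubound_by_key} and Corollary~\ref{cor:reduced_dev}, now substituting the multi-measurement key bound of Theorem~\ref{thm:Matthias_mixed_meas} for the single-measurement bound of Theorem~\ref{thm:Matthias}. Since the hard analytic content has already been isolated in Theorem~\ref{thm:Matthias_mixed_meas}, what remains is a chain of elementary inequalities on the optimization structure defining $K_{DI,dev}^{iid,broad}$.

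First I would start from the definition of $K_{DI,dev}^{iid,broad}(\rho,{\cal M},p(x,y))$ and apply the max-min inequality for $\sup$ and $\limsup$ (as in the first step of the proof of Lemma~\ref{lem:ubound_by_key}, following \cite{CFH20}) to pull the inner infimum over $\varepsilon$-compatible devices $(\sigma,{\cal N})\approx_\epsilon(\rho,{\cal M})$ to the outside of the $\limsup_n\sup_{\cal P}$. Next I would narrow this infimum from the set of $\approx_\epsilon$-compatible devices to the smaller set of exactly compatible devices $(\sigma,{\cal N})=(\rho,{\cal M})$; restricting the feasible set can only increase the value, which preserves the direction of the desired inequality. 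After interchanging the order of the two infima (over $\epsilon$ and over devices), for each fixed compatible $(\sigma,{\cal N})$ the remaining expression $\inf_\epsilon\limsup_n\sup_{\cal P}\kappa^\epsilon_n$ evaluated on the $n$-fold tensor power of the broadcast state is, by definition, the device-dependent distillable key $K_{DD}$ of the tripartite ccq state $\sum_{x,y}p(x,y){\mathrm N}_{xy}\otimes\id_E(|\psi_\sigma\>\<\psi_\sigma|\otimes|xy\>\<xy|_{E_xE_y})$, where Eve holds $E$ together with the broadcast registers $E_xE_y$.

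The final step is to invoke Theorem~\ref{thm:Matthias_mixed_meas} on this ccq state, which gives $K_{DD}(\cdots)\leq E_{sq}^{cc}(\sigma,{\cal N},p(x,y))$, and then to take the infimum over all $(\sigma,{\cal N})=(\rho,{\cal M})$, arriving at $E_{sq,dev}^{cc}(\rho,{\cal M},p(x,y))$ by its definition.

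The step I expect to require the most care is the bookkeeping of the broadcast registers $E_xE_y$: one must verify that throughout the max-min manipulation and the identification with $K_{DD}$, the LOPC protocol ${\cal P}$ acts only on the Alice--Bob systems while the announced-input registers $E_xE_y$ remain consistently attached to Eve's side, so that the ccq state fed into Theorem~\ref{thm:Matthias_mixed_meas} is exactly the one on which that theorem is stated. The measure-theoretic subtlety of exchanging $\inf_\epsilon\limsup_n$ with the device infimum is handled by the same max-min inequality as in \cite{CFH20} and carries over verbatim; the genuinely new content is packaged entirely in Theorem~\ref{thm:Matthias_mixed_meas}.
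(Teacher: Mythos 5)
Your proposal is correct and follows essentially the same route as the paper's own proof: the paper likewise reduces $K_{DI,dev}^{iid,broad}(\rho,{\cal M},p(x,y))$, ``along similar lines as the proof of Lemma~\ref{lem:ubound_by_key}'' (max--min inequality, then narrowing to exactly compatible devices), to $\inf_{(\sigma,{\cal N})=(\rho,{\cal M})} K_{DD}$ of the broadcast ccq state with $E_xE_y$ on Eve's side, and then concludes by invoking Theorem~\ref{thm:Matthias_mixed_meas}. Your explicit attention to keeping the announced-input registers attached to Eve throughout is precisely the bookkeeping left implicit in the paper's two-line proof.
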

\begin{proof}
It follows from similar lines as the proof of the Lemma~\ref{lem:ubound_by_key} to show that

\begin{align}
    &K_{DI,dev}^{iid,broad}(\rho,{\cal M},p(x,y)) \leq \inf_{{(\sigma,{\cal N})=(\rho,{\cal M})}}\nonumber\\ & K_{DD} (\sum_{x,y}p(x,y){\mathrm M}_{x,y}\otimes \id_E |\psi_\rho\>\<\psi_\rho|\otimes |xy\>\<xy|_{E_xE_y}).
\end{align}

The assertion follows then from the Theorem ~\ref{thm:Matthias_mixed_meas}.
\end{proof}

Let us note, that the above bound
is in principle tighter than the one
considered in Ref.~\cite{FBL+21}, as it is stated in the Theorem below.

\begin{theorem}
The function $E_{sq,dev}^{cc}(\rho,{\cal M},p(x,y))$ is (i) a convex upper bound on $K_{DI,dev}^{iid,broad}(\rho,{\cal M},p(x,y))$ and (ii) a lower bound to 
the upper bound given in \cite[Eq.~(5)]{FBL+21}.
\end{theorem}
\begin{proof}
The first part of the proof follows
from the Corollary~\ref{col:esqbound2}.
The convexity of this upper bound has been already observed, as analogous to
the one of $E_{sq,par}^{cc}$ proved in the Lemma~\ref{lem:conv2}. We focus now on showing that this function is a lower bound to the upper bound given in Ref.~\cite{FBL+21}.

Let us first restrict the attacks to
such that the channel $\Lambda$ involved in
definition of the $E_{sq,dev}^{cc}(\rho,{\cal M},p(x,y))$ is a POVM i.e. has only classical outputs,
denoted as $\Lambda^{cl}_E$. In such a case we have
\begin{align}
    &E_{sq,dev}^{cc}(\rho,{\cal M},p(x,y))\leq \inf_{{(\sigma,{\cal N})=(\rho,{\cal M})}}\sum_{x,y}p(x,y)
    \nonumber\\
    &\inf_{\Lambda^{post}_{E}\circ\Lambda^{cl}_E } I(A:B|E)[{\mathrm N}_{xy}\otimes \Lambda^{post}_{E}\circ\Lambda^{cl}_E  |\psi_\sigma\>] \\
    &=\inf_{{(\sigma,{\cal N})=(\rho,{\cal M})}}  \sum_{x,y}p(x,y) \inf_{\Lambda_E^{cl}}I(A:B\downarrow E)[{\mathrm N}_{xy}\otimes \Lambda^{cl}_E |\psi_\sigma\>] \nonumber\\
    &\leq \sum_{x,y}p(x,y)I(A:B\downarrow E)[{\tilde {\mathrm N}}_{xy}\otimes {\tilde \Lambda_E^{cl}}|\psi_{\tilde{\sigma}}\>]
    \label{eq:fixing} \\
    &\equiv \sum_{p(x,y)}p(x,y)I(A:B\downarrow EE')[p(abei|xy)] \leq  \label{eq:intrinsic}\\
    & \sum_{p(x,y)}p(x,y)I(A:B\downarrow E)[p(abe|xy)],
\end{align}
where $I(A:B\downarrow E)[{p(abe|xy)}]$ is the {\it intrinsic information} of the distribution $p(abe|xy)$. 
(In the last line we have obtained the bound given in \cite[Eq.~(5)]{FBL+21}).

The first
inequality is due to restriction of the infimum to that over POVMs with classical outputs only. The first equality follows from using the definition of intrinsic information which absorbs minimization over channels $\Lambda^{post}_{E}$. The inequality (\ref{eq:fixing}) follows from (i) fixing a particular choice of the attack $({\tilde {\cal N}},\tilde{\sigma}):=({\cal N},\sigma)$, where $\sigma$ is given in Eq. (\ref{eq:sigma_for_extension}) and ${\cal N}$ is defined via (\ref{eq:Nxa}), (\ref{eq:Nyb}) and (\ref{eq:calN})  (ii)  by choosing 
$\tilde{\Lambda}^{ext}_{E}$ such that it produces extension $\sigma_{ABA'B'E_AE_BE'}$ given in Eq. (\ref{eq:big_extension}), when acting on system $E$ of $|\psi^\sigma\>_{ABE}$. 
(iii) the choice of a channel $\tilde{\Lambda}_E^{cl}:={\cal N}_L^{(i)}(x,y)_{E|E'}\circ \tilde{\Lambda}_{E}^{ext}$ where measurements ${\cal N}_L^{(i)}(x,y)={\cal N}^x_{a,L}\otimes {\cal N}^y_{b,L}$ are given in Eq. (\ref{eq:NxaNyb}). This is possible for Eve because, as it was discussed earlier, Alice and Bob broadcast the input choices $(x,y)$. This choice results
in classical systems $EE'$ holding
pairs $(e,i)$ with $e \in {\cal A}\times {\cal B} \cup \{?\}$ and $i\in \{0,...,k\}$, where $e = (a,b)$ i.e. the outputs of Alice and Bob given input $x,y$ has been chosen. We thus observe in Eq. (\ref{eq:intrinsic}), that the minimized conditional information is equal to the intrinsic information of such obtained distribution $p(abei|xy)$.

The last inequality
is due to the fact, that we first trace out register $E'$, so that the channel
involved in definition of the intrinsic information does not depend on $i$ (the information from which local device
Eve obtains the outputs). This narrows the infimum over channels in the definition of intrinsic information, hence the quantity under consideration can only go up. As a result the intrinsic information is a function of distribution $p(abe|xy)$, as it is 
obtained in \cite[Eq.~(5)]{FBL+21}. (see Remark \ref{rem:no_eprime} in this context).
\end{proof}

As the second conclusion from the above Theorem there comes the fact that 
for any family of plots of the upper bound via the average intrinsic information given in Ref.~\cite{FBL+21}, the device independent key is below their convex hull.

As we see above $E_{sq,dev}^{cc}(\rho,{\cal M},p(x,y)$ as well as intrinsic non-locality \cite{KWW20} are based on conditional mutual information where the Eve system is an extension system of underlying strategy. For completeness, we give here the definition of the quantum intrinsic non-locality as introduced in Ref.~\cite{KWW20}. 
\begin{definition}
The quantum intrinsic non-locality of a correlation $p(a,b|x,y)$ is defined as 
\begin{equation}
    N^{Q}(p(a,b|x,y))= \sup_{p(x,y)}\inf_{\rho_{\bar{A}\bar{B}XYE}}I(\bar{A};\bar{B}|XYE)_{\rho},
\end{equation}
where 
\begin{multline}
    \rho_{\bar{A}\bar{B}XYE}= \sum_{x,y,a,b}p(x,y)p(a,b|x,y)\op{a}_{\bar{A}}\otimes\op{b}_{\bar{B}}\\\otimes\op{x}_{X}\otimes \op{y}_{Y}\otimes \rho_E^{a,b,x,y}.
\end{multline}
Here, $p(a,b|x,y)\rho_E^{a,b,x,y}= \operatorname{Tr}_{AB}\left[(\Lambda^x_a\otimes\Lambda^y_b)\rho_{ABE}\right]$ and $\rho_{ABE}$ is the extension of $\rho_{AB}$. 
\end{definition}

The major differences between the two quantities is as follows: the intrinsic non-locality is a function of the device $\left\{p(a,b|x,y)\right\}$ while $E_{sq,dev}^{cc}(\rho,{\cal M}, p(x,y))$ is a function of the compatible $\rho_{ccq}$ states. For most DI-QKD protocols, the testing rounds are only important while choosing the compatible strategies, but have no further role to play in the key generation protocol. This distinction between the testing and key generation rounds can be exploited via $E_{sq,dev}^{cc}(\rho,{\cal M}, p(x,y))$ to upper bounds the key rate for protocols with \textit{specific} inputs. The presence of $p(x,y)$ in the definition of the intrinsic non-locality doesn't allow for this clear distinction of the key generation and testing rounds. Another major difference is that with $E_{sq,dev}^{cc}(\rho,{\cal M}, p(x,y))$ we allow for a flexibility on the channels that Eve can act upon her extension systems. That is, Eve's actions on the extensions can be dependent on the measurements performed by Alice and Bob. These two differences in the structure of the quantities are vital to obtain tighter bounds.

\bibliographystyle{alpha}

\bibliography{di-qkd}

\newcommand{\etalchar}[1]{$^{#1}$}
\begin{thebibliography}{TMcvCS13}

\bibitem[AFDF{\etalchar{+}}18]{ArnonFriedman2018}
Rotem Arnon-Friedman, Fr{\'{e}}d{\'{e}}ric Dupuis, Omar Fawzi, Renato Renner,
  and Thomas Vidick.
\newblock Practical device-independent quantum cryptography via entropy
  accumulation.
\newblock {\em Nature Communications}, 9(1), January 2018.

\bibitem[AFL20]{AL20}
Rotem Arnon-Friedman and Felix Leditzky.
\newblock Upper bounds on device-independent quantum key distribution rates and
  a revised peres conjecture.
\newblock May 2020.
\newblock arXiv:2005.12325.

\bibitem[AGM06]{AGM06}
Antonio Ac{\'{\i}}n, Nicolas Gisin, and Lluis Masanes.
\newblock From bell's theorem to secure quantum key distribution.
\newblock {\em Physical Review Letters}, 97(12), September 2006.

\bibitem[AH09]{AH09}
Remigiusz Augusiak and Pawe\l{} Horodecki.
\newblock Multipartite secret key distillation and bound entanglement.
\newblock {\em Physical Review A}, 80(4):042307, October 2009.
\newblock arXiv:0811.3603.

\bibitem[AMP06]{CHSHprotocol2006}
Antonio Ac{\'{\i}}n, Serge Massar, and Stefano Pironio.
\newblock Efficient quantum key distribution secure against no-signalling
  eavesdroppers.
\newblock {\em New Journal of Physics}, 8(8):126--126, August 2006.

\bibitem[BB84]{BB84}
C.~H. Bennett and G.~Brassard.
\newblock Quantum cryptography: {P}ublic key distribution and coin tossing.
\newblock In {\em Proceedings of the IEEE International Conference on
  Computers, Systems and Signal Processing}, pages 175--179, Bangalore, India,
  December 1984, 1984. IEEE Computer Society Press, New York.

\bibitem[BDSW96]{Bennett1996}
Charles~H. Bennett, David~P. DiVincenzo, John~A. Smolin, and William~K.
  Wootters.
\newblock Mixed-state entanglement and quantum error correction.
\newblock {\em Physical Review A}, 54(5):3824--3851, November 1996.

\bibitem[CDP09]{Chiribella2009}
Giulio Chiribella, Giacomo~Mauro D'Ariano, and Paolo Perinotti.
\newblock Realization schemes for quantum instruments in finite dimensions.
\newblock {\em Journal of Mathematical Physics}, 50(4):042101, April 2009.

\bibitem[CEH{\etalchar{+}}07]{CEHHOR}
Matthias Christandl, Artur Ekert, Micha{\l} Horodecki, Pawe{\l} Horodecki,
  Jonathan Oppenheim, and Renato Renner.
\newblock Unifying classical and quantum key distillation.
\newblock In {\em Theory of Cryptography}, pages 456--478. Springer Berlin
  Heidelberg, 2007.

\bibitem[CFH21]{CFH20}
Matthias Christandl, Roberto Ferrara, and Karol Horodecki.
\newblock Upper bounds on device-independent quantum key distribution.
\newblock {\em Physical Review Letters}, 126(16), April 2021.

\bibitem[Chr06]{C06}
Matthias Christandl.
\newblock The structure of bipartite quantum states-insights from group theory
  and cryptography.
\newblock 2006.
\newblock \textit{PhD Thesis}. arXiv: quant-ph/0604183.

\bibitem[CHSH69]{CHSH}
J.~F. Clauser, M.~A. Horne, A.~Shimony, and R.~A. Holt.
\newblock Proposed experiment to test local hidden-variable theories.
\newblock 23:880--884, 1969.

\bibitem[CW04]{CW04}
Matthias Christandl and Andreas Winter.
\newblock Squashed entanglement: an additive entanglement measure.
\newblock {\em Journal of mathematical physics}, 45(3):829--840, 2004.

\bibitem[DBWH19]{DBWH19}
Siddhartha Das, Stefan B{\"a}uml, Marek Winczewski, and Karol Horodecki.
\newblock Universal limitations on quantum key distribution over a network.
\newblock December 2019.
\newblock arXiv:1912.03646.

\bibitem[Eke91]{E91}
Artur~K. Ekert.
\newblock Quantum cryptography based on bell's theorem.
\newblock {\em Physical Review Letters}, 67:661--663, August 1991.

\bibitem[FBJ{\L}{\etalchar{+}}21]{FBL+21}
M{\'a}t{\'e} Farkas, Maria Balanz{\'o}-Juand{\'o}, Karol {\L}ukanowski, Jan
  Ko{\l}ody{\'n}ski, and Antonio Ac{\'\i}n.
\newblock Bell nonlocality is not sufficient for the security of standard
  device-independent quantum key distribution protocols.
\newblock March 2021.
\newblock arXiv:2103.02639.

\bibitem[HBD{\etalchar{+}}15]{HBD+15}
B.~Hensen, H.~Bernien, A.~E. Dr{\'{e}}au, A.~Reiserer, N.~Kalb, M.~S. Blok,
  J.~Ruitenberg, R.~F.~L. Vermeulen, R.~N. Schouten, C.~Abell{\'{a}}n,
  W.~Amaya, V.~Pruneri, M.~W. Mitchell, M.~Markham, D.~J. Twitchen, D.~Elkouss,
  S.~Wehner, T.~H. Taminiau, and R.~Hanson.
\newblock Loophole-free bell inequality violation using electron spins
  separated by 1.3 kilometres.
\newblock {\em Nature}, 526(7575):682--686, October 2015.
\newblock arXiv:1508.05949.

\bibitem[HHH98]{HHH98}
Micha\l{} Horodecki, Pawe\l{} Horodecki, and Ryszard Horodecki.
\newblock Mixed-state entanglement and distillation: Is there a ``bound''
  entanglement in nature?
\newblock {\em Physical Review Letters}, 80:5239--5242, June 1998.

\bibitem[HHH99]{Horodecki1991}
Micha\l{} Horodecki, Pawe\l{} Horodecki, and Ryszard Horodecki.
\newblock General teleportation channel, singlet fraction, and
  quasidistillation.
\newblock {\em Physical Review A}, 60:1888--1898, September 1999.

\bibitem[HHHO09]{HHHO09}
Karol Horodecki, Michal Horodecki, Pawel Horodecki, and Jonathan Oppenheim.
\newblock General paradigm for distilling classical key from quantum states.
\newblock {\em IEEE Transactions on Information Theory}, 55(4):1898--1929,
  April 2009.
\newblock arXiv:quant-ph/0506189.

\bibitem[Hor21]{Horodecki_2021}
R.~Horodecki.
\newblock Quantum information.
\newblock {\em Acta Physica Polonica A}, 139(3):197–2018, March 2021.

\bibitem[Kau20]{Eneetthesis}
Eneet Kaur.
\newblock {\em Limitations on Protecting Information Against Quantum
  Adversaries}.
\newblock PhD thesis, LSU Doctoral Dissertations, 2020.

\bibitem[KWW20]{KWW20}
Eneet Kaur, Mark~M. Wilde, and Andreas Winter.
\newblock Fundamental limits on key rates in device-independent quantum key
  distribution.
\newblock {\em New Journal of Physics}, 22(2):023039, February 2020.
\newblock arXiv:1810.05627.

\bibitem[Mas06]{Masanes2006}
Llu{\'{\i}}s Masanes.
\newblock Asymptotic violation of {B}ell inequalities and distillability.
\newblock {\em Physical Review Letters}, 97(5), August 2006.

\bibitem[PAB{\etalchar{+}}09]{Pironio2009}
Stefano Pironio, Antonio Ac{\'{\i}}n, Nicolas Brunner, Nicolas Gisin, Serge
  Massar, and Valerio Scarani.
\newblock Device-independent quantum key distribution secure against collective
  attacks.
\newblock {\em New Journal of Physics}, 11(4):045021, April 2009.

\bibitem[Per96]{P96}
Asher Peres.
\newblock Separability criterion for density matrices.
\newblock {\em Physical Review Letters}, 77:1413--1415, August 1996.

\bibitem[PLOB17]{Pirandola2017}
Stefano Pirandola, Riccardo Laurenza, Carlo Ottaviani, and Leonardo Banchi.
\newblock Fundamental limits of repeaterless quantum communications.
\newblock {\em Nature Communications}, 8(1), April 2017.

\bibitem[Rai99]{R99}
E.~M. Rains.
\newblock Bound on distillable entanglement.
\newblock {\em Physical Review A}, 60:179--184, July 1999.

\bibitem[{Rai}01]{R01}
E.~M. {Rains}.
\newblock A semidefinite program for distillable entanglement.
\newblock {\em IEEE Transactions on Information Theory}, 47(7):2921--2933,
  2001.

\bibitem[Ren05]{RennerThesis}
Renato Renner.
\newblock {\em Security of Quantum Key Distribution}.
\newblock PhD thesis, ETH Z\"urich, December 2005.
\newblock arXiv:quant-ph/0512258.

\bibitem[SGP{\etalchar{+}}20]{schwonnek2020robust}
Rene Schwonnek, Koon~Tong Goh, Ignatius~W. Primaatmaja, Ernest Y.~Z. Tan,
  Ramona Wolf, Valerio Scarani, and Charles C.~W. Lim.
\newblock Robust device-independent quantum key distribution, 2020.
\newblock arXiv:2005.02691.

\bibitem[Sim00]{S00}
R.~Simon.
\newblock Peres-{H}orodecki separability criterion for continuous variable
  systems.
\newblock {\em Physical Review Letters}, 84:2726--2729, March 2000.

\bibitem[TGW14]{TGW14}
Masahiro Takeoka, Saikat Guha, and Mark~M. Wilde.
\newblock Fundamental rate-loss tradeoff for optical quantum key distribution.
\newblock {\em Nature Communications}, 5(1), October 2014.

\bibitem[TMcvCS13]{TMC+13}
C.~Teo, J.~Min\'a\ifmmode~\check{r}\else \v{r}\fi{}, D.~Cavalcanti, and
  V.~Scarani.
\newblock Analysis of a proposal for a realistic loophole-free bell test with
  atom-light entanglement.
\newblock {\em Physical Review A}, 88:053848, November 2013.

\bibitem[VPRK97]{Vedral1997}
V.~Vedral, M.~B. Plenio, M.~A. Rippin, and P.~L. Knight.
\newblock Quantifying entanglement.
\newblock {\em Physical Review Letters}, 78(12):2275--2279, March 1997.

\bibitem[WDH19]{WDH19}
Marek Winczewski, Tamoghna Das, and Karol Horodecki.
\newblock Limitations on device independent secure key via squashed
  non-locality.
\newblock March 2019.
\newblock 1903.12154v5.

\bibitem[WTB17]{WTB17}
Mark~M. {Wilde}, Marco {Tomamichel}, and Mario {Berta}.
\newblock Converse bounds for private communication over quantum channels.
\newblock {\em IEEE Transactions on Information Theory}, 63(3):1792--1817,
  2017.
\newblock arXiv:1602.08898.

\end{thebibliography}
\end{document}